\def\>{\rangle}
\def\<{\langle}
\def\ug{\mathbb{U}}
\def\cE{\mathcal{E}}
\def\cP{\mathcal{P}}
\def\cQ{\mathcal{Q}}
\def\cC{\mathcal{C}}
\def\SL{\mathrm{SL}}
\def\LT{\triangle}
\def\UT{\bigtriangledown}
\def\be{\begin{equation}}
\def\ee{\end{equation}}
\def\bea{\begin{eqnarray}}
\def\eea{\end{eqnarray}}
\def\etwirl{\varepsilon_{\mbox{\scriptsize bilateral}}}
\def\echannel{\varepsilon_{\mbox{\scriptsize channel}}}
\def\>{\rangle}
\def\<{\langle}
\def\ug{\mathbb{U}}
\def\cE{\mathcal{E}}
\def\cP{\mathcal{P}}
\def\cQ{\mathcal{Q}}
\def\cC{\mathcal{C}}
\def\be{\begin{equation}}
\def\ee{\end{equation}}
\def\bea{\begin{eqnarray}}
\def\eea{\end{eqnarray}}
\newcommand{\comment}[1]{}
\newcommand{\set}[1]{{\left\{#1\right\}}}    
\newcommand{\GF}{\mbox{\rm GF}}
\def\ket#1{ | #1 \rangle}
\def\tr{ {\rm{Tr }}}
\renewcommand{\tfrac}[2]{\textstyle \frac{#1}{#2}}
\newtheorem{theorem}{Theorem}
\newtheorem{lemma}{Lemma}
\newtheorem{defn}{Definition}
\newtheorem{cor}[theorem]{Corollary}
\newtheorem{prop}[theorem]{Proposition}
\begin{document}
\title{{\bf Near-linear constructions of exact unitary 2-designs}}

\author{Richard Cleve%
\thanks{Institute for Quantum Computing and School of Computer Science, University of Waterloo}
\thanks{Canadian Institute for Advanced Research}
\and Debbie Leung%
\thanks{Institute for Quantum Computing and Department of Combinatorics and Optimization, University of Waterloo}
$^{\dag}$
 \and Li Liu$^{*}$ \and Chunhao Wang$^{*}$}

\date{\empty}

\maketitle

\begin{abstract}
A unitary 2-design can be viewed as a quantum analogue of a 2-universal hash function: it is indistinguishable from a truly random unitary by any procedure that queries it twice.
We show that exact unitary 2-designs on $n$ qubits can be implemented by quantum circuits consisting of $\widetilde{O}(n)$ elementary gates in logarithmic depth.
This is essentially a quadratic improvement in size (and in width times depth) over all previous implementations that are exact or approximate (for sufficiently strong approximations).
\end{abstract}


\section{Introduction}

The uniform distribution on the group consisting of all unitary operations acting on $n$ qubits is captured by the \textit{Haar measure}, which is the unique measure that is
invariant under left and right multiplication by any group element.
Haar-random unitaries, by their symmetries, facilitate many analyses
in quantum information \cite{Hastings09,HHYW08,HLSW04,HLW06,HW08}.
However, Haar-random unitaries have very high computational complexity,
in that most of them cannot be efficiently implemented or reasonably
well approximated by circuits of size polynomial in the number of
qubits.  They require many bits to describe.  They also require a lot
of randomness to sample.

Unitary 2-designs are probability distributions on finite subsets of
the unitary group that have some specific properties in common with
the Haar measure.  Several common definitions for unitary 2-designs
have been proposed and studied, each revolving around a specific
property or application, and appropriate notion of approximation
\cite{DLT02,DCEL09,HL09}.  These 2-designs are computable by
polynomial-size circuits with short specifications and low sampling
complexity.

We focus on \textit{exact} unitary 2-designs.  In the exact case, we
will see that several commonly used definitions can be shown to be
equivalent to each other.  One particularly natural definition is that
they are \textit{two-query indistinguishable} from Haar-random
unitaries.  Imagine a game where, at the flip of a coin, $U$ is
sampled either according to the Haar measure or with respect to the
unitary 2-design.  A two-query distinguishing procedure can make two
queries to $U$ (each being either in the forward direction as $U$ or
in the reverse direction as $U^{\dag}$) as well as other quantum
operations that do not depend on $U$ and then outputs a bit.  A
unitary 2-design has the property that no two-query distinguishing
procedure can distinguish between the Haar-random case and the
2-design case with probability greater than $1/2$.  By this
definition, a unitary 2-design is a quantum analogue of a 2-universal
hash function~\cite{CarterWegman1977} (or, more precisely, 2-universal hash permutation).
We will show in Section~\ref{sec:definitions} that this definition is equivalent to previous definitions, including those based on \textit{bilateral twirling} \cite{DLT02} and \textit{channel twirling} \cite{DCEL09}.

\subsection{Previous work}

The uniform distribution on the Clifford group has been shown to be an
exact unitary 2-design in the sense of bilateral twirling \cite{DLT02} and channel twirling \cite{DCEL09}.  This
implies that the circuit complexity is $O(n^2/\log n)$ where the gates
are one- and two-qubit gates from the Clifford group~\cite{AG2004}.
Moreover, the sampling cost is $O(n^2)$ random bits of entropy.  In
the context of bilateral twirling, \cite{HL09} shows that a
certain process of random circuit generation 
(introduced in~\cite{EWSLC03}) yields
$\etwirl$-approximate unitary 2-designs of size $O(n(n+\log
1/\etwirl))$, where $\etwirl$ measures the distance of the resulting
operation from the ideal one.

Another construction~\cite{DCEL09}
yields circuits of size $O(n \log 1/\echannel)$ for a notion of
approximation that is natural for channel twirling; however, it has 
been pointed out that this notion of approximation could 
(at least conceivably) incur a blow-up by a factor that is exponential 
in $n$ in the bilateral twirl
context (see, e.g., Section~2 of~\cite{HL09} and Section~1.1
of~\cite{BF2013} for discussion about this).
For the more general setting, as far as we know, we might 
need $\echannel \le \etwirl/2^n$ --- so the circuit size becomes $O(n(n+ \log
1/\etwirl))$.

For exact unitary 2-designs as well as approximations
of them related to bilateral twirling, all of the above constructions
incur circuits of size $\widetilde{\Omega}(n^2)$ and require
$\Omega(n^2)$ random bits of entropy.

Reference \cite{Chau2005} proves that there exists a small subgroup of
the Clifford group that gives rise to an exact unitary 2-design that
uses approximately $5n$ random bits of entropy.  However, the circuit
complexity for this construction is unknown, beyond the $O(n^2/\log
n)$ bound that holds for any Clifford operation.  References
\cite{GAE2007} and \cite{RS2009} study the necessary and sufficient
entropy for exact and approximate unitary $2$-designs.  Approximately
$4n$ random bits of entropy are necessary.

\subsection{New results}

We give three constructions of \textit{exact} unitary 2-designs on $n$
qubits that have the following quantum gate costs (number of one- and
two-qubit gates):
\begin{itemize}
\item
$O(n \log n \log\log n)$ gates (all Clifford gates) for infinitely many $n$, assuming the extended Riemann Hypothesis is true.
\item
$O(n \log n \log\log n)$ gates (including non-Clifford gates) for all $n$, unconditionally.
\item
$O(n \log^2 n \log\log n)$ gates (all Clifford gates) for all $n$, unconditionally.
\end{itemize}
The circuits for the first two constructions can be organized so as to
perform their computation in $O(\log n)$ depth; the third in
$O(\log^2 n)$-depth (using the fact that efficient
multiplication/convolution algorithms require only $O(\log
n)$-depth~\cite{schonhage1977}).  These results are near optimal -- 
in Appendix \ref{appendix:optimality}, we show that for any 
unitary $2$-design (exact or approximate under Definition~\ref{def:2queryu} or 
\ref{def:btwirl}), a high probability set
of the unitaries have size $\Omega(n)$ and depth $\Omega(\log n)$.


All three constructions above use $5n$ bits of randomness (more
precisely, they sample from a uniform distribution on a set of size
$2^{5n} -2^{3n}$).  They all consist of unitaries in the Clifford
group (even in the second construction, non-Clifford gates are
used to compute Clifford unitaries efficiently).  The circuits use
$\widetilde O(n)$ ancilla qubits (where each ancilla qubit is
initially in state $\ket{0}$ and is restored to this state at the end
of the computation).  Finally, the cost of the classical process that
samples these unitary 2-designs (outputs a description of the quantum
circuit) is polynomial in $n$.  The cost is dominated by the
complexity of computing square roots in the finite field $\GF(2^n)$.

It should also be noted that our Definition~\ref{def:2queryu} is a new
characterization of unitary 2-designs in terms of 2-query
indistinguishability that may be of independent interest.

\subsection{Significance of the new constructions}

Since our constructions yield exact unitary 2-designs, they are
automatically valid for all notions and definitions of approximate
unitary 2-designs.
Our construction thus achieves the minimum known circuit size, depth,
and sampling complexity simultaneously, among both exact and
approximate unitary 2-designs.

Exact $2$-designs offer other advantages.  Besides the original
operational applications of bilateral and channel twirling,
$2$-designs have appeared in second moment analysis.  For example,
they arise in \cite{HHYW08}, where results are obtained about the
decoupling of two quantum systems and quantum channel capacities.  An
exact 2-design can be used in a ``plug-and-play" manner.  For example,
there exists an encoding operation in any unitary $2$-design that,
when concatenated with an appropriate inner code, achieves the quantum
channel capacity.  Thus, our results {\em automatically} imply the
existance of such encoding circuits of $O(n \log^2 n \log\log n)$
Clifford gates and depth $O(\log^2 n)$.

In some applications such as decoupling, the distance from an exact
$2$-design is amplified by a dimensional factor that can be
exponential in $n$ (for example, Theorem~1 in~\cite{SDTR2013}). Using
our exact construction, such error term vanishes exactly, so does the
issue of the exponential amplification of errors.  Thus our results
yield potentially tighter bounds while maintaining a circuit size of
$\widetilde{O}(n)$.

Prior to our work, \cite{BF2013} constructs a method to generate
random circuits of size $O(n \log^2n)$ and depth $O(\log^3n)$ that
does not give rise to a $2$-design, yet achieves decoupling and
provides small encoding circuits for quantum error correcting codes
\cite{BF2013b}.  The advantage in their approach is that no ancillas
are needed, and the circuit may model some random physical processes.
However, the depth is higher, and a substantial amount of analysis is
required in the aforementioned references to show that the
construction and circuit size indeed achieve the tasks with the
desired accuracy.  Adapting their construction to other applications
may also require additional analysis.


\section{Definition of a unitary 2-design}\label{sec:definitions}

We first discuss several definitions that are equivalent to the concept of
unitary $2$-designs.

Let $\ug_N$ denote the group of $N \times N$ unitary matrices.  We are
interested in distributions over $\ug_N$.  The Haar measure on $\ug_N$
is the unique measure on $\ug_N$ that is invariant under left and
right multiplication by any $U \in \ug_N$.  We denote the Haar measure
by $\mu(U)$.  Let $\cE = \set{p_i,U_i}_{i=1}^k$ denote a finite
ensemble of unitary matrices $U_1, U_2, \cdots, U_k \in \ug_N$ where
$p_i \geq 0$ and $\sum_i p_i = 1$.

Sampling from the Haar measure is a powerful technique in quantum
information theory.  Sometimes, we use a physical procedure that
averages over such random choices of unitary transformation (for
example \cite{DLT02,DCEL09}). Some other times, we have a randomized
argument, for example, in the proof of quantum channel capacity
\cite{Devetak05,HHYW08,Lloyd97,Shor02}, in which the average 
performance over all possible unitary encodings is evaluated.

We are interested in contexts in which such sampling from the Haar
measure can be replaced by sampling from a finite ensemble 
$\cE = \set{p_i,U_i}_{i=1}^k$ of unitary matrices.  This can reduce the
required resources such as shared randomness, communication to
implement the random unitary, as well as the computational complexity
of implementing the randomly chosen unitary.  We now discuss several
of these circumstances.

The first context is concerned with the expected value of polynomials of
the entries of unitary matrices drawn according to some distribution.
This definition is essentially the original definition of 
unitary 2-design in~\cite{DCEL09}, and is useful for proving results in
other contexts.

\begin{defn}
\label{def:u2design}
We say that $\cE$ is \emph{degree-$2$ expectation preserving} if, for every polynomial $\gamma(U)$
  of degree at most $2$ in the matrix elements of $U$ and at most $2$
  in the complex conjugates of those matrix elements, 
\be 
\sum_{i=1}^k p_i \, \gamma(U_i) = \int d\mu(U) \, \gamma(U) \,.
\label{eq:u2design}
\ee
\end{defn}

\noindent In Eq.~(\ref{eq:u2design}) and throughout the paper, an integral 
written without a specific domain is taken over~$\ug_N$.

The second context is concerned with distinguishing whether a random
sample $U$ is drawn from the Haar measure or from the ensemble $\cE$,
when an arbitrary distinguishing circuit is allowed to make a total of
at most two queries of $U$ or $U^\dagger$.  The most general circuit $\mathcal C$
of this form is depicted in Figure~\ref{fig:def-2-design}.
\begin{figure}[h!]
\centering
\includegraphics[width=0.35\textwidth]{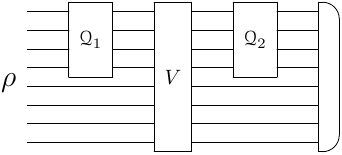}
\caption{\small Illustration of a \textit{$2$-query distinguishing
    circuit} $\mathcal C$.  The first query $\mathscr{Q}_1$ can be $U$
  or $U^\dagger$, likewise for the second query $\mathscr{Q}_2$.  The
  initial state $\rho$ is arbitrary, $V$ is an arbitrary unitary, and
  the final measurement outputs one bit but is otherwise
  arbitrary.}\label{fig:def-2-design}
\end{figure}


The circuit $\mathcal C$ starts with an arbitrary initial state $\rho$ (a
positive semidefinite matrix of trace $1$).  Then, the first query, an
arbitrary operation $V$, the second query, and an arbitrary final
measurement that outputs one bit are applied in order.  We call any
such circuit a $2$-query distinguishing circuit.  
If $U$ is drawn from either ensemble, denote the quantum state right before the
measurement as $\eta_2(\mathcal C,U)$.  If $U$ is drawn from $\cE$, the density 
matrix in $\mathcal C$ before the final measurement is 
$\sum_{i = 1}^k p_i\eta_2(\mathcal C, U)$; similarly, if $U$ is drawn from the  
Haar measure, the density matrix before the final measurement is 
$\int d\mu(U) \eta_2(\mathcal C, U)$. The output bit of the circuit $\mathcal C$
has the same distribution regardless of which ensemble $U$ is sampled from, if 
and only if the above two density matrices are equal. The following definition 
describes ensembles that cannot be distinguished by any 2-query distinguishing circuit $\mathcal C$.

\begin{defn}
\label{def:2queryu}
We say that $\cE$ is \emph{$2$-query indistinguishable}, if, for any distinguishing
circuit $\mathcal C$ making up to two queries of a random unitary or its 
adjoint, 
\be \sum_{i=1}^k \; p_i \, \eta_2(\mathcal C,U_i) = \int d\mu(U) \, \eta_2(
\mathcal C,U) \,.\ee
\end{defn}

The next context is a special case of the scenario depicted in 
Figure~\ref{fig:def-2-design}, where $U$ is queried twice in parallel, as 
illustrated in Figure~\ref{fig:bilateral}.
\begin{figure}[h!]
  \centering
  \includegraphics[width=0.2\textwidth]{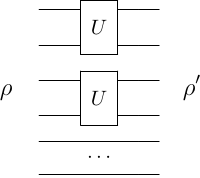}
  \caption{\small Illustration of the \textit{bilateral twirl}:
	querying $U \otimes U$. The initial state $\rho$ is 
  arbitrary.}\label{fig:bilateral}
\end{figure}
Consider bipartite operations in which two disjoint systems undergo
the same unitary transformation drawn according to some distribution.
These operations are sometimes called bilateral twirls \cite{BDSW96,DLT02}.
The \textit{$\cE$ bilateral twirl} is defined as the quantum operation 
\be
{\cal T}_\cE(\rho) = \sum_{i=1}^k \, p_i \, (U_i \otimes U_i) \, 
\rho \; (U_i^\dagger \! \otimes U_i^\dagger) \,.
\label{eq:btwirle}
\ee
The \textit{full bilateral twirl} is defined as the quantum operation 
\be
{\cal T}_\mu(\rho) = \int \, d \mu(U) \; ( U \otimes U ) \, \rho \; (
U^\dagger \! \otimes U^\dagger) \,.
\label{eq:btwirlf}
\ee
The full bilateral twirl is motivated
operationally \cite{BDSW96,DLT02} and it appears in various mathematical
proofs in quantum information \cite{HHYW08,SDTR2013}.  
Definition
\ref{def:btwirl} describes ensembles that derandomize the full bilateral
twirl.
\begin{defn}
\label{def:btwirl}
We say that the ensemble $\cE$ \emph{implements the full bilateral twirl} if
${\cal T}_\mu(\rho) = {\cal T}_\cE(\rho)$ for all $\rho$.
\label{eq:btwirl}
\end{defn}

The fourth context is concerned with the task of converting any
quantum channel into a depolarizing channel of the same average
fidelity.  This conversion has many important applications, for
example, {\em benchmarking} (for estimating average channel fidelity)
of quantum devices \cite{DCEL09} and error estimation (for detecting
eavesdropping) in quantum key distribution \cite{Chau2005}.

Let $\Lambda$ be any quantum channel that maps $N$-dimensional quantum
states to $N$-dimensional quantum states.  An $\cE$-channel-twirl of
$\Lambda$, denoted by $\mathbb{E}_\cE(\Lambda)$, is defined as the
quantum channel that acts as 
\be 
\mathbb{E}_\cE(\Lambda): \rho \mapsto \sum_{i=1}^k \, p_i \,
U_i^\dagger \Lambda(U_i \, \rho \, U_i^\dagger) \,U_i  \,.
\label{def:e-twirl-channel}
\ee
In other words, a random change of basis is applied to the system
before the channel $\Lambda$ acts and it is reverted afterwards.
A full channel twirl of a quantum channel $\Lambda$ is given by
\be 
\mathbb{E}_\mu(\Lambda): 
\rho \mapsto 
\int d\mu(U) \, U^\dagger \Lambda(U \, \rho \, U^\dagger) \, U \,.
\label{def:haar-twirl-channel}
\ee

\begin{defn}
  \label{def:channeltwirl}
  We say that $\cE$ \emph{implements the full channel twirl} if
  $\mathbb{E}_\cE(\Lambda) = \mathbb{E}_\mu(\Lambda)$ for all quantum
  channels $\Lambda$.
\end{defn}

\noindent Lemma~\ref{lem:equivdef} below states that these four relationships 
between ensembles and the Haar measure are equivalent.
Thus, we can think of an ensemble satisfying one of the conditions in 
alternative ways.
\vspace*{12pt}
\begin{lemma}
\label{lem:equivdef}
Let $\cE$ be any ensemble of unitaries in $\ug_N$.  
Then, the following are equivalent: 
\begin{itemize}
  \setlength\itemsep{0em}
  \item[(1)] $\cE$ is degree-$2$ expectation preserving. 
  \item[(2)] $\cE$ is $2$-query indistinguishable. 
  \item[(3)] $\cE$ implements the full bilateral twirl. 
  \item[(4)] $\cE$ implements the full channel twirl.
\end{itemize}
\end{lemma}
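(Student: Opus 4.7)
My plan is to prove the equivalences by cycling through $(1)\Rightarrow(2)\Rightarrow(3)\Rightarrow(4)\Rightarrow(1)$. The unifying observation is that each of the four conditions is equivalent to the single operator identity
$$\sum_{i} p_i \, U_i^{\otimes 2}\otimes (U_i^*)^{\otimes 2}\;=\;\int U^{\otimes 2}\otimes (U^*)^{\otimes 2}\,d\mu(U)$$
on $(\mathbb{C}^N)^{\otimes 4}$; that is, to matching the $(2,2)$-moments of $\cE$ with those of the Haar measure. So the proof is essentially a matter of unpacking each definition and reading off this common identity.

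For $(1)\Rightarrow(2)$, every entry of the output state $\eta_2(\Gamma,U)=Q_2VQ_1\rho Q_1^\dagger V^\dagger Q_2^\dagger$ of a two-query circuit is a polynomial of degree at most $2$ in the entries of $U$ and at most $2$ in the conjugate entries (one factor each from the two queries $Q_1,Q_2\in\{U,U^\dagger\}$ and their adjoints), so (1) applied entry-wise gives (2). For $(2)\Rightarrow(3)$, the bilateral twirl itself is realised as a two-query distinguishing circuit: place the queries of $U$ on two disjoint subregisters and take $V=I$, so that $\eta_2(\Gamma,U)=(U\otimes U)\rho(U^\dagger\otimes U^\dagger)$; condition (2) for this circuit is exactly (3). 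For $(3)\Rightarrow(4)$, I would use the Choi--Jamiolkowski isomorphism, or equivalently a Kraus decomposition: for each Kraus operator $K$ of $\Lambda$, the expression $\sum_i p_i U_i^\dagger K U_i \rho U_i^\dagger K^\dagger U_i$ is a fixed tensor contraction of the bilateral-twirl operator with $K$, $K^\dagger$, and $\rho$, so it matches its Haar analogue. Finally for $(4)\Rightarrow(1)$, reading off matrix entries of the channel twirl for suitably chosen $\Lambda$ and $\rho$ recovers every bidegree-$(2,2)$ monomial in $U$ and $U^*$; partial traces and specialisation reduce to polynomials of lower bi-degree.

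The main obstacle is the uniform handling of the sequential structures that appear in (2) and (4) but not in (3). Sequential two-query circuits interleave the queries with an arbitrary unitary $V$, and channel twirls sandwich a channel $\Lambda$ between $U$ and $U^\dagger$; in both cases one must check that these sequential compositions are nonetheless controlled entirely by the $(2,2)$-moment tensor above. I would address this either (a) by a direct matrix-entry expansion making the polynomial structure manifest, or (b) by a parallelisation argument that adjoins a fresh ancilla and routes via SWAPs so the two sequential queries of $U$ become parallel queries sandwiched by $U$-independent gates. Adjoint queries $U\mapsto U^\dagger$ simply swap which tensor factor of the moment operator carries $U$ versus $U^*$, so no new moments appear.
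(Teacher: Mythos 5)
Your forward chain is sound and close to the paper's: $(1)\Rightarrow(2)$ by the entrywise polynomial argument and $(2)\Rightarrow(3)$ by viewing the bilateral twirl as a parallel two-query circuit are exactly the paper's steps, and your $(3)\Rightarrow(4)$ via the $(2,2)$-moment operator and a Kraus decomposition is a legitimate (mildly different) route, provided you first record that equality of the bilateral twirl as a superoperator forces equality of the moment operator $\sum_i p_i\,U_i^{\otimes 2}\otimes(U_i^*)^{\otimes 2}$ --- which is the paper's $(3)\Rightarrow(1)$ evaluation on matrix units.

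The genuine gap is in your $(4)\Rightarrow(1)$ step, which is where essentially all of the paper's work lies. Condition (4) quantifies only over quantum channels $\Lambda$, i.e.\ completely positive trace-preserving maps. The maps you would need to ``read off'' an arbitrary bidegree-$(2,2)$ monomial, namely $\sigma\mapsto A\sigma B^\dagger$ with $A=\ketbra{a_1}{a_2}$, $B=\ketbra{b_1}{b_2}$ and $A\neq B$, are not channels; worse, they are not even in the complex linear span of channels, since every complex combination of CPTP maps satisfies $\trace\,\Lambda(\sigma)=\lambda\,\trace\,\sigma$ for a fixed scalar $\lambda$ (the Choi matrices all have partial trace $I$), whereas $\trace(A\sigma B^\dagger)=\trace(B^\dagger A\sigma)$ is not proportional to $\trace\,\sigma$ when $B^\dagger A\not\propto I$. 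So linearity in $\Lambda$ alone cannot deliver ``suitably chosen $\Lambda$,'' and the sentence covering this direction does not go through as stated. The paper bridges exactly this difficulty in its $(4)\Rightarrow(3)$ proof: it perturbs around the completely randomizing channel, $\Lambda=(1-\lambda)\mathcal{R}+\lambda\tilde\Lambda$, to extend the twirl identity to all trace- and hermiticity-preserving maps $\tilde\Lambda$; it uses the transpose trick and a partial transpose to convert equality of Choi matrices of the twirled channels into equality of bilateral twirls of $(\Lambda\otimes\mathcal{I})(\chi)$ for $\cE^\dagger$; it then chooses specific $\tilde\Lambda$'s built from a Hermitian orthonormal basis $\{H_l\}$ with $\chi=\sum_l H_l\otimes H_l$ to hit a full operator basis $H_l\otimes H_j$; and finally it invokes Corollary~\ref{cor:dagger}(a) to pass from $\cE^\dagger$ back to $\cE$. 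Some argument of this kind (or another way of showing that agreement on the channel set determines the full moment operator) is needed; note also that the ``main obstacle'' you identify, and your proposed fixes (entrywise expansion, parallelising sequential queries with ancillas), address the easy forward determinations by the moment tensor rather than this converse direction.
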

\vspace*{12pt}


\noindent The following corollary of Lemma \ref{lem:equivdef} is not obvious
from Definitions \ref{def:btwirl} and \ref{def:channeltwirl} alone.

\begin{cor}
\label{cor:dagger}
For $\cE = \set{p_i,U_i}_{i=1}^k$, let $\cE^\dagger {:}{=}
\set{p_i,U_i^\dagger}_{i=1}^k$. 
\begin{itemize}
  \setlength\itemsep{0em}
  \item[(a)] $\cE$ implements the full bilateral twirl if and only if $\cE^\dagger$ does. 
  \item[(b)] $\cE$ implements the full channel twirl if and only if $\cE^\dagger$ does.
\end{itemize}
\end{cor}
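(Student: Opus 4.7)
My plan is to reduce both parts to a single observation via Lemma~\ref{lem:equivdef}: since conditions (1)--(4) are equivalent, it suffices to prove that $\cE$ is degree-$2$ expectation preserving if and only if $\cE^\dagger$ is. Both (a) and (b) then follow at once by invoking the equivalences (1)$\Leftrightarrow$(3) and (1)$\Leftrightarrow$(4).

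To prove the symmetric statement about degree-$2$ expectation preservation, I would use two simple facts. First, the map $\gamma \mapsto \tilde\gamma$ defined by $\tilde\gamma(U) {:=} \gamma(U^\dagger)$ is an involution on the class of polynomials of degree at most $2$ in the matrix elements and at most $2$ in their complex conjugates. This is because $(U^\dagger)_{ij} = \overline{U_{ji}}$, so substituting $U^\dagger$ for $U$ merely swaps the roles of the entries and their complex conjugates (and permutes indices), preserving the bidegree $(2,2)$. Second, the Haar measure is invariant under the inversion map $U \mapsto U^{-1} = U^\dagger$, which gives the identity $\int d\mu(U)\, \gamma(U^\dagger) = \int d\mu(U)\, \gamma(U)$ for every integrable $\gamma$.

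Combining these, for any degree-$(2,2)$ polynomial $\gamma$,
\[
\sum_{i=1}^k p_i\, \gamma(U_i^\dagger) \;=\; \sum_{i=1}^k p_i\, \tilde\gamma(U_i),
\qquad
\int d\mu(U)\, \gamma(U) \;=\; \int d\mu(U)\, \tilde\gamma(U).
\]
Therefore the statement ``$\sum_i p_i \, \gamma(U_i^\dagger) = \int d\mu(U)\, \gamma(U)$ for every degree-$(2,2)$ polynomial $\gamma$'' is equivalent to ``$\sum_i p_i \, \tilde\gamma(U_i) = \int d\mu(U)\, \tilde\gamma(U)$ for every degree-$(2,2)$ polynomial $\tilde\gamma$'', which is precisely the statement that $\cE$ is degree-$2$ expectation preserving. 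Hence $\cE$ satisfies Definition~\ref{def:u2design} iff $\cE^\dagger$ does.

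There is no real obstacle here beyond making the index bookkeeping in the first observation explicit; the content of the corollary lies entirely in Lemma~\ref{lem:equivdef}, which supplies the nontrivial bridge from the symmetric condition (1) to the asymmetric-looking conditions (3) and (4). One could alternatively try to prove (a) and (b) directly---for instance, by manipulating the twirl definitions---but such an approach is less clean because Definitions~\ref{def:btwirl} and~\ref{def:channeltwirl} do not exhibit the $U \leftrightarrow U^\dagger$ symmetry manifestly, which is exactly what the corollary flags as nonobvious.
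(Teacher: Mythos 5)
Your proof is correct, and it follows the same high-level strategy as the paper---pivot through one of the four equivalent conditions that is manifestly symmetric under $U \leftrightarrow U^\dagger$ and then invoke Lemma~\ref{lem:equivdef}---but with a different pivot. The paper uses condition (2): it observes in one line that Definition~\ref{def:2queryu} is dagger-symmetric (a $2$-query circuit may query $U$ or $U^\dagger$, so replacing $U$ by $U^\dagger$ just relabels the class of circuits), proves part (a) from the equivalence (2)$\Leftrightarrow$(3), and later deduces part (b) from (3)$\Leftrightarrow$(4) together with (a); notably, (a) is established and then used \emph{inside} the paper's proof of (4)$\Rightarrow$(3), so the paper is careful to derive it from only the equivalences available at that point. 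You instead pivot through condition (1), making the symmetry explicit via the involution $\gamma \mapsto \tilde\gamma$ on bidegree-$(2,2)$ polynomials (using $(U^\dagger)_{ij} = \overline{U_{ji}}$) together with the inversion invariance of the Haar measure on the compact group $\ug_N$. Both arguments ultimately rest on the same ingredients (the paper's one-liner also implicitly needs Haar inversion invariance to equate the two Haar averages), so what your route buys is a more explicit and self-contained justification of the symmetry, at the cost of requiring the full Lemma~\ref{lem:equivdef} rather than only the (2)$\Leftrightarrow$(3) and (3)$\Leftrightarrow$(4) equivalences; since the corollary is stated after the lemma, this is perfectly legitimate.
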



We note that additional definitions have been discussed in 
\cite{GAE2007,RS2009,HL09,Low10}.  
Several parts of Lemma \ref{lem:equivdef} have been proved in
literature \cite{DCEL09,HL09,Low10}.  In particular, \cite{Low10} 
relates definitions (1), (3), and (4) with bounds on the 
approximations.  
We provide a complete (alternative) proof of Lemma~\ref{lem:equivdef}
and Corollary~\ref{cor:dagger} 
in Appendix~\ref{appendix:definitions}.

Due to Lemma~\ref{lem:equivdef}, when we do not need to specify
the context, we just call an ensemble satisfying any one of the four 
conditions a ``unitary $2$-design.''  

\section{Pauli mixing implies a unitary 2-design}\label{sec:pauli-mixing}

We describe a simple sufficient condition for $\cE$ to be a
unitary $2$-design.

We begin by reviewing some basic definitions and terminology associated with the Pauli group.
Let 
$X = \bigl(\begin{smallmatrix}
	  0 & 1\\ 1 & 0
\end{smallmatrix}\bigr)$, 
$Y = \bigl(\begin{smallmatrix}
	  0 & -i\\ i & 0
\end{smallmatrix}\bigr)$, and
$Z = \bigl(\begin{smallmatrix}
	  1 & 0\\ 0 & -1
\end{smallmatrix}\bigr)$ 
denote the $2{\times}2$ Pauli matrices.
For any $a \in \{0,1\}^n$, define $X^a = X^{a_1} \otimes \cdots \otimes X^{a_n}$ and $Z^a = Z^{a_1} \otimes \cdots \otimes Z^{a_n}$.

\begin{defn}
The \textit{Pauli group} $\cP_n$ consists of all operators of the form $i^k X^a Z^b$, where $k \in \{0,1,2,3\}$ and $a, b \in \{0,1\}^n$.
Let $\cQ_n = \cP_n/\{\pm 1,\pm i\}$, the quotient group that results from disregarding global phases in $\cP_n$ (each element of $\cQ_n$ can be represented as $P_{a,b} = X^a Z^b$).
We call $P_{0,0} = I$ the trivial Pauli.
\end{defn}
Let 
$H = \frac{1}{\sqrt{2}}\bigl(\begin{smallmatrix}
	  1 & \ 1\\ 1 & -1
\end{smallmatrix}\bigr)$
(the $2{\times}2$ Hadamard matrix), $S = \bigl(\begin{smallmatrix}
	  1 & 0\\ 0 & i
\end{smallmatrix}\bigr)$
(the phase gate), and
\begin{align}
\mbox{CNOT} = \begin{pmatrix}
1 & \ 0 & \ 0 & \ 0 \\ 
0 & \ 1 & \ 0 & \ 0 \\ 
0 & \ 0 & \ 0 & \ 1 \\ 
0 & \ 0 & \ 1 & \ 0 
\end{pmatrix}.
\end{align}
\begin{defn} \label{def:clifford}
The \textit{Clifford group} $\cC_n$ is the set of all unitary matrices that permute the elements of $\cP_n$ (and thus $\cQ_n$) under conjugation.  
\end{defn}

The Clifford group $\cC_n$ contains the $H$, CNOT, and $S$ gates, and
they form a generating set \cite{Gottesman98}.  Conjugating the
elements in $\cP_n$ by some $U\in \cC_n$ gives a permutation on
$\cP_n$; this also induces a permutation $\pi_U$ on~$\cQ_n$.

\begin{defn}
Consider an ensemble $\cE = \set{p_i,U_i}_{i=1}^k$ of unitary
matrices $U_1, U_2, \cdots, U_k$ in the Clifford group $\cC_n$. 
We say that $\cE$ is {\em Pauli mixing}, if for all $P \in \cQ_n$ such that 
$P \neq I$, the distribution $\set{p_i, \pi_{U_i}(P)}$ is uniform 
over $\cQ_n \backslash \set{I}$.  
\end{defn}

For any ensemble $\cE = \set{p_i,U_i}_{i=1}^k$, let $\cE_\cQ =
\set{2^{{-}2n}p_i,U_i R_j}_{i=1,j=1}^{k,2^{2n}}$ where $R_j$ ranges over
all elements in $\cQ_n$.  Intuitively, $\cE_\cQ$ is the ensemble where
a random element drawn from $\cE$ is preceded by a random Pauli
operation drawn from $\cQ_n$.  

Pauli mixing by $\cE$ is a sufficient condition for the ensemble $\cE_\cQ$ of 
Clifford unitaries to be a unitary $2$-design. More specifically, we have the 
following lemma.
\begin{lemma}
\label{lemma:pauli-mixing-to-2design}
  Let $\cE$ be an ensemble of Clifford unitaries and
  $\cE_\cQ$ be defined as above. If $\cE$ is Pauli mixing,
  then $\cE_\cQ$ implements the full bilateral twirl.  
\end{lemma}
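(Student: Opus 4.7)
My plan is to verify equality of quantum channels $\mathcal{T}_{\cE_\cQ} = \mathcal{T}_\mu$ by checking that they agree on the Pauli tensor basis $\{P \otimes Q : P,Q \in \cQ_n\}$, which spans the space of operators on $\mathbb{C}^N \otimes \mathbb{C}^N$. By linearity this will extend to arbitrary $\rho$.

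First I would compute the inner Pauli twirl. Writing the action of $\cE_\cQ$ as "apply a uniformly random $R_j \in \cQ_n$, then apply $U_i$ drawn from $\cE$," the Pauli-twirl step gives
\[
\frac{1}{2^{2n}} \sum_{j} (R_j \otimes R_j)(P \otimes Q)(R_j^\dagger \otimes R_j^\dagger) = \frac{1}{2^{2n}}\Bigl(\sum_j \chi_{PQ}(R_j)\Bigr)\, P \otimes Q,
\]
where $\chi_{P}(R)= \pm 1$ records whether $R$ commutes or anticommutes with $P$, and I use $\chi_P \chi_Q = \chi_{PQ}$ together with the fact that $\chi_{PQ}$ sums to zero over $\cQ_n$ unless $PQ = I$. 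Thus the Pauli twirl maps $P \otimes Q \mapsto \delta_{P,Q}\, P \otimes P$.

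Next I would apply the outer Clifford twirl. For $U_i \in \cC_n$, $U_i P U_i^\dagger = \pm\, \pi_{U_i}(P)$, and the sign squares away in $U_i \otimes U_i\, (P \otimes P)\, U_i^\dagger \otimes U_i^\dagger$. Hence
\[
\mathcal{T}_{\cE_\cQ}(P \otimes Q) = \delta_{P,Q} \sum_i p_i\, \pi_{U_i}(P) \otimes \pi_{U_i}(P).
\]
For $P = I$ this is $I \otimes I$. For $P \neq I$, the Pauli-mixing hypothesis makes $\{p_i, \pi_{U_i}(P)\}$ the uniform distribution on $\cQ_n \setminus \{I\}$, so the sum becomes $\frac{1}{2^{2n}-1}\sum_{R \neq I} R \otimes R$. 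Using the standard identity $\mathrm{SWAP} = \frac{1}{N}\sum_{R \in \cQ_n} R \otimes R$ (with $N=2^n$), I can rewrite this as $\frac{N\cdot \mathrm{SWAP} - I\otimes I}{N^2-1}$.

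Finally I would match this against $\mathcal{T}_\mu(P\otimes Q)$. By Schur--Weyl duality, the commutant of $\{U \otimes U : U \in \ug_N\}$ is spanned by $I \otimes I$ and $\mathrm{SWAP}$, so $\mathcal{T}_\mu(P \otimes Q) = \alpha\, I\otimes I + \beta\, \mathrm{SWAP}$. Solving the two linear conditions obtained by taking $\mathrm{Tr}(\,\cdot\,)$ and $\mathrm{Tr}(\mathrm{SWAP}\,\cdot\,)$ (using $\mathrm{Tr}(\mathrm{SWAP}\cdot A\otimes B) = \mathrm{Tr}(AB)$ and $\mathrm{Tr}(P) = 0$ for nontrivial $P$) yields $\mathcal{T}_\mu(I\otimes I) = I \otimes I$, $\mathcal{T}_\mu(P \otimes Q) = 0$ whenever $(P,Q) \neq (I,I)$ and $P \neq Q$, and $\mathcal{T}_\mu(P\otimes P) = (N\cdot\mathrm{SWAP} - I\otimes I)/(N^2-1)$ for $P \neq I$. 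These exactly match the $\mathcal{T}_{\cE_\cQ}$ values computed above, completing the proof.

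The only delicate step is the character-sum cancellation in the Pauli twirl (verifying that $\chi_{PQ}$ is a nontrivial character of $\cQ_n$ whenever $PQ \neq I$, which forces the sum to vanish), together with the sign bookkeeping $U_i P U_i^\dagger = \pm\pi_{U_i}(P)$; everything else is a direct matching of coefficients using Schur--Weyl.
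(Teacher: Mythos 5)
Your proposal is correct, and it follows the paper's argument closely on the ensemble side: the inner Pauli twirl annihilates every $P\otimes Q$ with $P\neq Q$ (your character-sum cancellation is the same mechanism as the paper's pairing with an anticommuting Pauli), and Pauli mixing then sends each $P\otimes P$ with $P\neq I$ to $\frac{1}{N^2-1}\sum_{R\neq I}R\otimes R$. Where you genuinely diverge is on the Haar side: you invoke Schur--Weyl duality to assert that ${\cal T}_\mu(P\otimes Q)$ lies in the span of $I\otimes I$ and the swap operator, and then pin down the coefficients from the two preserved traces $\mathrm{Tr}(\cdot)$ and $\mathrm{Tr}(\chi\,\cdot)$. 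The paper deliberately avoids this: it never computes ${\cal T}_\mu$ on the Pauli basis at all, but instead uses the Haar-invariance identity ${\cal T}_\mu = {\cal T}_\mu\circ{\cal T}_{\cE_\cQ}$ to conclude that ${\cal T}_\mu(M)={\cal T}_{\cE_\cQ}(M)=\boldsymbol{0}$ for any $M$ trace-orthogonal to $I_{2n}$ and $\chi_{2n}$, checking only the trivial fixed points $I_{2n}$ and $\chi_{2n}$ separately. The trade-off: your route relies on the representation-theoretic input (commutant of $\{U\otimes U\}$) that the paper explicitly set out to avoid, but in exchange it derives the explicit action of the full bilateral twirl as a by-product; the paper's route is more elementary and self-contained, which is its stated reason for including the appendix proof. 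One bookkeeping point to make airtight in your write-up: the identity $\chi = \frac1N\sum_R R\otimes R$ and your claim that ``the sign squares away'' in $U_iPU_i^\dagger=\pm\pi_{U_i}(P)$ both require choosing \emph{Hermitian} Pauli representatives (i.e.\ $Y$ rather than $XZ$); with the representatives $X^aZ^b$ the conjugation phase can be $\pm i$, whose square is $-1$. This is a convention choice (implicit in the paper's own swap expansion) rather than a gap, but it should be stated.
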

The original proof of Lemma \ref{lemma:pauli-mixing-to-2design}
can be found in \cite{DLT02}.
A short proof based on representation theory can be found in
\cite{GAE2007}.
In Appendix \ref{appendix:pmi2d} we provide an elementary proof that
may be of independent interest. This proof uses some ideas from
\cite{DLT02} but has fewer assumptions. In particular, the new proof
does not rely on knowing how to evaluate (in closed form) the full
bilateral twirl of an arbitrary input state, nor on knowing the
invariants of the full bilateral twirl.  It is known how to evaluate
the full bilateral twirl using representation theory or the double
commutant theorem.  Our new proof derives this result (how the 
full bilateral twirl acts) on the side.

Note that, in light of Lemma~\ref{lem:equivdef}, an alternative way of proving that, whenever $\cE$ is Pauli mixing, $\cE_\cQ$ is a unitary $2$-design is to 
use Definition~\ref{def:channeltwirl}.
This can be shown in the following two steps.
First, conjugating any channel by a uniformly random Pauli operation drawn from $\cQ_n$ yields a \textit{mixed-Pauli channel} (a channel that is a probability distribution on the Pauli operators).
This is proved in~\cite{DCEL09}.
Second, it is clear that, if $\cE$ is Pauli mixing, then conjugating any mixed Pauli channel by a random element of $\cE$ results in a depolarizing channel with the same average fidelity as the mixed Pauli channel.
This corresponds exactly to an implementation of the full channel twirl.

\section{Pauli mixing using the structure of {\rm $\SL_2(\GF(2^n))$}}

For the purposes of analyzing the Clifford group and its action on the
Pauli group elements $X^aZ^b = (X^{a_1} \otimes \cdots \otimes
X^{a_n})(Z^{b_1} \otimes \cdots \otimes Z^{b_n})$, it is fruitful to
associate $a$ and $b$ with elements of the Galois field of size $2^n$.
However, for this association to work well technically, we work with
two different representations of field elements.  If $a$ is
represented in some \textit{primal} basis then $b$ is represented in
the \textit{dual} of that basis.  This section explains this basic
framework.

\subsection{Review of some properties of Galois fields {\rm $\GF(2^n)$}}
\label{sec:field-basics}

Let $\GF(2^n)$ denote the Galois field of size $2^n$ (more information about these fields can be found in~\cite{mLID94a}).
The elements of this field form a vector space over $\GF(2)$ so the notion of a basis is well-defined: 
$\omega_1, \dots, \omega_n \in \GF(2^n)$ are a \textit{basis} if they are linearly independent and span the field; a basis enables us to associate the elements of $\GF(2^n)$ with $n$-bit strings.
A \textit{polynomial basis} of $\GF(2^n)$ is a basis that is of the
form $1, \alpha, \alpha^2, \dots, \alpha^{n-1}$, for some 
$\alpha \in \GF(2^n)$.  The standard constructions of $\GF(2^n)$ in
terms of irreducible polynomials result in a polynomial basis.
However, there are bases that are not necessarily of this form, and these arise in our constructions.  
For instance, a \textit{normal basis} of $\GF(2^n)$ has the form
$\alpha^{2^0}, \alpha^{2^1}, \dots, \alpha^{2^{n{-}1}}$ for some
$\alpha \in \GF(2^n)$.    

The dual of a basis is defined in terms of the \textit{trace} function $T : \GF(2^n) \rightarrow \GF(2)$, which is defined as 
$T(a) = a^{2^{0}} + a^{2^{1}} + \cdots + a^{2^{n-1}}$.
The trace has the property that $T(a+b) = T(a) + T(b)$, for all $a, b \in \GF(2^n)$.
In terms of $T$, we can define the \textit{trace inner product} of $a, b \in \GF(2^n)$ as $T(ab)$.
Now, for an arbitrary basis $\omega_1, \dots, \omega_n \in \GF(2^n)$, that we refer to as the \textit{primal basis}, we can define its \textit{dual basis} as the unique 
$\hat{\omega}_1, \dots, \hat{\omega}_n \in \GF(2^n)$ such that 
\begin{align}\label{eq:def-self-dual}
T(\omega_i \hat{\omega}_j) &= \left\{
\begin{array}{cl}
1 & \mbox{if $i=j$} \\
0 & \mbox{if $i\neq j$.}
\end{array}
\right.
\end{align}
We can associate the elements of $\GF(2^n)$ with $n$-bit binary strings  by taking coordinates with respect to a basis.
To facilitate discussion, we use the following notation.
With respect to any primal basis $\omega_1, \dots, \omega_n$ and its dual $\hat{\omega}_1, \dots, \hat{\omega}_n$, for $a \in\GF(2^n)$: 
\begin{itemize}
\item $\lceil a \rceil \in \{0,1\}^n$ denotes the coordinates of $a$ in the primal basis.
Thus, 
$a = \lceil a \rceil_1\, \omega_1 + \cdots + \lceil a \rceil_n\, \omega_n$, which is achieved by setting $\lceil a \rceil_j = T(a\hat{\omega}_j)$ for all $j\in\{1,\cdots, n\}$.
\item $\lfloor a \rfloor \in \{0,1\}^n$ denotes the coordinates of $a$ in the dual basis.
Thus, 
$a = \lfloor a \rfloor_1\, \hat{\omega}_1 + \cdots + \lfloor a \rfloor_n\, \hat{\omega}_n$, which is achieved by setting $\lfloor a \rfloor_j = T(a\omega_j)$ for all $j\in \{1,\cdots, n\}$.
\end{itemize}
In some places, where the meaning is clear from the context, it is convenient to write $a$ in place of $\lceil a \rceil$.
Also, it is sometimes convenient to think of $n$-bit binary strings as $\{0,1\}$-valued column vectors of length~$n$.
Thus, $\lceil a \rceil$ and $\lfloor a \rfloor$ are sometimes interpreted as binary column vectors of length~$n$.
Binary matrices acting on these vectors (in mod 2 arithmetic) are written with square brackets.  

It is straightforward to show that the conversion from primal to dual basis  coordinates corresponds to multiplication by the $n \times n$ binary matrix 
\begin{align}\label{basis-convert}
W =
\begin{bmatrix}
T(\omega_1\omega_1) & \cdots & T(\omega_1\omega_n) \\
\vdots & \ddots & \vdots \\
T(\omega_n\omega_1) & \cdots & T(\omega_n\omega_n) 
\end{bmatrix}.
\end{align}
That is, $\lfloor a \rfloor = W \lceil a \rceil$ (with matrix-vector multiplication 
in mod 2 arithmetic).  
Also, $T(ab)$ is the dot-product of the coordinates of $a$ in the primal basis and the coordinates of $b$ in the dual basis: 
\begin{align}
T(ab) = \lceil a \rceil \cdot \lfloor b \rfloor = 
\lceil a\rceil_1 \lfloor b \rfloor_1 + \cdots + \lceil a\rceil_n \lfloor b \rfloor_n \bmod 2.
\end{align}
The dual of the dual basis is the primal basis. A basis is \textit{self-dual} if $\omega_i = \hat{\omega}_i$ for all $i$.

Relative to any basis, multiplication by any particular $r \in \GF(2^n)$ is a linear operator in the following sense.
There exists a binary $n \times n$ matrix $M_r$ such that, for all $s \in \GF(2^n)$, 
$\lceil rs \rceil = M_r \lceil s \rceil$ (with mod 2 arithmetic for the matrix-vector multiplication).
In fact, this matrix $M_r$ is 
\begin{align}\label{matrix-M_r}
M_r =
\begin{bmatrix}
T(r\,\hat{\omega}_1 \omega_1) & \cdots & T(r\,\hat{\omega}_1 \omega_n) \\
\vdots & \ddots & \vdots \\
T(r\,\hat{\omega}_n \omega_1) & \cdots & T(r\,\hat{\omega}_n \omega_n)
\end{bmatrix},
\end{align}
and its transpose $(M_r)^{\mathsf{T}}$ corresponds to multiplication by $r$ in the dual basis (that is, $\lfloor rs \rfloor = (M_r)^{\mathsf{T}}\lfloor s \rfloor$).
It should be noted that algorithms for multiplication in $\GF(2^n)$ are basis dependent; the obvious cost of converting between two bases is $O(n^2)$.

\subsection{Pauli mixing from a subgroup isomorphic to {\rm $\SL_2(\GF(2^n))$}}

Due to Lemma \ref{lemma:pauli-mixing-to-2design}, it suffices to
compute an ensemble of Clifford unitaries that is Pauli mixing.
Relative to a (primal) basis, we associate each pair $a, b \in \GF(2^n)$ with the Pauli group element 
$X^{\lceil a \rceil} Z^{\lfloor b \rfloor} 
= (X^{\lceil a \rceil_1} \otimes \cdots \otimes X^{\lceil a \rceil_n})(Z^{\lfloor b \rfloor_1} \otimes \cdots \otimes Z^{\lfloor b\rfloor_n})$.
Chau~\cite{Chau2005} showed that there is a subgroup $\mathscr{C}$ of
the Clifford group of size $2^{O(n)}$ such that sampling uniformly
over $\mathscr{C}$ performs Pauli mixing.  We now give an
overview of the approach in~\cite{Chau2005} (translated into our language).  
The subgroup
$\mathscr{C}$ is isomorphic to the special linear group of $2 \times
2$ matrices over $\GF(2^n)$:
\begin{align*}
  \SL_2(\GF(2^n)) = \biggl\{
	\begin{pmatrix}
	  \alpha & \beta \\
	  \gamma & \delta
	\end{pmatrix} : 
	\mbox{$\alpha, \beta, \gamma, \delta \in \GF(2^n)$ such that 
	$\alpha\delta + \beta\gamma = 1$}
  \biggr\}.
\end{align*}
Note that $\SL_2(\GF(2^n))$ has $2^{3n}-2^n$ elements. 
The subgroup $\mathscr{C}$ induces a group action of $\SL_2(\GF(2^n))$
on the Paulis by conjugation by certain unitaries.
\begin{defn}\label{def:induce}
With respect to a primal basis for $\GF(2^n)$, we say that a Clifford 
unitary
$U$ \emph{induces} $M\in\SL_2(\GF(2^n))$ if, for all $a, b \in
\GF(2^n)$ and
\begin{align}\label{eq:multiply-by-U}
\begin{pmatrix}
a' \\
b'
\end{pmatrix}
= M
\begin{pmatrix}
a \\
b
\end{pmatrix},
\end{align}
\begin{align}\label{eq:conjugate-by-U}
U X^{\lceil a \rceil} Z^{\lfloor b \rfloor} U^{\dagger} 
\equiv X^{\lceil a' \rceil} Z^{\lfloor b' \rfloor},
\end{align}
where $\equiv$ means equal up to a global phase in $\{1,i,-1,-i\}$ that 
is a function of $M\!$, $a$, and $b$.  
\end{defn}

\noindent To rephrase the above definition, suppose 
$M = \bigl(\begin{smallmatrix} \alpha & \beta \\ \gamma & \delta \end{smallmatrix}\bigr) \in G_0$.
Then, for all $a,b$, 
conjugating $(X^{\lceil a \rceil_1}
\otimes \cdots \otimes X^{\lceil a \rceil_n})(Z^{\lfloor b \rfloor_1}
\otimes \cdots \otimes Z^{\lfloor b\rfloor_n})$
by the Clifford unitary $U$ yields 
$(X^{\lceil \alpha a + \beta b \rceil_1}
\otimes \cdots \otimes X^{\lceil \alpha a + \beta b\rceil_n})
(Z^{\lfloor \gamma a + \delta b \rfloor_1}\otimes \cdots 
\otimes Z^{\lfloor \gamma a + \delta b \rfloor_n})$ up to a phase.  

We adopt the following notational convention throughout the paper.  We write
matrices in $\SL_2(\GF(2^n))$ and vectors of length $2$ over
$\GF(2^n)$ using parenthesis (see above) to distinguish the binary
matrices and vectors described in the previous subsection which use square brackets.

It should be noted that, in~\cite{Chau2005},
Eq.~\eqref{eq:conjugate-by-U} is expressed using different notation
for the Paulis, that we call \textit{subscripted} Paulis, defined as
satisfying $X_a\ket{c} = \ket{a+c}$ and $Z_b\ket{c} =
(-1)^{T(bc)}\ket{c}$.
It is easy to express these in terms of our \textit{superscripted}
Paulis, $X^{\lceil a \rceil}$ and $Z^{\lfloor b \rfloor}$, as 
$X_a = X^{\lceil a \rceil}$ and $Z_b = Z^{\lfloor b \rfloor}$ (since
$T(bc) = \lfloor b \rfloor {\cdot} \lceil c\rceil$).
The occurence of the dual basis in $Z^{\lfloor b \rfloor}$ (which is equivalent
to using $Z_b$) in Eq.~\eqref{eq:conjugate-by-U} is not merely a
matter of convention: for general $M \in \SL_2(\GF(2^n))$ there
\textit{does not exist} a unitary $U$ that induces $M$ in the sense
that $U X^{\lceil a \rceil} Z^{\lceil b \rceil} U^{\dagger} 
\equiv X^{\lceil a' \rceil} Z^{\lceil b' \rceil}$.  In terms of
Definition~\ref{def:induce}, the following holds.

\begin{lemma}[\cite{Chau2005}]\label{lemma:induce}
With respect to any primal basis of $\GF(2^n)$ and every $M\in\SL_2(\GF(2^n))$, there exists an $n$-qubit Clifford unitary $U$ that induces $M$.
\end{lemma}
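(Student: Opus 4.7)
My plan is to reduce the lemma to the classical fact that every symplectic transformation of $\GF(2)^{2n}$ — with respect to the form induced by Pauli commutation — is implemented by conjugation of Paulis by some Clifford unitary. Once $M \in \SL_2(\GF(2^n))$ is viewed as a $\GF(2)$-linear map on $\GF(2^n)^2 \cong \GF(2)^{2n}$, the task reduces to verifying that this linear map preserves the relevant symplectic form; existence of $U_M$ then follows from the standard symplectic-to-Clifford lifting theorem, and the conjugation phase in Eq.~\eqref{eq:conjugate-by-U} lies automatically in $\{1, i, -1, -i\}$ because the Clifford group normalizes $\cP_n$, which is exactly what $\equiv$ is designed to absorb.

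Concretely, I would identify $(a, b) \in \GF(2^n)^2$ with the $2n$-bit vector obtained by writing $a$ and $b$ in the primal basis, so that the Pauli $X^a Z^{\hat{b}}$ corresponds to the label $(a, \hat{b})$ under the usual Pauli-vector bijection. The symplectic form detecting commutation of two such Paulis reads $\omega((a, b), (a', b')) = a \cdot \hat{b'} + a' \cdot \hat{b} \bmod 2 = T(ab' + a'b)$. Expanding $\omega\bigl(M(a_1, b_1)^\top, M(a_2, b_2)^\top\bigr)$ for $M = \bigl(\begin{smallmatrix} \alpha & \beta \\ \gamma & \delta \end{smallmatrix}\bigr)$ gives
\[
T\bigl(\alpha\gamma(a_1 a_2 + a_2 a_1) + \beta\delta(b_1 b_2 + b_2 b_1) + (\alpha\delta + \beta\gamma)(a_1 b_2 + a_2 b_1)\bigr).
\]
In characteristic two the first two terms vanish, and the determinant condition $\alpha\delta + \beta\gamma = 1$ collapses the third to $T(a_1 b_2 + a_2 b_1) = \omega((a_1, b_1), (a_2, b_2))$. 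Hence $M$ embeds into $\mathrm{Sp}_{2n}(\GF(2))$, and the lifting theorem produces the desired Clifford $U_M$.

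The essential content is thus the short symplectic-preservation check above; pure existence of $U_M$ comes for free after that. The main obstacle, in my view, is not existence but keeping the lift explicit enough to feed into the paper's subsequent circuit-complexity bounds. For that, I would additionally decompose any $M$ into a product of diagonal matrices $\mathrm{diag}(\alpha, \alpha^{-1})$ and elementary shears $\bigl(\begin{smallmatrix} 1 & \beta \\ 0 & 1 \end{smallmatrix}\bigr)$, $\bigl(\begin{smallmatrix} 1 & 0 \\ \gamma & 1 \end{smallmatrix}\bigr)$ (via Gauss elimination in $\SL_2$), and realize each by a concrete Clifford: the linear-reversible CNOT circuit $|c\rangle \mapsto |\alpha c\rangle$ — multiplication-by-$\alpha$ in the primal basis using the matrix $M_\alpha$ from Eq.~\eqref{matrix-M_r} — for the diagonal case, and a quadratic-phase diagonal gate, conjugated by $H^{\otimes n}$ when needed to convert between upper and lower shears, for each shear. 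Verifying that each such gate induces the claimed $2 \times 2$ block is the routine-but-delicate calculation where the primal/dual distinction embodied in $W$ of Eq.~\eqref{basis-convert} actually does work.
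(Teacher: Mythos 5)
Your argument is correct, but it reaches the existence statement by a genuinely different route than the paper. The paper does not prove Lemma~\ref{lemma:induce} abstractly: it imports Chau's explicit construction and then, in Section~\ref{sec:framework}, gives its own constructive version by decomposing $M$ (Lemma~\ref{lemma:gaussian_elim}) into the generators $\bigl(\begin{smallmatrix} r & 0\\ 0 & r^{-1}\end{smallmatrix}\bigr)$, $\bigl(\begin{smallmatrix}1 & 0\\ 1 & 1\end{smallmatrix}\bigr)$, $\bigl(\begin{smallmatrix}0 & 1\\ 1 & 0\end{smallmatrix}\bigr)$ and exhibiting explicit Clifford unitaries ($M_r$, the quadratic-phase gate $V_W$, and $H^{\otimes n}$ together with the basis-change unitary $W$) that induce them; this is what later yields the $\widetilde{O}(n)$ circuits. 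You instead verify that the $\GF(2)$-linear action of $M$ on Pauli labels preserves the commutation form $T(ab'+a'b)$ --- your expansion, using characteristic $2$ and $\alpha\delta+\beta\gamma=1$, is correct --- and then invoke the classical surjection of the Clifford group onto $\mathrm{Sp}_{2n}(\GF(2))$ (Gottesman/Calderbank--Rains--Shor--Sloane; you should cite it, since the paper never states it), with the phase automatically in $\{1,i,-1,-i\}$ because Cliffords normalize $\cP_n$. What your route buys: existence for every primal basis in a few lines, plus a conceptual explanation of why the dual basis must appear in the $Z$ exponent (with $Z^{b}$ in place of $Z^{\hat b}$ the form is not preserved, matching the paper's remark after Definition~\ref{def:induce}). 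What it does not buy is an explicit, efficiently implementable $U_M$, which is the paper's actual concern; your supplementary decomposition addresses this and essentially reproduces the paper's framework, with one small point left implicit: realizing the shear $\bigl(\begin{smallmatrix}1 & 0\\ \gamma & 1\end{smallmatrix}\bigr)$ directly by a quadratic phase gate requires that the relevant matrix $WM_\gamma$, with entries $T(\gamma\,\omega_i\omega_j)$, is symmetric (it is), whereas the paper sidesteps general shears via the square-root conjugation trick inside Lemma~\ref{lemma:gaussian_elim}.
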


\begin{defn}
Consider $M\in\SL_2(\GF(2^n))$. 
Let $U_M$ denote a unitary that induces $M$ with respect to the primal basis; 
$U_M$ is unique up to multiplication by a Pauli (a proof of this can be found in
\cite{gottesman1997, gottesman1998}, and is also provided in 
Appendix~\ref{appendix:uniqueness}, Lemma~\ref{lemma:appendix-uniqueness}).
Similarly, let $\widehat{U}_M$ denote a unitary that induces $M$ with respect to the dual basis.
\end{defn}

The proof of Lemma~\ref{lemma:induce} in \cite{Chau2005} exhibits a possible choice of $U_M$ for any $M \in \SL_2(\GF(2^n))$.
However, it is unclear how to implement that $U_M$ as a small quantum circuit, except for the fact that $U_M$ is in the Clifford group, so its gate complexity is $O(n^2/\log n)$ by~\cite{AG2004}.
Our results in subsequent sections amount to an alternative proof of Lemma~\ref{lemma:induce} for certain bases of $\GF(2^n)$, as well as a modified version of this lemma.
This enables us to ultimately attain gate constructions of size $\widetilde{O}(n)$ that implement unitary 2-designs.
The relationship between Lemma~\ref{lemma:induce} and unitary 2-designs is based on the fact that the uniform ensemble over $\{U_M:M \in \SL_2(\GF(2^n))\}$ is Pauli mixing, which is a consequence of the following.

\begin{lemma}[\cite{Chau2005,GAE2007}]\label{lemma41}
  Let $G_0$ denote the set of all non-zero elements of $\GF(2^n)\times
  \GF(2^n)$. 
  Let $M\in \SL_2(\GF(2^n))$ be chosen uniformly at random. Then, for any 
  $\bigl(\begin{smallmatrix}
	  a \\ b
\end{smallmatrix}\bigr) \in G_0$, 
\be\begin{pmatrix} c\\d\end{pmatrix} = M\begin{pmatrix}a\\b\end{pmatrix}\ee
  is uniformly distributed over $G_0$.
\end{lemma}

\begin{proof}
We first show that $\SL_2(\GF(2^n))$ acts transitively on $G_0$.
Let
$\bigl(\begin{smallmatrix} c \\ d \end{smallmatrix}\bigr) \in G_0$.
If $c \neq 0$, 
then, 
$\bigl(\begin{smallmatrix} c & 0 \\ d & c^{{-}1} \end{smallmatrix}\bigr)
\bigl(\begin{smallmatrix} 1 \\ 0 \end{smallmatrix}\bigr) = 
\bigl(\begin{smallmatrix} c \\ d \end{smallmatrix}\bigr)$.  
If $c = 0$, then $d \neq 0$, so, 
$\bigl(\begin{smallmatrix} 0 & d^{-1} \\ d & 0 \end{smallmatrix}\bigr)
\bigl(\begin{smallmatrix} 1 \\ 0 \end{smallmatrix}\bigr) = 
\bigl(\begin{smallmatrix} c \\ d \end{smallmatrix}\bigr)$.  
Thus, we can map any 
$\bigl(\begin{smallmatrix} c_1 \\ d_1 \end{smallmatrix}\bigr) \in G_0$
to 
$\bigl(\begin{smallmatrix} 1 \\ 0 \end{smallmatrix}\bigr)$ 
and then to any other 
$\bigl(\begin{smallmatrix} c_2 \\ d_2 \end{smallmatrix}\bigr) \in G_0$ 
using elements of $\SL_2(\GF(2^n))$.  

To prove the lemma, suppose, by contradiction, that there are distinct 
$\bigl(\begin{smallmatrix} c_1 \\ d_1 \end{smallmatrix}\bigr)$, 
$\bigl(\begin{smallmatrix} c_2 \\ d_2 \end{smallmatrix}\bigr)$ such that 
${\rm Prob}_M \left\{
M \bigl(\begin{smallmatrix} a \\ b \end{smallmatrix}\bigr)
 = \bigl(\begin{smallmatrix}
	  c_1 \\ d_1 \end{smallmatrix}\bigr) \right\} = p_1$, 
${\rm Prob}_M \left\{
M \bigl(\begin{smallmatrix} a \\ b \end{smallmatrix}\bigr)
 = \bigl(\begin{smallmatrix}
	  c_2 \\ d_2 \end{smallmatrix}\bigr) \right\} = p_2$
and $p_1 > p_2$.  
But there exists an $M' \in  \SL_2(\GF(2^n))$ such that 
$M' \bigl(\begin{smallmatrix} c_1 \\ d_1 \end{smallmatrix}\bigr)
= 
\bigl(\begin{smallmatrix} c_2 \\ d_2 \end{smallmatrix}\bigr)$. 
So, 
${\rm Prob}_M \left\{
M' M \bigl(\begin{smallmatrix} a \\ b \end{smallmatrix}\bigr)
 = \bigl(\begin{smallmatrix}
	  c_2 \\ d_2 \end{smallmatrix}\bigr) \right\} \geq p_1$. 
But the distribution over $M$ is the same as the distribution 
over $M' M$, so the left side of the last inequality is $p_2$, 
which is a contradiction. 
\end{proof}

Our goal is to implement $U_M$ (for $M \in \SL_2(\GF(2^n))$) with quantum circuits consisting of $\widetilde{O}(n)$ Clifford gates.
The interplay between the primal basis and the dual basis is a major complicating factor that we address using two different approaches.
In one of our approaches we modify the framework of $\SL_2(\GF(2^n))$.

Our approach in Section~\ref{sec:mult} is based on a self-dual basis for $\GF(2^n)$ and the structure of $\SL_2(\GF(2^n))$.
Our approach in Section~\ref{sec:mult-poly} is based on a polynomial basis for $\GF(2^n)$ (and its dual) and the structure of two subgroups of $\SL_2(\GF(2^n))$: the lower triangular subgroup and the upper triangular subgroup. These are defined respectively as 
\begin{align}
  \LT_2(\GF(2^n)) &= \biggl\{
	\begin{pmatrix}
	  \alpha & 0 \\
	  \beta & \alpha^{-1}
	\end{pmatrix} : 
	\mbox{$\alpha, \beta \in \GF(2^n)$ and $\alpha \neq 0$}
  \biggr\} \\[1mm]
  \UT_2(\GF(2^n)) &= \biggl\{
	\begin{pmatrix}
	  \alpha & \beta \\
	  0 & \alpha^{-1}
	\end{pmatrix} : 
	\mbox{$\alpha, \beta \in \GF(2^n)$ and $\alpha \neq 0$}
  \biggr\}.
\end{align}
These subgroups have interesting mixing properties, albeit weaker ones than 
$\SL_2(\GF(2^n))$, which are explained in Section~\ref{sec:mult-poly}.

\subsection{A framework for implementing elements of $\SL_2(\GF(2^n))$ by unitaries}\label{sec:framework}




We first show that all elements of $\SL_2(\GF(2^n))$ can be written as
a product of a small constant number of matrices in a generating
set---and more restrictive generating sets for $\LT_2(\GF(2^n))$ and
$\UT_2(\GF(2^n))$.
Then we describe Clifford unitaries that induce these generating matrices.
In subsequent sections, we show how to implement these unitaries 
with $\widetilde{O}(n)$ quantum gates, thereby implementing elements of 
$\SL_2(\GF(2^n))$.


\begin{lemma}\label{lemma:gaussian_elim}
  Every element $M \in \SL_2(\GF(2^n))$ can be expressed as a product of a
  constant number of the following elements of $\SL_2(\GF(2^n))$:
\begin{align}\label{eq:elementary-SL}
\begin{pmatrix}
	  r & 0\\ 0 & r^{-1}
	\end{pmatrix},\ \ 
	\begin{pmatrix}
	  1 & 0 \\ 1 & 1
	\end{pmatrix},\ \ 
	\begin{pmatrix}
	  0 & 1\\ 1 & 0
  \end{pmatrix}
\end{align}
where $r\in \GF(2^n)$ is non-zero.
\end{lemma}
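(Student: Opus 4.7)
The plan is to carry out a Bruhat-style decomposition, adapted to characteristic~$2$. Write $M = \bigl(\begin{smallmatrix} \alpha & \beta \\ \gamma & \delta \end{smallmatrix}\bigr)$ with $\alpha\delta + \beta\gamma = 1$, and denote the three generator types by $D(r) = \bigl(\begin{smallmatrix} r & 0 \\ 0 & r^{-1} \end{smallmatrix}\bigr)$, $L = \bigl(\begin{smallmatrix} 1 & 0 \\ 1 & 1 \end{smallmatrix}\bigr)$, and $S = \bigl(\begin{smallmatrix} 0 & 1 \\ 1 & 0 \end{smallmatrix}\bigr)$. The strategy is standard Gaussian elimination on the $2\times 2$ block, with the one twist being how to reach an arbitrary lower unipotent entry $s$ from the single generator $L$.

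First I would establish two auxiliary identities. For the parametrized lower and upper unipotents $L(s) = \bigl(\begin{smallmatrix} 1 & 0 \\ s & 1 \end{smallmatrix}\bigr)$ and $U(t) = \bigl(\begin{smallmatrix} 1 & t \\ 0 & 1 \end{smallmatrix}\bigr)$, a direct computation gives $U(t) = S \cdot L(t) \cdot S$, and conjugation by a diagonal matrix gives $D(r) \cdot L \cdot D(r)^{-1} = L(r^{-2})$. Since the Frobenius map $x \mapsto x^2$ is a bijection on $\GF(2^n)$, every nonzero $s$ has a unique square root, so choosing $r = s^{-1/2}$ yields $L(s) = D(s^{-1/2}) \cdot L \cdot D(s^{1/2})$, a product of at most $3$ generators. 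Consequently $U(t)$ is a product of at most $5$ generators (or $U(0)=I$).

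For the main argument, I would split into two cases. Case~1: $\alpha \neq 0$. Working in characteristic~$2$, left-multiplication by $L(\gamma\alpha^{-1})$ kills the $(2,1)$ entry of $M$, and then right-multiplication by $U(\alpha^{-1}\beta)$ kills the $(1,2)$ entry; the determinant constraint forces the resulting $(2,2)$ entry to be exactly $\alpha^{-1}$. Using $L(s)^{-1} = L(s)$ and $U(t)^{-1} = U(t)$ in characteristic~$2$, this rearranges to
\begin{equation*}
M = L(\gamma\alpha^{-1}) \cdot D(\alpha) \cdot U(\alpha^{-1}\beta),
\end{equation*}
which, after expanding the nontrivial $L(\gamma\alpha^{-1})$ and $U(\alpha^{-1}\beta)$ via the auxiliary identities, writes $M$ as a product of at most $9$ of the listed generators.

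Case~2: $\alpha = 0$. Then $\alpha\delta + \beta\gamma = 1$ forces $\beta\gamma = 1$, hence $\gamma \neq 0$, so left-multiplying by $S$ swaps the rows and produces a matrix with nonzero $(1,1)$ entry; applying Case~1 to $SM$ and then prepending $S$ (using $S^2 = I$) expresses $M$ as a product of at most $10$ generators. No step here presents a serious obstacle; the one place characteristic~$2$ really enters is the square-root trick used to realize arbitrary $L(s)$ from the single generator $L$, and without Frobenius bijectivity one would need to enlarge the generating set (e.g.\ add $L(r)$ for all $r$) to carry out the same argument.
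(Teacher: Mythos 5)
Your proposal is correct and follows essentially the same route as the paper's proof: a lower-unipotent/diagonal/upper-unipotent (Bruhat-style) decomposition depending on whether $\alpha \neq 0$, combined with the characteristic-$2$ square-root trick $L(s) = D(s^{-1/2})\,L\,D(s^{1/2})$ and $U(t) = S\,L(t)\,S$ to reduce the parametrized unipotents to the fixed generators. The only cosmetic difference is your handling of the $\alpha = 0$ case by reducing to the other case via left-multiplication by $S$, where the paper writes the decomposition $L(\delta/\gamma)\,D(\gamma)\,S$ directly.
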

\begin{proof}
For any $M = \begin{pmatrix}
\alpha & \gamma\\ \beta& \delta\end{pmatrix} \in \SL_2(\GF(2^n))$, we can
decompose it into a product as follows:
\begin{align}\label{eq:decomposition}
\begin{pmatrix} \alpha & \gamma \\ \beta & \delta
\end{pmatrix} = 
\begin{cases}
\begin{pmatrix} 1 & 0\\ \frac\beta\alpha&
1\end{pmatrix}
\begin{pmatrix}1&\alpha\gamma\\0&1\end{pmatrix}\begin{pmatrix}\alpha&0\\0&\alpha^{-1}\end{pmatrix} 
& \mbox{if $\alpha \neq 0$} \vspace*{2mm}\\
\begin{pmatrix}
1 & 0\\ 
\frac{\delta}{\gamma} & 1
\end{pmatrix}
\begin{pmatrix}
\gamma & 0 \\
0 & \gamma^{-1}
\end{pmatrix}
\begin{pmatrix}
0 & 1 \\
1 & 0
\end{pmatrix} 
& \mbox{if $\alpha = 0$.} 
\end{cases}
\end{align}
Furthermore, for any non-zero $s \in \GF(2^n)$, there exists $t \in \GF(2^n)$ such that $t^2 = s$ (explicitly $t = s^{2^{n-1}}$).
This permits us to decompose further as 
\begin{align}
\begin{pmatrix} 1 & 0\\ s & 1\end{pmatrix} =
\begin{pmatrix}t^{-1}&0\\0& t\end{pmatrix}
\begin{pmatrix}1&0\\1&1\end{pmatrix}
\begin{pmatrix}t&0\\0&t^{-1}\end{pmatrix}
\end{align}
and
\begin{align}
\begin{pmatrix} 1 & s\\ 0 & 1\end{pmatrix} =
\begin{pmatrix}0&1\\1&0\end{pmatrix}
\begin{pmatrix} 1 & 0\\ s & 1\end{pmatrix}
\begin{pmatrix}0&1\\1&0\end{pmatrix}.
\end{align}
\end{proof}

It is easy to specialize the above lemma to the lower triangular and upper triangular matrices in $\SL_2(\GF(2^n))$ as follows. 

\begin{lemma}\label{lemma:gaussian_elim-triangular}
Every element of $\LT_2(\GF(2^n))$ can be expressed as a product of a
constant number of elements of the form 
\begin{align}\label{eq:elementary-LT}
\begin{pmatrix}
	  r & 0\\ 0 & r^{-1}
	\end{pmatrix}\ \ \mbox{and} \ \ 
	\begin{pmatrix}
	  1 & 0 \\ 1 & 1
	\end{pmatrix}
\end{align}
and every element $\UT_2(\GF(2^n))$ can be expressed as a product of a
constant number of elements of the form
\begin{align}\label{eq:elementary-UT}
\begin{pmatrix}
	  r & 0\\ 0 & r^{-1}
	\end{pmatrix}\ \ \mbox{and} \ \ 
	\begin{pmatrix}
	  1 & 1 \\ 0 & 1
	\end{pmatrix},
\end{align}
where $r\in \GF(2^n)$ is non-zero.
\end{lemma}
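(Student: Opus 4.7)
My plan is to specialize the decomposition used in the proof of Lemma~\ref{lemma:gaussian_elim} to the triangular subgroups, exploiting the fact that elements of $\LT_2(\GF(2^n))$ and $\UT_2(\GF(2^n))$ always have a nonzero $(1,1)$-entry, so the problematic ``$\alpha=0$'' branch of Eq.~\eqref{eq:decomposition} (the only one that uses the swap matrix) never arises.

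For the lower triangular case, I would take $M = \begin{pmatrix} \alpha & 0 \\ \beta & \alpha^{-1} \end{pmatrix} \in \LT_2(\GF(2^n))$ and, since $\alpha \neq 0$, apply the first case of Eq.~\eqref{eq:decomposition} with $\gamma = 0$. The middle factor collapses to the identity, leaving
\begin{align*}
M = \begin{pmatrix} 1 & 0 \\ \beta\alpha^{-1} & 1 \end{pmatrix}
    \begin{pmatrix} \alpha & 0 \\ 0 & \alpha^{-1} \end{pmatrix}.
\end{align*}
The second factor is already of the desired diagonal form. For the first factor, when $s := \beta\alpha^{-1} \neq 0$, I would invoke the square-root identity from the proof of Lemma~\ref{lemma:gaussian_elim} (with $t = s^{2^{n-1}}$ so that $t^2 = s$):
\begin{align*}
\begin{pmatrix} 1 & 0 \\ s & 1 \end{pmatrix}
= \begin{pmatrix} t^{-1} & 0 \\ 0 & t \end{pmatrix}
  \begin{pmatrix} 1 & 0 \\ 1 & 1 \end{pmatrix}
  \begin{pmatrix} t & 0 \\ 0 & t^{-1} \end{pmatrix},
\end{align*}
all factors of which already lie in the generating set in Eq.~\eqref{eq:elementary-LT}. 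If $s = 0$ the first factor is the identity and no such factors are needed. In all cases, $M$ is a product of at most four matrices from Eq.~\eqref{eq:elementary-LT}.

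For the upper triangular case I would proceed symmetrically. Writing $M = \begin{pmatrix} \alpha & \beta \\ 0 & \alpha^{-1} \end{pmatrix} \in \UT_2(\GF(2^n))$, a direct calculation gives
\begin{align*}
M = \begin{pmatrix} \alpha & 0 \\ 0 & \alpha^{-1} \end{pmatrix}
    \begin{pmatrix} 1 & \alpha^{-1}\beta \\ 0 & 1 \end{pmatrix}.
\end{align*}
To express the second factor in terms of Eq.~\eqref{eq:elementary-UT}, I would establish the upper-triangular analogue of the square-root identity: for $s \neq 0$ with $u^2 = s$,
\begin{align*}
\begin{pmatrix} 1 & s \\ 0 & 1 \end{pmatrix}
= \begin{pmatrix} u & 0 \\ 0 & u^{-1} \end{pmatrix}
  \begin{pmatrix} 1 & 1 \\ 0 & 1 \end{pmatrix}
  \begin{pmatrix} u^{-1} & 0 \\ 0 & u \end{pmatrix},
\end{align*}
which is verified by a two-step matrix multiplication.

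There is no real obstacle here; the one point that warrants care is simply the observation that the swap matrix $\bigl(\begin{smallmatrix} 0 & 1 \\ 1 & 0 \end{smallmatrix}\bigr)$, which is needed in Lemma~\ref{lemma:gaussian_elim} only to handle $\alpha = 0$, can be avoided entirely for the triangular subgroups because their defining constraint $\alpha\alpha^{-1} = 1$ forces $\alpha \neq 0$. Thus the same square-root trick used in Lemma~\ref{lemma:gaussian_elim} yields a bounded-length factorization using only the two generator types listed in Eqs.~\eqref{eq:elementary-LT} and \eqref{eq:elementary-UT}, respectively.
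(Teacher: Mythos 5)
Your proposal is correct and is exactly the argument the paper intends: the paper gives no separate proof, stating the lemma as an easy specialization of Lemma~\ref{lemma:gaussian_elim}, and your specialization (first case of Eq.~\eqref{eq:decomposition} with the off-diagonal entry zero, plus the square-root identity, with a directly verified upper-triangular analogue so the swap matrix is never needed) is precisely that specialization, with a bound of a constant number of factors.
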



In view of Lemma~\ref{lemma:gaussian_elim}, for every $M$ in
$\SL_2(\GF(2^n))$, we can find a unitary that induces $M$ if
we find a unitary that induces each of 
$\bigl(\begin{smallmatrix}
	  1 & 0\\ 1 & 1
\end{smallmatrix}\bigr)$, 
$\bigl(\begin{smallmatrix}
	  0 & 1\\ 1 & 0
\end{smallmatrix}\bigr)$, 
and 
$\bigl(\begin{smallmatrix}
	  r & 0\\ 0 & r^{-1}
\end{smallmatrix}\bigr)$ for any non-zero $r \in \GF(2^n)$.  
Similar statements hold for $\LT_2(\GF(2^n))$ and 
$\UT_2(\GF(2^n))$ with their respectively generating sets 
shown in Lemma \ref{lemma:gaussian_elim-triangular}.  


First consider any non-zero $r \in \GF(2^n)$, and the element of $\SL_2(\GF(2^n))$ 
of the form   
\begin{align}
\begin{pmatrix}
	  r & 0\\ 0 & r^{-1}
	\end{pmatrix}. 
 \end{align}
A Clifford unitary that induces 
$\bigl(\begin{smallmatrix}
 	  r & 0\\ 0 & r^{-1}
\end{smallmatrix}\bigr)$ 
is the \textit{multiply-by-$r$} (in the primal basis) operation $\Pi_r$ defined%
\footnote{This unitary operation acts on computation basis states 
similarly to $M_r$ defined in 
Sec.~\ref{sec:field-basics}, Eq.~\eqref{matrix-M_r}.}
as $\Pi_r\ket{\lceil c \rceil} = \ket{ \lceil rc \rceil}$.
To improve readability, we henceforth
denote $\ket{\lceil c \rceil}$  by $\ket{c}$.  
For example, in this notation, $\Pi_r\ket{c} = \ket{rc}$.   
Now, to check that $\Pi_r$ induces 
$\bigl(\begin{smallmatrix}
 	  r & 0\\ 0 & r^{-1}
\end{smallmatrix}\bigr)$, note that, for all $c \in \GF_2(2^n)$, 
\begin{align}
\Pi_r X^{\lceil a \rceil} \Pi_{r}^{\dag} \, \ket{c} 
&= \Pi_r X^{\lceil a \rceil} \ket{r^{-1}c} \\
&= \Pi_r \ket{r^{-1}c+a} \\
&= \ket{c+ra} \\
&= X^{\lceil ra \rceil}\ket{c} \,.
\end{align}
Furthermore, 
\begin{align}
\Pi_r Z^{\lfloor b \rfloor} \Pi_{r}^{\dag} \, \ket{c} 
&= \Pi_r Z^{\lfloor b \rfloor} \ket{r^{-1}c} \\
&= \Pi_r (-1)^{\lfloor b \rfloor \cdot \lceil r^{-1}c \rceil} \ket{r^{-1}c} \\
&= (-1)^{T(br^{-1}c)} \ket{c} \label{eq:SL2_trace_1}\\
&= (-1)^{\lfloor br^{-1} \rfloor \cdot \lceil c \rceil} \ket{c} 
\label{eq:SL2_trace_2}\\
&= Z^{\lfloor br^{-1} \rfloor} \, \ket{c}.
\end{align}
It follows that, for all $a, b \in \GF(2^n)$, $\Pi_r X^{\lceil a \rceil} Z^{\lfloor b \rfloor} \Pi_{r}^{\dag} = X^{\lceil ra \rceil} Z^{\lfloor r^{-1}b \rfloor}$. In other words, $\Pi_r$ induces $\bigl(\begin{smallmatrix} r & 0\\ 0 & r^{-1}\end{smallmatrix}\bigr)$.

We can write any $\bigl(\begin{smallmatrix}a\\b\end{smallmatrix}\bigr)\in \SL_2(\GF(2^n))$ in a primal-dual basis as
$\left[ \begin{smallmatrix} \lceil a \rceil \\ \lfloor b \rfloor \end{smallmatrix}\right]\in\{0, 1\}^{2n}$, where $\lceil a \rceil, \lfloor b \rfloor \in\{0, 1\}^n$. 
Recall that to distinguish elements of $\SL_2(\GF(2^n))$ from their corresponding binary vectors in a primal-dual basis, we use parenthesis to denote the former and square brackets for the binary vectors and their linear operators. 

We summarize the effect of conjugating a Pauli $X^{\lceil a \rceil}
Z^{\lfloor b \rfloor}$ by $\Pi_r$ on the binary strings $\lceil a \rceil$ and
$\lfloor b \rfloor$ as the following mapping on $2n$-bit strings:
\begin{align}
\label{eq:bmapping-m-by-r}
\begin{bmatrix} 
\lceil a \rceil \vspace*{1.5mm}\\
\lfloor b \rfloor
\end{bmatrix}
\mapsto\
\begin{bmatrix} 
\lceil ra \rceil \vspace*{1.5mm}\\
\lfloor r^{-1}b \rfloor
\end{bmatrix}
=
\begin{bmatrix}
M_r \, \lceil a \rceil \vspace*{1.5mm}\\
(M_{r^{-1}})^{\sf T} \, \lfloor b \rfloor
\end{bmatrix}.
\end{align}
Here $M_r$ is the linear operator corresponding to multiplication by $r$ in the primal basis, as defined in Eq.~\eqref{matrix-M_r}; 
$(M_{r^{-1}})^{\sf T}$, the transpose of $M_{r^{-1}}$, which (due to the form of Eq.~\eqref{matrix-M_r}) is the linear operator corresponding to multiplication by $r^{-1}$ in the dual basis.

The following definition is similar to Definition~\ref{def:induce}. 
\begin{defn}
\label{def:induce2}
We say that a Clifford 
unitary
$U$ \emph{induces} the $2n \times 2n$ binary matrix $\mathscr{M}$, 
if, for all $n$-bit 
strings $s,r$, 
\begin{align}
U X^r Z^s U^{\dagger} 
\equiv X^{r'} Z^{s'},~~ {\rm where} ~~
\begin{bmatrix}
r' \vspace*{1mm}\\ s'
\end{bmatrix}
= 
\mathscr{M} 
\begin{bmatrix}
r \vspace*{1mm}\\ s
\end{bmatrix}.
\end{align}
Here, $\equiv$ means equal up to a global phase in $\{1,i,-1,-i\}$ that 
is a function of $\mathscr{M}$, $r$, and $s$.  
We also say that $U$ induces the mapping  $v \mapsto \mathscr{M} v$.  
\end{defn}

\noindent For example, 
$\Pi_r$ induces the mapping given by Eq.~(\ref{eq:bmapping-m-by-r})
and the matrix 
$\begin{bmatrix}
M_r & 0 \vspace*{1mm}\\ 0 & (M_{r^{-1}})^{\sf T}
\end{bmatrix}.
$

Now we return to finding unitaries that induce elements of $\SL_2(\GF(2^n))$.   
Consider the element 
$\bigl(\begin{smallmatrix}
	  1 & 0\\ 1 & 1
\end{smallmatrix}\bigr)$ 
of $\SL_2(\GF(2^n))$.
The Clifford unitary that induces 
$\bigl(\begin{smallmatrix}
	  1 & 0\\ 1 & 1
\end{smallmatrix}\bigr)$ 
should transform the Pauli $X^{\lceil a \rceil} Z^{\lfloor b \rfloor}$ 
along the lines of the mapping
\begin{align}
\begin{bmatrix} 
\lceil a \rceil \vspace*{1.5mm}\\
\lfloor b \rfloor
\end{bmatrix}
\mapsto\
\begin{bmatrix} 
\lceil a \rceil \vspace*{1.5mm}\\
\lfloor a+b \rfloor
\end{bmatrix}
=
\begin{bmatrix} 
\lceil a \rceil \vspace*{1.5mm}\\
\lfloor a \rfloor + \lfloor b \rfloor
\end{bmatrix}
=
\begin{bmatrix}
\lceil a \rceil \vspace*{1.5mm}\\ W \lceil a \rceil + \lfloor b \rfloor
\end{bmatrix}
=
\begin{bmatrix}
I & 0 \vspace*{1.5mm}\\ W & I
\end{bmatrix}
\begin{bmatrix} 
\lceil a \rceil \vspace*{1.5mm}\\
\lfloor b \rfloor
\end{bmatrix}
\label{eq:add-b-to-a}
\end{align}
where $W$ is the linear operator for primal-to-dual basis conversion
defined in Eq.~\eqref{basis-convert}.  

For \textit{any} symmetric $n \times n$ binary matrix $V$, we show
that there is a diagonal Clifford unitary $\Gamma_V$ that implements
$\begin{bmatrix} I & \!\! 0 \vspace*{0.1mm}\\ V & \!\! I \end{bmatrix}$.  
The unitary $\Gamma_V$ is defined as 
\begin{align}\label{eq:Gamma_V}
\Gamma_V \ket{c} 
&=
\mbox{\Large $i$}^{\,\sum_{j=1}^{n} \sum_{k=1}^{n} V_{jk}c_jc_k}\ket{c}.
\end{align}
We begin with some preliminary observations. Since, for all $i,j \in
\{1,\dots,n\}$, $V_{ij} = V_{ji}$, an equivalent definition is
\begin{align}\label{eq:Gamma_V-alternate}
\Gamma_V \ket{c} 
&=
\mbox{\Large $i$}^{\,\sum_{j=1}^{n} V_{jj}c_j}
(-1)^{\,\sum_{1 \leq j < k \leq n} V_{jk}c_jc_k}\ket{c}.
\end{align}
From Eq.~(\ref{eq:Gamma_V-alternate}), it is clear that $\Gamma_V$ is in the Clifford group, since it is computed by the following composition of gates: an $S$ gate acting on each qubit $j$ for which $V_{jj} = 1$; and a controlled-$Z$ gate acting on qubits $j$ and $k$ for each $j < k$ where $V_{jk} = 1$ (all these gates commute).
This generic construction consists of $O(n^2)$ gates.
In Sections~\ref{sec:mult} and~\ref{sec:mult-poly}, for the
primal-to-dual basis conversion matrix $W$ (which is symmetric from
its definition in Eq.~\eqref{basis-convert}), we exhibit circuits
implementing $\Gamma_W$ with $\widetilde{O}(n)$ gates.

To check that $\Gamma_W$ induces the mapping in Eq.~\eqref{eq:add-b-to-a},
it is convenient to separately consider the diagonal and off-diagonal entries of  $W$.
Let $W = D + E$, where $D$ is diagonal and $E_{jj} = 0$ for all $j \in \{1,\dots,n\}$. This allows us to write $\Gamma_W = \Gamma_{D + E} = \Gamma_D \Gamma_E$, as a direct consequence of 
\begin{align}
\begin{bmatrix}
	  I & 0\\ W & I
\end{bmatrix}
=
\begin{bmatrix}
	  I & 0\\ D & I
\end{bmatrix}
\begin{bmatrix}
	  I & 0\\ E & I
\end{bmatrix}.
\end{align}
Then, from the discussion following Eq.~(\ref{eq:Gamma_V-alternate}),
we know that $\Gamma_D = S^{W_{11}} \otimes \cdots \otimes S^{W_{nn}}$
and it is straightforward to check that
\begin{align}\label{eq:analyze-Gamma_D}
\Gamma_D X^{\lceil a \rceil} \Gamma_{D}^{\dag} 
= \left(\mbox{\Large $i$}^{W_{11}a_1 + \cdots + W_{nn}a_n}\right)
   X^{\lceil a \rceil} Z^{D \lceil a \rceil},
\end{align}
where we are using the notation $a_i = \lceil a \rceil_i$.
For $\Gamma_E$, we have
\begin{align}
\Gamma_E X^{\lceil a \rceil} \Gamma_{E}^{\dag}\ket{c} 
&= \Gamma_E X^{\lceil a \rceil} 
(-1)^{\,\sum_{j=1}^{n} \sum_{k=j+1}^{n} W_{jk}c_jc_k}\ket{c} \\
&= \Gamma_E
(-1)^{\,\sum_{j=1}^{n} \sum_{k=j+1}^{n} W_{jk}c_jc_k}\ket{a+c} \\
&= (-1)^{\,\sum_{j=1}^{n}\sum_{k=j+1}^{n} 
W_{jk}((a_j+c_j)(a_k+c_k) +c_jc_k)}\ket{a+c} \\
&= (-1)^{\,\sum_{j=1}^{n}\sum_{k=j+1}^{n} 
W_{jk}(a_j a_k + a_jc_k + a_kc_j)}\ket{a+c} \\
&= (-1)^{\sum_{j=1}^n \sum_{k=j+1}^n W_{jk} a_j a_k} 
(-1)^{ \lceil c \rceil \cdot  E \lceil a \rceil} |a+c\rangle \\
&= (-1)^{\sum_{j=1}^n \sum_{k=j+1}^n W_{jk} a_j a_k} X^{\lceil a \rceil} 
(-1)^{ \lceil c \rceil \cdot  E \lceil a \rceil} |c\rangle \\
&= (-1)^{\,\sum_{j=1}^{n}\sum_{k=j+1}^{n} 
W_{jk}a_j a_k} X^{\lceil a \rceil} Z^{E \lceil a \rceil}\ket{c} \,.
\label{eq:analyze-Gamma_E}
\end{align}
Combining Eqs.~\eqref{eq:analyze-Gamma_D}, \eqref{eq:analyze-Gamma_E}, 
and the fact that $\Gamma_W$ commutes with every $Z^{\lfloor b \rfloor}$, 
we have
\begin{align}
\Gamma_W X^{\lceil a \rceil} Z^{\lfloor b \rfloor} \Gamma_{W}^{\dag} & = 
\left(\mbox{\Large $i$}^{\,\sum_{j=1}^{n}\sum_{k=1}^{n}W_{jk} a_j a_k}\right)
X^{\lceil a \rceil} Z^{{D \lceil a \rceil} + E {\lceil a \rceil} + \lfloor b \rfloor}
\\
& = \left(\mbox{\Large $i$}^{\,\sum_{j=1}^{n}\sum_{k=1}^{n}W_{jk} a_j a_k}\right)
X^{\lceil a \rceil} Z^{{W \lceil a \rceil} + \lfloor b \rfloor}
,
\end{align}
which implies that $\Gamma_W$ induces 
the mapping in Eq.~\eqref{eq:add-b-to-a}.

For completeness, a unitary operation that induces the element
$\bigl(\begin{smallmatrix}
	  0 & 1\\ 1 & 0
\end{smallmatrix}\bigr)$
of $\SL_2(\GF(2^n))$ should also be considered.
This is addressed in sections~\ref{sec:mult} and~\ref{sec:mult-poly} in very different ways, and we defer the discussion of this to those sections.



\section{$\widetilde{O}(n)$ implementation based on self-dual basis for $\GF(2^n)$}\label{sec:mult}

We want to find $\widetilde{O}(n)$-sized circuits to implement
unitaries that induce the generators of $\SL_2(\GF(2^n))$.  Our first
approach is to represent $\GF(2^n)$ in a self-dual basis.
The advantage of using a self-dual basis is that, the change 
of basis operation $W$ defined in Eq.~(\ref{basis-convert}) is 
simply $I$.  
Since there is no distinction between coordinates in the primal and
the dual bases, we omit the $\lceil ~\rceil$ and $\lfloor
~\rfloor$ notations in this section.
For all $n$-bit strings $a, b$,  
$S^{\otimes n}X^a Z^{b}(S^{\dag})^{\otimes n} 
= i^{a_1 + \cdots + a_n \bmod 4} X^a Z^{a+b}$ 
and 
$H^{\otimes n}X^a Z^{b}H^{\otimes n} = (-1)^{a\cdot b} X^b Z^a$.
Therefore, $S^{\otimes n}$ and $H^{\otimes n}$ respectively induce 
$\bigl(\begin{smallmatrix}
	  1 & 0\\ 1 & 1
\end{smallmatrix}\bigr)$
and
$\bigl(\begin{smallmatrix}
	  0 & 1\\ 1 & 0
\end{smallmatrix}\bigr)$.  

The challenge of using a self-dual basis lies in the implementation of
the unitary $\Pi_r$ (field multiplication) that induces
$\bigl(\begin{smallmatrix}
	  r & 0\\ 0 & r^{-1}
\end{smallmatrix}\bigr)$.  
Fast multiplication methods with respect to a polynomial basis
are known; however, no polynomial basis of $\mathrm{GF}(2^n)$ is
also self-dual if $n\geq2$ \cite{hazewinkel1996}.
Our solution is to use special self-dual bases that can be efficiently
converted to and from polynomial bases.
These special self-dual bases are constructed with {\em
  Gauss periods}, and are known for {\em admissible} $n$'s (see
Definition \ref{def:adn} below).  According to \cite{von2001}, there
are {\em infinitely many} admissible $n$'s under the extended Riemmann
Hypothesis.  Our implementation in this section is 
restricted to these values of $n$: 
\begin{defn} \label{def:adn}
A natural number $n$ is called {\em admissible} if the following 
two conditions hold: 
\begin{itemize}
  \setlength\itemsep{0em}
  \item[(1)] $2n+1$ is prime
  \item[(2)] $\gcd(e,n) = 1$, where $e$ is the index of the subgroup generated by $2$ in $\mathbb Z_{2n+1}^*$.
\end{itemize}
In the above, $\mathbb Z_{2n+1}^*$ denotes the multiplicative group of 
$\mathbb Z_{2n+1}$. Since $\mathbb{Z}_{2n{+}1}^*$ has $2n$ elements, $e = \frac{2n}{|\<2\>|}$.
\end{defn}
\vspace*{12pt}

In the remainder of this section, we first describe the procedure of
finding a self-dual basis using Gauss periods, and briefly explain the
efficient conversion between these two representations. Then we
describe the implementation of $\Pi_r$.

Since, for admissible values of $n$, $2n+1$ is prime, Fermat's Little 
Theorem implies $2^{2n}\equiv1 \bmod {2n+1}$.
So $2n+1$ divides $2^{2n}-1$, which implies that
there is a primitive $(2n+1)$-th root of unity
$\beta\in\mathrm{GF}(2^{2n})$. One way to get $\beta$ is the following. Let
$\xi$ be a generator of the multiplicative group of $\GF(2^{2n})$. Because 
$\xi^{2^{2n}-1} = 1$, we can take $\beta = \xi^{(2^{2n}-1)/(2n+1)}$.
Consider the set
\be
  \mathcal{S} = \{\beta+\beta^{-1}, \beta^2+\beta^{-2}, \ldots,
  \beta^{n}+\beta^{-n}\}.
\ee
We first show that $\mathcal{S}$ is a self-dual normal basis of
$\mathrm{GF}(2^n)$ over $\mathrm{GF}(2)$ (as defined in
Section~\ref{sec:field-basics}).  Then we show how to 
efficiently convert between $\mathcal{S}$ and a polynomial basis.  

First we show that for an admissible $n$, $2$ and $-1$ generate
$\mathbb{Z}_{2n{+}1}^*$ (i.e., $\left<2, -1 \right> =
\mathbb{Z}_{2n{+}1}^*$).  A proof is given in \cite{Gao2000}, and 
it can be rephrased as follows.  
Let $\gamma$ generate the cyclic group $\mathbb{Z}_{2n{+}1}^*$.
If $e$ is the index of $\left<2 \right>$ in $\mathbb{Z}_{2n{+}1}^*$,
then $2 = \gamma^e$.
%
%
Furthermore, $\gamma^n = -1$.  Since $\gcd(e,n) = 1$, there
are integers $k_1, k_2$ such that $1 = e k_1 + n k_2$ and therefore,
$\gamma \in \left<2, -1 \right>$, so 
$\mathbb{Z}_{2n{+}1}^* = \left<2, -1 \right>$.  

Our next step showing $\mathcal{S}$ is a self-dual basis follows from \cite{von2007}.
Since $\mathbb{Z}_{2n{+}1}^* = \left<2, -1 \right>$, it follows that 
\be
  \{2^0,-2^0,2^1,-2^1,\ldots,2^{n-1},-2^{n-1}\} \equiv \{1,-1,2,-2, \dots, 
  n,-n\}\ \  \bmod {2n+1}.
\ee 
and we can reorder the elements of $\mathcal{S}$ as 
\be \label{eq:set_reordering}
   \{\beta^{2^0}+\beta^{{-}2^0}, \beta^{2^1}+\beta^{{-}2^1}, 
    \ldots, \beta^{2^{n{-}1}}+\beta^{{-}2^{n{-}1}}\}.
\ee
The set in Eq.~\eqref{eq:set_reordering} as a subset of $\GF(2^n)$ is equal to $\{\alpha^{2^0},
\alpha^{2^1}, \cdots, \alpha^{2^{n{-}1}}\}$ where $\alpha = \beta +
\beta^{{-}1}$ is called a {\em Gauss period} of type $(n,2)$
over $\mathrm{GF}(2)$. It is easy to see that $\beta + \beta^{-1} \in
\GF(2^n)$, for one can verify that $(\beta + \beta^{-1})^{2^n} = \beta +
\beta^{-1}$.

Finally, we need to show that $\mathcal{S}$ is a basis.  We invoke
Theorem 3.1 in \cite{Gao2000} which implies that $\alpha$ is a normal
element in $\mathrm{GF}(2^n)$ (generating a normal basis as defined in Section~\ref{sec:field-basics}).  Then, from Corollary 3.5
in \cite{Gao2000}, any normal basis of Gauss period of type $(n,2)$
over $\mathrm{GF}(2)$ is self-dual when $n>2$, so, $\mathcal{S}$ is
self-dual, as claimed.

Next, we show how to efficiently convert between $\mathcal{S}$ and a
polynomial basis.  
We define a mapping from $\GF(2^n)$ to $\{0, 1\}^{n+1}$ as follows. If 
$a\in \GF(2^n)$, then $a' = [0, a_1,\cdots, a_n]^T$, where
$a = a_1(\beta+\beta^{-1}) + \ldots + a_n(\beta^n+\beta^{-n})$. In other words, 
$a'$ is the coordinate of $a$ with respect to the spanning set 
 $\{1, \beta+\beta^{-1}, \beta^2+\beta^{-2}, \ldots, \beta^{n}+\beta^{-n}\}$.
Including the element $1$ makes this spanning set not a basis, but significantly
simplifies the conversion between the following two spanning sets: 
\begin{align}
  \mathcal{S'} = & \{1, \beta+\beta^{-1}, \beta^2+\beta^{-2}, \ldots,
  \beta^{n}+\beta^{-n}\} \,,
\label{eq:sprime}
\\  \mathcal{T} = & 
\{1, \beta+\beta^{-1},(\beta+\beta^{-1})^2,\ldots,(\beta+\beta^{-1})^n\}.
\label{eq:t}
\end{align}
Notice that the set $\mathcal T$ arises from adding $1$ to a polynomial basis.  
We call $\mathcal{S'}$ a {\em self-dual spanning set} and $\mathcal{T}$ a {\em
polynomial spanning set}.  The fact that $\mathcal{T}$ is not a basis does not
affect how we represent a field element as a polynomial based on $\mathcal{T}$, 
i.e., $a = \sum_{i=0}^n a_i(\beta+\beta^{-1})^i$, and fast multiplication of two
polynomials of this form still works.  

Let $s_i = \beta^i + \beta^{-i}$, $t_i = (\beta + \beta^{-1})^i$, and let $s_i'$
and $t_i'$ be the $(n+1)$-bit string output by the mapping defined earlier.
We now describe the linear transformation $L_{n+1}$ 
that maps $s_i'$ to $t_i'$ for all $i$ (by right multiplication).  
The transformation $L_{n+1}$ is not unique.
A simple choice for $L_{n+1}$ is based on the binomial expansion 
$(\beta+\beta^{{-}1})^j = \sum_{i=0}^j {j \choose i} \beta^{j-2i}$. 
More precisely, for general $k$, we can choose $L_k$ as
\begin{align}\label{lmatrix}
  \left(L_{k}\right)_{i, j} = \left\{
	\begin{array}{l l}
	  0 & \quad \text{if $i > j$ or $j - i$ is odd,} \\
	  \binom{j}{(j-i)/2} \bmod 2 & \quad \text{otherwise,}
	\end{array} \right.
\end{align}
where $0\leq i,j < k$. The operation $L_k$ can be reversed. $L_k$ is
upper-triangular with 1's on the diagonal, which implies
$\det(L_k)=1$, so $L_k$ is invertible.

Finally, we will find a unitary $\mathscr{L}_k$ that induces 
$L_k$.  (More precisely, we are inducing the matrix with identical
diagonal blocks that is the $(k{-}1)\times(k{-}1)$ submatrix of $L_k$
with the first row and column omitted.)  The unitary $\mathscr{L}_n$ 
also induces a conversion from $\mathcal{S'}$ to $\mathcal{T}$.  
In \cite{von2007}, the following theorem is proved. 
\begin{theorem}[\cite{von2007}] \label{thm4132}
  Right multiplying $L_{n+1}$ ($L^{-1}_{n+1}$ respectively) by the
  vector representation ($a'$) of an element $a \in \mathrm{GF}(2^n)$
  described above can be done using $O(n\log n)$ operations (additions
  and multiplications) in $\mathrm{GF}(2)$.
\end{theorem}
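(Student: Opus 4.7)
The plan is to give a divide-and-conquer algorithm for multiplication by $L_{n+1}$ (and separately by $L_{n+1}^{-1}$) satisfying the recurrence $T(n) = 2 T(n/2) + O(n)$, yielding $O(n \log n)$. The key structural observation is that when the input vector $v \in \GF(2)^{n+1}$ is split into its even-indexed part $v^{(e)} = (v_0, v_2, v_4, \dots)$ and odd-indexed part $v^{(o)} = (v_1, v_3, v_5, \dots)$, multiplication by $L_{n+1}$ decouples into two instances of the \emph{same} transformation on vectors of length roughly $n/2$, combined by an $O(n)$ post-processing.

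Concretely, I would apply Lucas's theorem to the matrix entries $\binom{j}{(j-i)/2} \bmod 2$. For even $i = 2i'$ only even $j = 2j'$ contribute, and Lucas forces $j' - i'$ to be even, yielding $\binom{2j'}{2k} \equiv \binom{j'}{k} \pmod 2$ when $j' - i' = 2k$. This gives
\begin{equation}
w_{2i'} \;=\; \bigl(L_{\lfloor n/2\rfloor + 1}\, v^{(e)}\bigr)_{i'}.
\end{equation}
For odd $i = 2i'+1$ both parities of $k$ contribute, and Lucas produces $\binom{2j'+1}{2k} \equiv \binom{2j'+1}{2k+1} \equiv \binom{j'}{k} \pmod 2$; separating the two parities and re-indexing the odd-parity piece by a shift of $1$ gives
\begin{equation}
w_{2i'+1} \;=\; \bigl(L_{\lceil n/2 \rceil}\, v^{(o)}\bigr)_{i'} + \bigl(L_{\lceil n/2 \rceil}\, v^{(o)}\bigr)_{i'+1},
\end{equation}
with the second term interpreted as $0$ when the index is out of range.

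This immediately gives the stated recurrence for $L_{n+1}\, v$. For $L_{n+1}^{-1}\, w$ the same decomposition inverts cleanly: extract $u^{(e)}_{i'} = w_{2i'}$ and recursively compute $v^{(e)} = L_{\lfloor n/2\rfloor + 1}^{-1}\, u^{(e)}$; then solve the bidiagonal system $u^{(o)}_{i'} + u^{(o)}_{i'+1} = w_{2i'+1}$ by back-substitution, starting from the out-of-range entry $0$, which uses $O(n)$ XORs; recursively compute $v^{(o)} = L_{\lceil n/2 \rceil}^{-1}\, u^{(o)}$; and finally interleave $v^{(e)}$ and $v^{(o)}$. Again $T(n) = 2T(n/2) + O(n) = O(n \log n)$.

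The main obstacle will be the Lucas's-theorem bookkeeping in the odd-index case: verifying that the shifted summand corresponds exactly to the $(i'+1)$-entry of the \emph{same} smaller matrix $L_{\lceil n/2 \rceil}$ (and not a differently shifted or sized variant), and handling the boundary indices when $n$ is even versus odd. Once this identification is in place, the algorithm and its analysis reduce to the standard master-theorem computation. Conceptually, the clean parity split reflects the characteristic-$2$ identity $(1+y)^2 = 1 + y^2$, which is the structural reason that evaluating a polynomial at $\beta + \beta^{-1}$ factors recursively over the even- and odd-indexed coefficients.
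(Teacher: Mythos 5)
Your proposal is correct, and I verified the key identification you flagged as the main obstacle: for the odd-odd block, Lucas gives $\binom{2j'+1}{j'-i'} \equiv \binom{j'}{\lfloor (j'-i')/2\rfloor} \pmod 2$, and splitting by the parity of $j'-i'$ does yield exactly $(L_{\lceil n/2\rceil})_{i',j'}$ plus $(L_{\lceil n/2\rceil})_{i'+1,j'}$ (with the out-of-range row read as zero), so $w_{2i'+1} = (L_{\lceil n/2\rceil}v^{(o)})_{i'} + (L_{\lceil n/2\rceil}v^{(o)})_{i'+1}$; likewise the even-even block is exactly $L_{\lfloor n/2\rfloor+1}$. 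However, your route is genuinely different from the paper's. The paper splits the coordinate vector into its first and second halves and uses the Sierpinski self-similarity of Pascal's triangle mod 2 (via Lucas) to show that $L_k$, for $k$ a power of two, has block form with two diagonal copies of $L_{k/2}$ and an off-diagonal block $\Gamma_{k/2}$ that is a reflected, shifted copy of $L_{k/2}$; general $k$ is handled by padding to $2^{\lceil\log_2 k\rceil}$ and restricting, and the inverse is obtained simply by running the resulting CNOT circuit backwards. You instead split by index parity, using Lucas entrywise (the characteristic-2 squaring identity), which decouples the map into two smaller copies of $L$ up to a unipotent bidiagonal factor $I+S$ on the odd part. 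Both give the recurrence $T(n)=2T(n/2)+O(n)$ and hence $O(n\log n)$ additions in $\GF(2)$. What each buys: the paper's top/bottom block split translates immediately into the recursive reversible CNOT circuit used later in Section 5 (and the inverse comes for free by reversal), whereas your parity split handles arbitrary $n$ without padding and yields an explicit direct algorithm for $L_{n+1}^{-1}$ via $O(n)$ back-substitution of the factor $I+S$; being linear over $\GF(2)$, it too can be compiled into CNOTs, though that compilation is not spelled out in your sketch.
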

\noindent From this theorem, an efficient (classical) circuit for 
$L_{n+1}$ can be built
with $O(n\log n)$ CNOT gates.  The intuition is that $L_{n+1}$ 
can be decomposed as a product of $O(\log n)$ matrices,
each with $O(n)$ 1's. Since the linear transformation can be done with
$\mathrm{GF}(2)$ additions and multiplications, it can be implemented
with CNOT gates.  A circuit for $L^{-1}_{n+1}$ can be obtained by 
running the circuit for $L_{n+1}$ backwards.  

Here we prove Theorem \ref{thm4132} with a different approach -- a
recursive construction that also requires $O(n\log n)$ CNOT gates.
First consider $L_k$ as defined in Eq.~\eqref{lmatrix} where $k = 2^t$ 
is a power of $2$.   
Taking $k=8$ as an example,
\begin{align}
L_8 =
\left[
\begin{array}{cccc|cccc}
1 & 0 & 0 & 0 & 0 & 0 & 0 & 0 \\
0 & 1 & 0 & 1 & 0 & 0 & 0 & 1 \\
0 & 0 & 1 & 0 & 0 & 0 & 1 & 0 \\
0 & 0 & 0 & 1 & 0 & 1 & 0 & 1 \\ \hline
0 & 0 & 0 & 0 & 1 & 0 & 0 & 0 \\
0 & 0 & 0 & 0 & 0 & 1 & 0 & 1 \\
0 & 0 & 0 & 0 & 0 & 0 & 1 & 0 \\
0 & 0 & 0 & 0 & 0 & 0 & 0 & 1 
\end{array}
\right].
\end{align}

\noindent We use two properties of $L_k$ when 
$k=2^t$ (see $L_8$ above for an illustration):
\begin{itemize}
  \setlength\itemsep{0em}
  \item[(1)] Each $L_k$ consists of three non-zero blocks: two identical
diagonal blocks which is $L_{k/2}$ and a block above the diagonal
which we call $\Gamma_{k/2}$ (which is almost like $L_{k/2}$
turned upside down).\footnote{Note here we use $\Gamma$ 
for something different from 
the previous section.}
  \item[(2)] The first row of $\Gamma_{k/2}$ contains only zero's.  The
$(i{+}2)^{\rm th}$ row of $\Gamma_{k/2}$ is the $\left(\frac{k}{2}-i\right)^{\rm
  th}$ row of $L_{k/2}$ (where $0 \leq i \leq k/2-2$).
\end{itemize}

We first explain why these two properties hold, as illustrated in Figure
\ref{fig4133}.  Take the Pascal's
triangle (mod 2) with $k$ rows, and rotate the entries 90 degrees
counter-clockwise.  This gives the (nontrivial) $(i,j)$ entries of
$L_k$ when $i \geq j$ and $i-j$ is even.
The stated properties for $L_k$ primarily come from the fact that
Pascal's triangle (mod 2) with $k$ rows consists of 4 triangles of
$k/2$ rows, the middle one only has zero entries, and the other three
are identical copies of Pascal's triangle (mod 2) with $k/2$ rows.
Also, the triangle is always left-right symmetric.  Proofs of these
are readily obtained from Lucas' Theorem%
\footnote{Consider the base-$p$ representation of integers $m$ and $n$, 
where $m \geq n \geq 0$, and $p$ is prime:
  $m = m_0 + m_1p + \ldots + m_k p^k, 
  n = n_0 + n_1p + \ldots + n_k p^k.$ 
Then,
  $\binom{m}{n} \equiv
  \binom{m_0}{n_0}\binom{m_1}{n_1}\ldots\binom{m_k}{n_k} \bmod p$
}
\ \cite{Fine1947} (a more accessible proof can be found online at \cite{agnesscott}).

\begin{figure}[h!]
  \begin{center}
  \includegraphics[width=0.6\columnwidth]{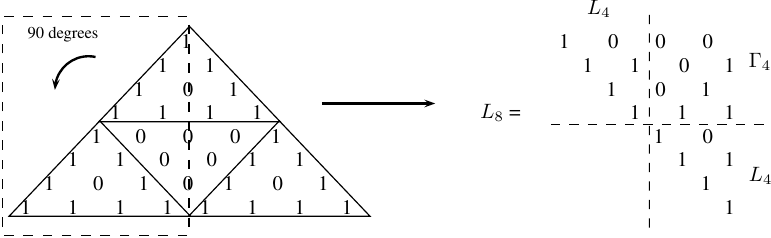}
\caption{\small An illustration of the Pascal's triangle structure of the
$L_8$ matrix. Taking the left half of an $8$-level Pascal's triangle and 
rotating
counter-clockwise by 90 degrees, we obtain the $L_8$ matrix. Note that the
block $\Gamma_{4}$ is the horizontal reflection of the lower diagonal
block $L_{4}$ with a downward shift, as described by property (2).
\label{fig4133}}
\end{center}
\end{figure}

If we multiply $L_k$ to a vector, 
\be
  \left[
	\begin{array}{c|c}
	  L_{k/2} & \Gamma_{k/2} \\ \hline
	  0 & L_{k/2}
  \end{array}\right]
  \left[\begin{array}{c}
	  v_1 \\
	  v_2
  \end{array}\right]
  = 
  \left[\begin{array}{c}
	  L_{k/2} \; v_1 + \Gamma_{k/2} \; v_2 \\
	  L_{k/2} \; v_2
  \end{array}\right].
\ee
Due to the relation between $\Gamma_{k/2}$ and $L_{k/2}$, the above mapping 
can be induced by the unitary $\mathscr{L}_k$ implemented by  
the circuit in Figure~\ref{fig4131}.  
Using standard recursion analysis, the circuit contains $O(k\log k)$
CNOT gates. 
\begin{figure}[h!]
  \begin{center}
  \includegraphics[width=0.4\columnwidth]{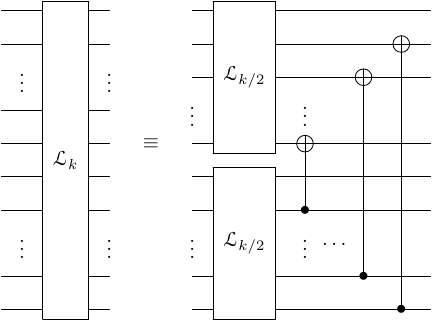}
\caption{\small An example of representation conversion circuit which demonstrates 
the recursive structure.\label{fig4131}}
\end{center}
\end{figure}

For general values of $k$, let $t = \lceil \hspace*{0.3ex} \log_2 k
\rceil$ and apply the above construction to obtain $\mathscr L_{2^t}$.
We restrict the circuit for $\mathscr L_{2^t}$ to a sub-circuit with the first $k$ registers and
the CNOT gates between them to obtain a circuit for $\mathscr{L}_k$ that still has size $O(k\log k)$.

A circuit for $\mathscr{L}_k^{-1}$ converting a vector from the self-dual
representation to the polynomial representation can be obtained by running
the circuit for $\mathscr{L}_k$ backwards.  The first qubit which corresponds to
the additional ``1'' in $\mathcal{S'}$ is always
$\left|0\right>$ and it remains untouched during the computation.
Therefore, the first qubit can be safely removed in the
circuit.  It is kept in the analysis for conceptual 
simplicity.  

Finally, we are ready to give the recipe for the fast multiplication
of two elements $a,r\in\mathrm{GF}(2^n)$ represented in the basis
$\mathcal{S'}$:
\begin{enumerate}
  \item Insert a zero at the beginning of the vector representations
    of $a$ and $r$ to get the vectors $a'$ and $r'$ with respect to
    the spanning set $\mathcal{S'}$.
  \item Convert $a'$ and $r'$ to new representations $\tilde{a}$ and
    $\tilde{r}$ with respect to the polynomial spanning set
    $\mathcal{T}$, using the circuit for $\mathscr{L}_{n+1}^{-1}$.  
  \item Multiply $\tilde{a}$ by $\tilde{r}$ using Sch{\"o}nhage's
    multiplication algorithm \cite{schonhage1977} 
    (denoted by $\tilde{\Pi}_r$ in figure \ref{fig4132}). The result is a vector
    with respect to the polynomial spanning set $\{1,
    \beta+\beta^{-1},(\beta+\beta^{-1})^2,\ldots,(\beta+\beta^{-1})^{2n}\}$.
  \item 
    Apply the unitary $\mathscr{L}_{2n+1}$ to the vector above 
    so it is represented in the spanning set $\{1, \beta+\beta^{-1},
    \beta^2+\beta^{-2}, \ldots, \beta^{2n}+\beta^{-2n}\}$.
    Then, discard the first element which is always 0.
    The result is the vector representation with respect
    to the spanning set $\{\beta+\beta^{-1}, \beta^2+\beta^{-2}, \ldots,
    \beta^{2n}+\beta^{-2n}\}$. Since $\beta$ is the $(2n+1)$-th root of
	unity in $\mathrm{GF}(2^{2n})$ (i.e., $\beta^{2n+1} = 1$), we have
    $\beta+\beta^{-1} = \beta^{2n}+\beta^{-2n}$, $\beta^2+\beta^{-2} =
    \beta^{2n-1}+\beta^{-2n+1}, \ldots$. Therefore with
    $n$ additional $\mathrm{GF}(2)$ CNOTs, the resulting vector can be
    reduced to the one with respect to the permuted self-dual normal basis
    $\mathcal{S}$.
\end{enumerate}
In Step 3, Sch{\"o}nhage's multiplication algorithm \cite{schonhage1977}
uses a radix-3 FFT algorithm to do fast convolution. Readers not familiar
with German may refer to \cite{von1999} for another description of
Sch{\"o}nhage's algorithm. This multiplication algorithm requires $O(n\log
n\log\log n)$ operations (additions and multiplications).  Additions  can
be implemented with CNOT gates.  Multiplications involved in this radix-3
FFT are the ones between an element of the polynomial ring 
$\mathrm{GF}(2)[x]/\left<x^{2m}+x^m+1\right>$ (for certain $m$) and $x$
(which is a $3m$-th root of unity in
$\mathrm{GF}(2)[x]/\left<x^{2m}+x^m+1\right>$). The result of this kind of
multiplications is a shift of coefficients and it can be implemented by
SWAP gates. Therefore, the whole multiplication method can be implemented
with $O(n\log n\log\log n)$ CNOT gates. As an example, Figure \ref{fig4132}
shows the implementation of $\Pi_r$ in $\mathrm{GF}(2^5)$.

\begin{figure}[h!]
  \begin{center}
	\includegraphics[width=0.6\columnwidth]{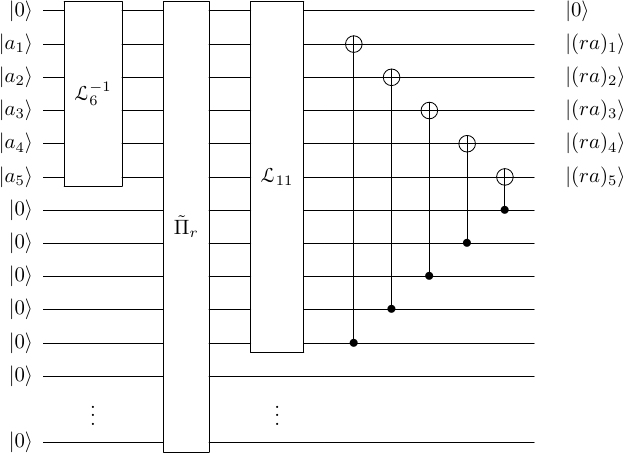}
\caption{\small The implementation of $\Pi_r$ for multiplication of $a$ by $r$ where
$a,r\in\mathrm{GF}(2^5)$. $\tilde{\Pi}_r$ is an implementation of Sch{\"o}nhage's
multiplication algorithm. The input and output bits are with respect to
a self-dual basis.\label{fig4132}}
\end{center}
\end{figure}

It is easy to show that the radix-3 FFT algorithm has logarithmic depth: if
the current step of this algorithm is working on a polynomial of degree
$k$, in the next recursion step, it will work in parallel on three
polynomials of degree $\lceil k/3 \rceil$. The total number of steps (i.e.,
the depth of the circuit) is therefore $O(\log n)$ for a polynomial of
degree $n$.  To multiply two polynomials of degree at most $n$, each
recursion step essentially consists of three components: computing the
radix-3 FFT, recursively doing $\lceil \sqrt{n} \, \rceil$ multiplications of
polynomials of degree at most $\lceil \sqrt{n} \, \rceil$ (in parallel), and
computing the inverse radix-3 FFT. Using a similar analysis, the depth of
the polynomial multiplication circuit is $O(\log(n) + \log(n^{1/2}) +
\log(n^{1/4}) + \ldots + 1) = O(\log n)$.  The logarithmic depth of the
basis conversion circuit can be shown by its recursive structure (e.g.,
Figure 4). Therefore, the depth of the circuit for $\Pi_r$ is $O(\log n)$.

The ancillary qubits can be reset to $\ket{0}$ using standard
techniques in reversible computing.  The result is a circuit for 
$\Pi_r$ for any non-zero $r \in \GF(2^n)$ with $O(n\log n\log\log n)$
CNOT gates .

\section{$\widetilde{O}(n)$ implementations based on polynomial basis for 
$\GF(2^n)$}\label{sec:mult-poly}

In this section we present alternative circuit constructions for
unitary $2$-designs in terms of \textit{polynomial} bases for
$\GF(2^n)$.
The advantage of using polynomial bases is that the $\SL_2(\GF(2^n))$
generator
$\bigl(\begin{smallmatrix}
	  r & 0\\ 0 & r^{-1}
\end{smallmatrix}\bigr)$ for $r \neq 0$ 
is straightforward to implement%
\footnote{
In this section, ``implement a mapping'' abridges ``implement the
unitary that induces a mapping according to Definitions~\ref{def:induce} or
\ref{def:induce2}'' and so on.} 
with $O(n \log n \log\log n)$ Clifford 
gates with depth $O(\log n)$, as described at the end of Section~\ref{sec:mult}.

For the generator, 
$\bigl(\begin{smallmatrix}
	  1 & 0\\ 1 & 1
\end{smallmatrix}\bigr)$ 
we provide two different $\widetilde{O}(n)$ circuit implementations 
in Subsections \ref{sec:nonclif} and \ref{sec:clif}.
However, we do not currently know how to implement the last 
generator  
$\bigl(\begin{smallmatrix}
	  0 & 1\\ 1 & 0
\end{smallmatrix}\bigr)$  
using only $\widetilde{O}(n)$ gates.
To circumvent this problem, we modify our ensemble for
the unitary $2$-design slightly.
Instead of implementing every element of $\SL_2(\GF(2^n))$, we
implement the elements that are lower triangular (i.e., $\LT_2(\GF(2^n))$), and we do this using $\widetilde{O}(n)$ gates. 
This follows directly from combining the implementations for 
$\bigl(\begin{smallmatrix}
	  r & 0\\ 0 & r^{-1}
\end{smallmatrix}\bigr)$ 
and
$\bigl(\begin{smallmatrix}
	  1 & 0\\ 1 & 1
\end{smallmatrix}\bigr)$ 
and by using Lemma \ref{lemma:gaussian_elim-triangular}.  
We can also implement all $M \in \UT_2(\GF(2^n))$ {\em with respect to
  the dual basis} (we denote this unitary $\widehat{U}_M$), 
because with respect to the dual basis, the operations that induce 
$\bigl(\begin{smallmatrix}
	  r & 0\\ 0 & r^{-1}
\end{smallmatrix}\bigr)$
and
$\bigl(\begin{smallmatrix}
	  1 & 1\\ 0 & 1
\end{smallmatrix}\bigr)$
are $H^{\otimes n} \Pi_r H^{\otimes n}$ and $H^{\otimes n} \Gamma_V
H^{\otimes n}$ (respectively).  
In Subsection~\ref{sec:bypass} we show how to combine the
implementations of $\LT_2(\GF(2^n))$ in the primal basis and
$\UT_2(\GF(2^n))$ in the dual basis to achieve Pauli mixing.
This results in an exact unitary $2$-design with the desired 
complexity.

\subsection[Implementation of 
ADD with $O(n \log n \log\log n)$ non-Clifford gates]{Implementation of 
$\bigl(\begin{smallmatrix}
	  1 & 0\\ 1 & 1
\end{smallmatrix}\bigr)$ with $O(n \log n \log\log n)$ non-Clifford gates}
\label{sec:nonclif}

Here we provide an implementation of 
$\bigl(\begin{smallmatrix}
	  1 & 0\\ 1 & 1
\end{smallmatrix}\bigr)$
using
$O(n \log n \log\log n)$ gates that can be organized so as to have depth 
$O(\log n)$.
This construction uses non-Clifford gates but they compose to a 
Clifford unitary.  
(The next subsection contains a slightly less efficient construction using 
only Clifford gates.)

The operation that we need to implement is $\Gamma_W$, defined in
Eqs.~\eqref{eq:Gamma_V} and \eqref{eq:Gamma_V-alternate} (with $V$ set
to $W$).
Recall that $W$ is the primal-to-dual basis conversion matrix of
Eq.~\eqref{basis-convert}.  
Since we are setting the primal basis to a polynomial basis, $W$ is a
\textit{Hankel matrix}: for all $j, k, j', k'$, if $j+k = j' + k'$
then $W_{jk} = W_{j'k'}$.  We make use of this property in this and 
the next subsection.
From Eq.~\eqref{eq:Gamma_V}, 
\begin{align}\label{eq:secondU_W}
\Gamma_W \ket{c} 
&=
\mbox{\Large $i$}^{\,\sum_{j=1}^{n} \sum_{k=1}^{n} W_{jk}c_jc_k}\ket{c}.
\end{align}
Note that it suffices to compute the exponent of $i$ using mod $4$
arithmetic, and the exponent has the form
\begin{align}
\label{iexponent}
\mbox{\Large\bf [}
\, c_{1}\, \cdots\, c_{n}\,
\mbox{\Large\bf ]}\,
W
\begin{bmatrix}
c_{1} \\
\vdots \\
c_{n}
\end{bmatrix}\,.
\end{align}

This problem is related to the problem of computing convolutions.
Recall that the \textit{convolution} of two $d$-dimensional vectors $u$ and $v$ is
defined as the $(2d-1)$-dimensional vector $w$ such that 
\begin{align}
w_0 + &w_1 T + w_2^2 T^2 + \cdots + w_{2d-2}T^{2d-2} \\
&= \left(u_0 + u_1 T
+ v_2^2 T^2 + \cdots + u_{d-1}T^{d-1}\right)\left(v_0 + v_1 T + v_2^2 T^2 + \cdots
+ v_{d-1}T^{d-1}\right)
\end{align}
as polynomials over $T$.
The product of a Hankel matrix with a vector reduces to convolution, 
as shown in the next proposition.

\begin{prop}\label{prop:convolution}
The product of an $n \times n$ Hankel matrix with an $n$-dimensional vector reduces to the problem of computing the convolution of two $(2n-1)$-dimensional vectors.
\end{prop}

\begin{proof}
This can be seen by comparing 
\begin{align}\label{eq:convolution}
\begin{bmatrix}
x_1    & x_2     & \cdots & x_n \\
x_2    & x_3     & \cdots & x_{n+1} \\
\vdots & \vdots  & \ddots & \vdots \\
x_n    & x_{n+1} & \cdots & x_{2n-1}
\end{bmatrix}
\begin{bmatrix}
y_{1} \\
y_{2} \\
\vdots \\
y_{n}
\end{bmatrix}
\end{align}
with the middle components of the convolution of $[x_1, \dots, x_{2n-1}]$ and 
$[0, \dots, 0, y_n, \dots, y_1]$.
The convolution is a $(4n-3)$-dimensional vector that is the vector 
in Eq.~\eqref{eq:convolution} padded with $2n-2$ components on the left and $n-1$ components on the right. 
\end{proof}

\noindent
Returning to the computation of Eq.~(\ref{iexponent}), we can compute 
$e_1, \dots, e_{n} \in \mathbb{Z}_4$, given by  
\begin{align}
\begin{bmatrix}
e_{1} \\
\vdots \\
e_{n}
\end{bmatrix}
=
W
\begin{bmatrix}
c_{1} \\
\vdots \\
c_{n}
\end{bmatrix},
\end{align}
with a fast algorithm for polynomial multiplication%
\footnote{We conjecture that it is possible to slightly reduce the gate count for this construction from $O(n \log n \log \log n)$ to $(n \log n)2^{O(\log^* n)}$ by employing the improved algorithms for integer multiplication initiated by
F\"{u}rer~\cite{Furer2009}.} 
over the ring
$\mathbb{Z}_4$ using only $O(n \log n \log\log n)$ gates 
(see, for example, Theorem~8.23 in~\cite{von1999}).
Then Eq.~(\ref{iexponent}) for the exponent for $i$ in $\Gamma_W$ can
be obtained from the $2n$ ancillary qubits containing $e_1, \dots,
e_{n}$ (each $e_j$ is a two-bit string) and the $n$ qubits containing
$c_{1}, \dots, c_{n}$ as follows.  For each $j \in \{1,\dots,n\}$,
apply a controlled-$Z$ gate between the high order bit of $e_j$ and
$c_{j}$ and apply a controlled-$S$ gate between the low order bit of
$e_j$ and~$c_{j}$.

This construction explicitly uses the non-Clifford controlled-$S$
gates, since the underlying ring is $\mathbb{Z}_4$ and addition mod 4
requires non-Clifford gates.
The construction uses polynomial multiplications, so it follows
from the circuit depth analysis in Section~\ref{sec:mult} that the
circuit depth of this construction is $O(\log n)$.  In the next
subsection, we describe a different procedure for implementing
$\bigl(\begin{smallmatrix} 1 & 0\\ 1 & 1 \end{smallmatrix}\bigr)$ that
is slightly less efficient, but uses only Clifford gates.

\subsection[Implementation of 
ADD with $O(n \log^2 n \log\log n)$ Clifford gates]{Implementation of 
$\bigl(\begin{smallmatrix}
	  1 & 0\\ 1 & 1
\end{smallmatrix}\bigr)$ with $O(n \log^2 n \log\log n)$ Clifford gates}
\label{sec:clif}

Here we provide an implementation of 
$\bigl(\begin{smallmatrix}
	  1 & 0\\ 1 & 1
\end{smallmatrix}\bigr)$
using $O(n \log^2 n \log\log n)$ Clifford gates that can be organized
so as to have depth $O(\log^2 n)$.
%
%
In the previous subsection, the computation is reduced to a
convolution in mod 4 arithmetic, and we needed non-Clifford gates to
compute this efficiently.  Here, we use a recursive procedure that is
based on convolutions in mod 2 arithmetic, which can be performed
efficiently with Clifford gates.  We assume all notation from 
the previous subsection.  

To simplify our presentation, we assume that $n$ is a power of 2 (though our approach can be generalized to arbitrary $n$ by dividing unevenly in the recursive step, as $n = \lfloor \frac n2 \rfloor + \lceil \frac n2 \rceil$).
We divide $W$ into four $\frac n2 \times \frac n2$ blocks as
\begin{align}
W = \begin{bmatrix}
W^{(11)} & W^{(12)} \\
W^{(21)} & W^{(22)}
\end{bmatrix}
\end{align}
where $W^{(11)}, W^{(12)}, W^{(21)}, W^{(22)}$ are $\frac n2 \times \frac n2$ Hankel matrices and $W^{(12)} = W^{(21)}$.
Define
\begin{align}
A = \begin{bmatrix}
0 & W^{(12)} \\
W^{(21)} & 0
\end{bmatrix}, \ \ \ 
B = \begin{bmatrix}
W^{(11)} & 0 \\
0 & 0
\end{bmatrix}, \ \ \ 
C = \begin{bmatrix}
0 & 0 \\
0 & W^{(22)}
\end{bmatrix}.
\end{align}
Clearly, 
\begin{align}
\begin{bmatrix}
I & 0 \\ W & I
\end{bmatrix}
=
\begin{bmatrix}
I & 0 \\ A & I
\end{bmatrix}
\begin{bmatrix}
I & 0 \\ B & I
\end{bmatrix}
\begin{bmatrix}
I & 0 \\ C & I
\end{bmatrix}\end{align}
so we can implement
$\Gamma_A$, $\Gamma_B$, and $\Gamma_C$ 
separately, and compose them to obtain $\Gamma_W$.

We first show how to implement $\Gamma_A$
using $O(n \log n \log\log n)$ gates.
From Eq.~\eqref{eq:Gamma_V-alternate}, 
\begin{align}\label{eq:rec-phases}
\Gamma_A\ket{c} = 
(-1)^{\sum_{j=1}^{n/2} \sum_{k=n/2+1}^{n} W_{jk} c_j c_k}\ket{c} \,.
\end{align}
The expression for the exponent of ${-}1$ above can be computed in 
mod 2 arithmetic, and has the form
\begin{align}
\mbox{\Large\bf [}\,
c_{1} \,
\cdots\, c_{\frac n2}\,
\mbox{\Large\bf ]}\,
W^{(12)}
\begin{bmatrix}
c_{\frac n2+1}\\
\vdots\\
c_{n}
\end{bmatrix}\,.
\end{align}
Once again, by Proposition~\ref{prop:convolution}, the above product 
of a Hankel matrix with a vector reduces to convolution, and hence 
polynomial multiplication over the field $\GF(2)$.  
We can compute the bits $e_{\frac n2+1}, \dots, e_n$, defined as 
\begin{align}
\begin{bmatrix}
e_{\frac n2+1} \\
\vdots \\
e_{n}
\end{bmatrix}
=
W^{(12)}
\begin{bmatrix}
c_{\frac n2+1} \\
\vdots \\
c_{n}
\end{bmatrix}
\end{align}
in $\frac n2$ ancillary registers using only $O(n \log n \log\log n)$ gates.
Moreover, since the convolution is with respect to entries of $W$---which are constants in our setting---all the gates can be Clifford gates (in fact, CNOT gates).
Then we can apply $O(n)$ controlled-$Z$ gates between the bits $e_{\frac n2+1}, \dots, e_{n}$ and $c_{1}, \dots, c_{\frac n2}$ (respectively) to apply the phase that correctly implements $\Gamma_A$.

What remains is to compute $\Gamma_B$ and $\Gamma_C$.
Each of these is equivalent to computing an instance of the original problem
of size $n/2$.
In the bottom of the recurrence (when $W$ is a $1 \times 1$ matrix), a single $S$ (phase) gate computes $\Gamma_W$.
The gate cost $G(n)$ of the recursive procedure satisfies the recurrence 
\begin{align}
G(n) = 2\,G(n/2) + O(n \log n \log\log n),
\end{align}
whose solution satisfies
\begin{align}
G(n) \in O(n \log^2 n \log\log n).
\end{align}
This recursive construction needs polynomial multiplication in each
recursion step. According to the circuit depth analysis for polynomial
multiplication in Section~\ref{sec:mult}, the circuit depth is $O(\log n + \log \frac{n}{2} + \ldots + 1) = O(\log^2n)$.

\subsection{Pauli mixing from $\LT_2(\GF(2^n))$ and 
$\UT_2(\GF(2^n))$ in different bases}\label{sec:bypass}

Here, we show how to achieve Pauli mixing by implementing $U_M$ for 
$M \in \LT_2(\GF(2^n))$ and $\widehat{U}_M$ for $M \in \UT_2(\GF(2^n))$.
We will explain our approach in two parts. In the first part, we
explain the actual generation and construction of the ensemble of
unitaries---which is simple, but the resulting ensemble no longer
corresponds to $\SL_2(\GF(2^n))$, so it is not clear that the ensemble is a 
unitary 2-design.
In the second part, we prove that the new ensemble is Pauli mixing,
so it is indeed a unitary 2-design.

The construction is based on the following decomposition of elements of 
$\SL_2(\GF(2^n))$, along the lines of Eq.~\eqref{eq:decomposition}: 
\begin{align}\label{eq:decomposition2}
\begin{pmatrix} \alpha & \gamma \\ \beta & \delta
\end{pmatrix} = 
\begin{cases}
\begin{pmatrix} 1 & 0\\
 \frac\beta\alpha&
1\end{pmatrix}
\begin{pmatrix}
\alpha & \gamma\\
0 & \alpha^{-1}\end{pmatrix}
& \mbox{if $\alpha \neq 0$} \vspace*{2mm}\\
\begin{pmatrix}
\gamma & 0\\ 
\delta & \gamma^{-1}
\end{pmatrix}
\begin{pmatrix}
0 & 1 \\
1 & 0
\end{pmatrix} 
& \mbox{if $\alpha = 0$.} 
\end{cases}
\end{align}
Note that all matrices in this decomposition are lower triangular, upper triangular, or 
$\bigl(\begin{smallmatrix}
0 & 1 \\ 1 & 0
\end{smallmatrix}\bigr)$.
Lower triangular matrices can be implemented in the primal basis; upper triangular matrices can be implemented in the dual basis; and 
$\bigl(\begin{smallmatrix}
0 & 1 \\ 1 & 0
\end{smallmatrix}\bigr)$ 
can be implemented in any self-dual basis (by $H^{\otimes n}$).

The procedure to generate an element of the ensemble is as follows.

\medskip

\begin{quote}
\noindent\textbf{Generation procedure:}
\begin{description}
\item[1.\ \,]
Sample $\bigl(\begin{smallmatrix}
\alpha & \gamma \\ \beta & \delta
\end{smallmatrix}\bigr) \in \SL_2(\GF(2^n))$
according to the uniform distribution.
\item[2.1] If $\alpha \neq 0$ then \newline
\hspace*{4mm} set $M_1$ to 
$\bigl(\begin{smallmatrix}
\alpha & \gamma \\  0 & \,\alpha^{-1}
\end{smallmatrix}\bigr)$ \vspace*{0.5mm} \newline
\hspace*{4mm} set $M_2$ to 
$\bigl(\begin{smallmatrix}
1 & 0 \\ \beta/\alpha\, & 1
\end{smallmatrix}\bigr)$ \newline
\hspace*{4mm} construct the Clifford group element 
$U_{M_2} \circ \widehat{U}_{M_1}$ (composition of two circuits).
\item[2.2] Else if  $\alpha = 0$ then \newline
\hspace*{4mm} set $M$ to $\bigl(\begin{smallmatrix}
\gamma & 0 \\ \delta & \gamma^{-1}
\end{smallmatrix}\bigr)$ \newline
\hspace*{4mm} construct the Clifford group element 
$U_{M} \circ H^{\otimes n}$ (composition of two circuits).
\end{description}
\end{quote}
Note that the composition in step 2.1 is along the lines of the first case of Eq.~\eqref{eq:decomposition2} and the composition in step 2.2 is along the lines of the second case of Eq.~\eqref{eq:decomposition2}.
In each case, a Clifford group element with gate complexity 
$O(n \log n \log\log n)$ (or Clifford-gate complexity $O(n \log^2 n \log\log n)$) results; however, the subset of all Cliffords that can arise by this procedure does not have the structure of $\SL_2(\GF(2^n))$ because of the disparate coordinate systems being used for the components.
This concludes the description of the generation and construction of elements of the ensemble.

We now explain why the ensemble resulting from the above procedure is a unitary 2-design in spite of the mismatched bases used to convert the matrices arising from Eq.~\eqref{eq:decomposition2} into Clifford unitaries.
First, we consider the mixing property over the Paulis that results from $U_M$ for a random $M \in \LT_2(\GF(2^n))$, and similarly for $\UT_2(\GF(2^n))$.
Partition the non-zero elements of $\GF(2^n) \times \GF(2^n)$ into these two (disjoint) subsets:
\begin{align}
R_1 &= \bigl\{
\bigl(\begin{smallmatrix}
a \\ b 
\end{smallmatrix}\bigr)
\in \GF(2^n)\times \GF(2^n) : \mbox{$a = 0$ and $b \neq 0$} 
\bigr\} \\[1mm]
R_2 &= \bigl\{
\bigl(\begin{smallmatrix}
a \\ b 
\end{smallmatrix}\bigr)
\in \GF(2^n)\times \GF(2^n) : \mbox{$a \neq 0$} 
\bigr\}.
\end{align}
It is straightforward to verify that a random element $M \in \LT_2(\GF(2^n))$ uniformly mixes within $R_1$ and it uniformly mixes within $R_2$ in the following sense.
\begin{lemma}\label{lemma-LT-mixing}
Let $M\in \LT_2(\GF(2^n))$ be chosen uniformly at random. 
Then, for any 
$\bigl(\begin{smallmatrix}
	  a \\ b
\end{smallmatrix}\bigr)
\in R_1$, 
the distribution
$M\bigl(\begin{smallmatrix}
	  a \\ b
\end{smallmatrix}\bigr)$
is uniform over $R_1$
and, for any 
$\bigl(\begin{smallmatrix}
	  a \\ b
\end{smallmatrix}\bigr)
\in R_2$, 
the distribution
$M\bigl(\begin{smallmatrix}
	  a \\ b
\end{smallmatrix}\bigr)$
is uniform over $R_2$. 
\end{lemma}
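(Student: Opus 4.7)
The plan is to parametrize $\LT_2(\GF(2^n))$ explicitly and then, for each of the two orbits $R_1$ and $R_2$, show by direct counting that every target is hit with the same multiplicity under the map $M \mapsto M\bigl(\begin{smallmatrix} a \\ b \end{smallmatrix}\bigr)$. Writing a generic lower triangular matrix as $M = \bigl(\begin{smallmatrix} \alpha & 0 \\ \beta & \alpha^{-1} \end{smallmatrix}\bigr)$ with $\alpha \in \GF(2^n)$ non-zero and $\beta \in \GF(2^n)$ free, we get $|\LT_2(\GF(2^n))| = (2^n-1)\cdot 2^n$, and
\[
M\begin{pmatrix} a \\ b \end{pmatrix} = \begin{pmatrix} \alpha a \\ \beta a + \alpha^{-1} b \end{pmatrix}.
\]

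For $\bigl(\begin{smallmatrix} a \\ b \end{smallmatrix}\bigr) \in R_1$ (so $a=0$ and $b \neq 0$), the image is $\bigl(\begin{smallmatrix} 0 \\ \alpha^{-1} b \end{smallmatrix}\bigr)$, which lies in $R_1$ since $\alpha^{-1} b \neq 0$. For any target $\bigl(\begin{smallmatrix} 0 \\ b' \end{smallmatrix}\bigr) \in R_1$, the equation $\alpha^{-1} b = b'$ forces $\alpha = b/b'$ uniquely, while $\beta$ remains completely free. Hence exactly $2^n$ matrices $M$ realize each target. Since $|R_1| = 2^n - 1$ and $2^n \cdot (2^n - 1) = |\LT_2(\GF(2^n))|$, the induced distribution is uniform on $R_1$.

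For $\bigl(\begin{smallmatrix} a \\ b \end{smallmatrix}\bigr) \in R_2$ (so $a \neq 0$), the first coordinate $\alpha a$ is non-zero, so the image lies in $R_2$. For any target $\bigl(\begin{smallmatrix} a' \\ b' \end{smallmatrix}\bigr) \in R_2$, the equation $\alpha a = a'$ fixes $\alpha = a'/a$ uniquely, and then $\beta a + \alpha^{-1} b = b'$ determines $\beta$ uniquely (since $a \neq 0$ is invertible and we are in characteristic $2$, so $\beta = (b' + \alpha^{-1} b)/a$). Thus each target in $R_2$ has exactly one preimage; since $|R_2| = (2^n - 1)\cdot 2^n = |\LT_2(\GF(2^n))|$, the distribution is uniform on $R_2$.

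There is no substantive obstacle here: the argument is essentially that $\LT_2(\GF(2^n))$ acts transitively on each of $R_1$ and $R_2$ with easily computed (in fact trivial, in the $R_2$ case) stabilizers, so uniformity on the source distribution translates to uniformity on each orbit. The only thing to watch is to verify that $R_1$ and $R_2$ are genuinely invariant, which is immediate from the explicit form of $M\bigl(\begin{smallmatrix} a \\ b \end{smallmatrix}\bigr)$ above.
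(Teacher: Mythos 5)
Your proof is correct: the paper states this lemma as ``straightforward to verify'' and omits the argument, and your explicit parametrization of $\LT_2(\GF(2^n))$ with the fiber count ($2^n$ preimages per target in $R_1$, exactly one per target in $R_2$) is precisely the intended direct verification. Nothing further is needed; your closing orbit--stabilizer remark also matches the transitivity-style reasoning the paper uses for the analogous Lemma~\ref{lemma41} on $\SL_2(\GF(2^n))$.
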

A similar result holds for $\UT_2(\GF(2^n))$ with $a$ and $b$ switched in the definitions of $R_1$ and $R_2$ (we omit the simple proof of this).

To illustrate the consequences of Lemma~\ref{lemma-LT-mixing} on the
Paulis, we can organize the $n$-qubit Paulis into rows and columns
where $X^{\lceil a \rceil} Z^{\lfloor b \rfloor}$ is in column $a$ and
row $b$.  We choose the first row and column to be labeled by 
$a=0$ and $b=0$ and call them the {\em zero row} and {\em zero column}.
The relative ordering of the remaining rows and columns does not affect our discussion; they are collectively called the {\em nonzero rows} and the {\em nonzero columns}.  
Figure~\ref{fig:pauli-mix} shows such a layout for the $n=2$ case where the identity Pauli is excluded.
\begin{figure}[h!]
{\small
\begin{center}
\vspace*{1mm}
\begin{tabular}{cccc}
     & \!$IX$\! & \!$XI$\! & \!$XX$ \vspace*{1mm}\\
$IZ$\! & \!$IY$\! & \!$XZ$\! & \!$XY$ \vspace*{1mm}\\
$ZI$\! & \!$ZX$\! & \!$YI$\! & \!$YX$ \vspace*{1mm}\\
$ZZ$\! & \!$ZY$\! & \!$YZ$\! & \!\!\!$YY$
\end{tabular}
\vspace*{-5mm}
\end{center}
}
\caption{\small A natural arrangement of all the non-trivial 2-qubit Paulis into rows and columns. Pauli mixing requires a uniform distribution on the $15$ items.}\label{fig:pauli-mix}
\end{figure}

Based on Lemma~\ref{lemma-LT-mixing}, conjugating by $U_M$ for a uniformly
distributed $M \in \LT_2(\GF(2^n))$ causes the zero column to mix
uniformly and also the complement of the zero column (consisting of all 
the nonzero columns) to mix uniformly.
We call this effect \textit{lower-triangular Pauli mixing}. Schematically, this is illustrated in Figure~\ref{fig:lower-triangular-mixing}.
\begin{figure}[h!]
\centering
\includegraphics[width=0.4\textwidth]{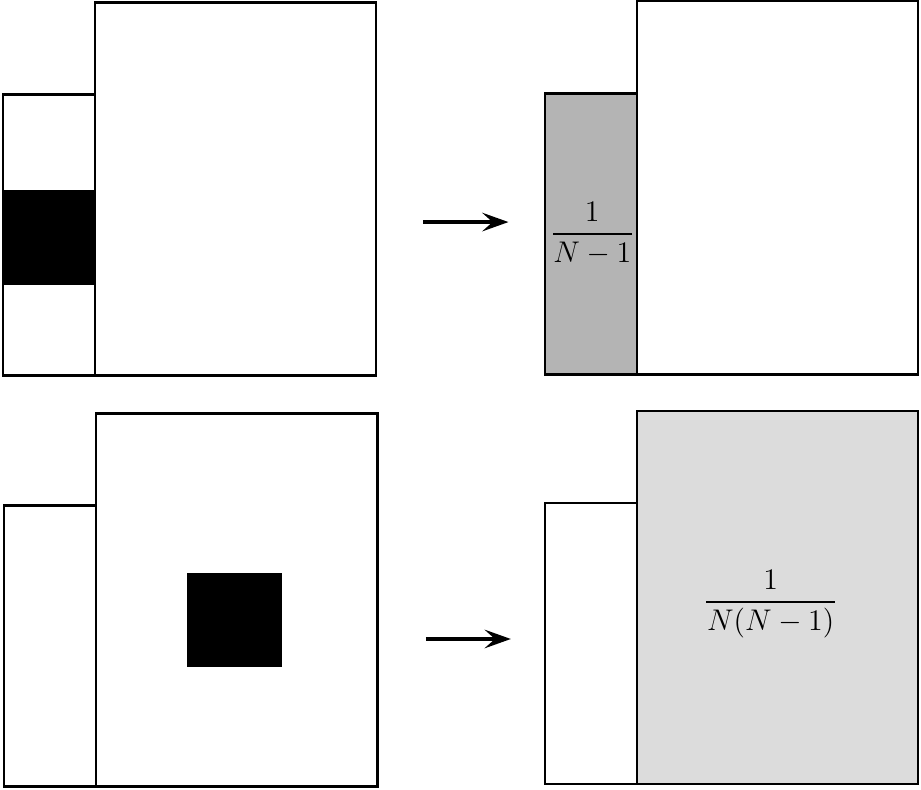}
\caption{\small Illustration of lower-triangular Pauli mixing. Top: mixing effect within the zero column. Bottom: mixing effect within the complement of the 
zero column ($N=2^n$).}\label{fig:lower-triangular-mixing}
\end{figure}
We can similarly define \textit{upper-triangular Pauli mixing}, corresponding to a transposed version of Figure~\ref{fig:lower-triangular-mixing}.
Sampling 
$M \in \UT_2(\GF(2^n))$ and then constructing the Clifford unitary 
$\widehat{U}_M$ achieves upper-triangular mixing.

We define one additional form of mixing, that we call \textit{column Pauli mixing}, illustrated in Figure~\ref{fig:column-mixing}, where Paulis in the zero column do not change and any Pauli in a nonzero column mixes within its column.
Such mixing is accomplished by choosing
$M = \bigl(\begin{smallmatrix}
1 & 0 \\ \beta & 1
\end{smallmatrix}\bigr)$
for a uniformly random $\beta \in \GF(2^n)$, and then constructing the Clifford unitary $U_M$.

\begin{figure}[h!]
\centering
\includegraphics[width=0.4\textwidth]{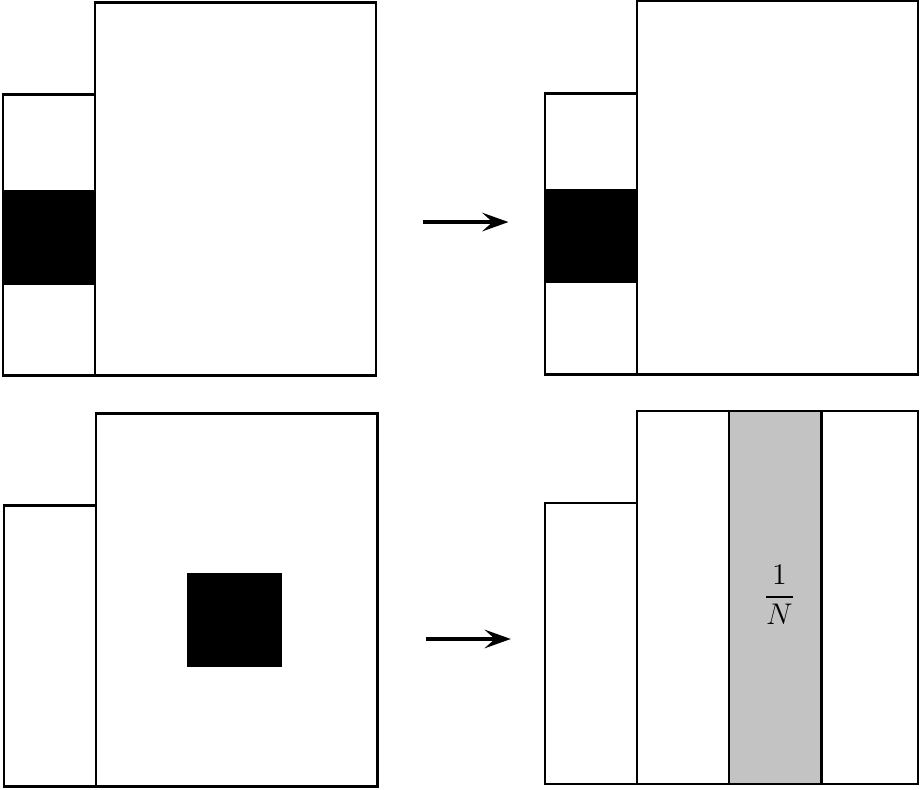}
\caption{\small Illustration of column mixing. Top: elements in the zero column stay put. Bottom: elements in any nonzero column uniformly mix within the column ($N=2^n$).}\label{fig:column-mixing}
\end{figure}

From Eq.~\eqref{eq:decomposition2}, we can deduce that our procedure is applying a probabilistic mixture of the two procedures below.
With probability $\frac{2^n}{2^n+1}$ it applies Procedure A; with probability 
$\frac{1}{2^n+1}$ it applies Procedure B ($\frac{1}{2^n+1}$ is the probability that $\alpha = 0$ for a random 
$\bigl(\begin{smallmatrix}
\alpha & \gamma \\ \beta & \delta
\end{smallmatrix}\bigr) \in \SL_2(\GF(2^n))$).


\begin{quote}
\noindent\textbf{Procedure A}:
\begin{enumerate}
\item Apply an upper-triangular Pauli mixing operation.
\item Apply a column Pauli mixing operation (independently from the first step).
\end{enumerate}
\end{quote}


\begin{quote}
\noindent\textbf{Procedure B}: 
\begin{enumerate}
\item Apply $H^{\otimes n}$ (thereby transposing the layout of the Paulis).
\item Apply a lower-triangular mixing operation.
\end{enumerate}
\end{quote}

We now prove that the above mixture of Procedures A and B results in
Pauli mixing.

\begin{lemma}
The stochastic process of applying either Procedure A or Procedure B, with probabilities $\frac{2^n}{2^n+1}$ and $\frac{1}{2^n+1}$ (respectively) is Pauli mixing.
\end{lemma}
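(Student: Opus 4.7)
The plan is to track, for each starting non-identity Pauli labeled by $(a_0,b_0)\in\GF(2^n)^2\setminus\{(0,0)\}$, the output distribution produced by the combined procedure, and verify it assigns mass $\tfrac{1}{2^{2n}-1}$ to every non-identity Pauli. I would split the analysis into two cases according to whether $b_0=0$ (so the starting Pauli lies in the first row) or $b_0\neq 0$, compute separately what Procedures A and B do in each case, and then form the convex combination with weights $\tfrac{2^n}{2^n+1}$ and $\tfrac{1}{2^n+1}$.

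For Procedure A, step 1 is upper-triangular Pauli mixing; the upper-triangular analog of Lemma~\ref{lemma-LT-mixing} (noted immediately after that lemma, with the roles of the two coordinates swapped) asserts that this maps $(a_0,0)$ uniformly over the first row $\{(a,0):a\neq 0\}$ and maps $(a_0,b_0)$ with $b_0\neq 0$ uniformly over $\{(a,b):b\neq 0\}$. Step 2 is column mixing, $(a,b)\mapsto(a,\beta a+b)$ for uniform $\beta\in\GF(2^n)$, which leaves $R_1$ pointwise fixed and spreads each column $a\neq 0$ uniformly over its $2^n$ entries. A direct calculation combining these two steps then shows that Procedure A applied to $(a_0,0)$ yields the uniform distribution on $R_2$ (each element with mass $\tfrac{1}{2^n(2^n-1)}$), while Procedure A applied to $(a_0,b_0)$ with $b_0\neq 0$ yields the biased distribution assigning mass $\tfrac{1}{2^n(2^n-1)}$ to each element of $R_1$ and mass $\tfrac{1}{2^{2n}}$ to each element of $R_2$.

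For Procedure B, $H^{\otimes n}$ permutes Pauli labels by $(a,b)\mapsto(b,a)$, so the post-$H$ pair lies in $R_1$ exactly when $b_0=0$ and in $R_2$ exactly when $b_0\neq 0$. Applying Lemma~\ref{lemma-LT-mixing} directly to the subsequent lower-triangular mixing step then yields: if $b_0=0$ the output is uniform on $R_1$ with each element receiving mass $\tfrac{1}{2^n-1}$; if $b_0\neq 0$ the output is uniform on $R_2$ with each element receiving mass $\tfrac{1}{2^n(2^n-1)}$.

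Finally I would assemble the contributions. If $b_0=0$, each element of $R_1$ receives $\tfrac{1}{2^n+1}\cdot\tfrac{1}{2^n-1}=\tfrac{1}{2^{2n}-1}$ from B alone, and each element of $R_2$ receives $\tfrac{2^n}{2^n+1}\cdot\tfrac{1}{2^n(2^n-1)}=\tfrac{1}{2^{2n}-1}$ from A alone. If $b_0\neq 0$, each element of $R_1$ receives $\tfrac{2^n}{2^n+1}\cdot\tfrac{1}{2^n(2^n-1)}=\tfrac{1}{2^{2n}-1}$ from A, while each element of $R_2$ receives $\tfrac{2^n}{2^n+1}\cdot\tfrac{1}{2^{2n}}+\tfrac{1}{2^n+1}\cdot\tfrac{1}{2^n(2^n-1)}$, which over the common denominator $(2^n+1)\cdot 2^n\cdot(2^n-1)$ collapses via $(2^n-1)+1=2^n$ to $\tfrac{1}{2^{2n}-1}$ as well. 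The main obstacle is just this bookkeeping: one must check that the bias of Procedure A, which overweights $R_1$ by a factor of $2^n$ relative to the uniform rate on $R_2$, is corrected exactly by giving Procedure B weight $\tfrac{1}{2^n+1}$, and this is no coincidence since $\tfrac{1}{2^n+1}$ is precisely the fraction of $M\in\SL_2(\GF(2^n))$ with $\alpha=0$, which is where the two branches of the decomposition~\eqref{eq:decomposition2} split.
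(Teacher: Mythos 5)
Your proposal is correct and takes essentially the same route as the paper's proof: the same case split on whether the initial Pauli lies in the first row ($b_0=0$) or not, the same output distributions for Procedures A and B in each case (uniform on the complement of the first column with mass $\tfrac{1}{N(N-1)}$, uniform on the first column with mass $\tfrac{1}{N-1}$, and the two-level $\tfrac{1}{N(N-1)}$/$\tfrac{1}{N^2}$ distribution), and the identical weighted-average arithmetic yielding $\tfrac{1}{N^2-1}$ throughout. The only difference is presentational: you derive the intermediate distributions explicitly from Lemma~\ref{lemma-LT-mixing}, its upper-triangular analogue, and the column-mixing map $(a,b)\mapsto(a,\beta a+b)$, where the paper asserts them with the aid of figures.
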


\begin{proof}
For convenience, let $N = 2^n$.
First, consider an initial Pauli in the zero row (i.e., $b=0$ and it is of the form $X^{\lceil a \rceil}$ for some $a\neq 0$).
Then, as illustrated in Figure~\ref{fig:first-row-mixing},
\begin{figure}[h!]
\centering
\includegraphics[width=0.6\textwidth]{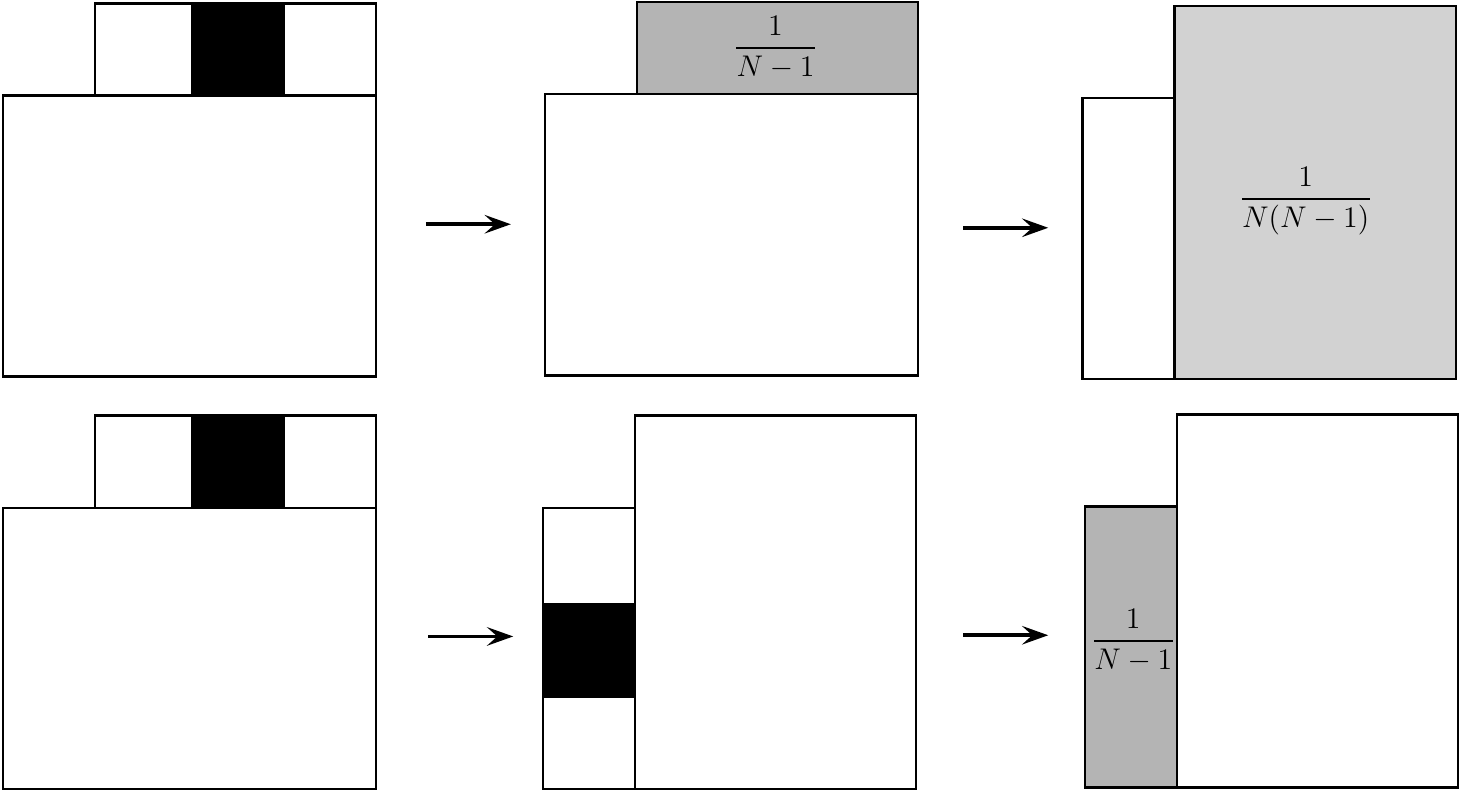}
\caption{\small Illustration of mixing procedure starting in the zero row  ($N=2^n$). Top: Procedure~A. Bottom: Procedure~B.}\label{fig:first-row-mixing}
\end{figure}
if Procedure A is applied, the result is a uniform distribution over all nonzero columns, where the probability of each Pauli is 
$\frac{1}{N(N-1)}$.
On the other hand, if Procedure B is applied, the result is a uniform distribution on the zero column, where the probability of each Pauli is 
$\frac{1}{N-1}$.
Consider the mixture of these distributions (Procedure A with probability 
$\frac{N}{N+1}$ and Procedure B with probability 
$\frac{1}{N+1}$).
Since 
$\frac{N}{N+1}\frac{1}{N(N-1)} = \frac{1}{N^2-1}$ and
$\frac{1}{N+1}\frac{1}{N-1} = \frac{1}{N^2-1}$, 
the result is the uniform distribution.

Next, consider the case of an initial Pauli that is not in the zero row (i.e., $X^{\lceil a \rceil} Z^{\lfloor b \rfloor}$ with $b \neq 0$).
Then, as illustrated in Figure~\ref{fig:general-mixing},
\begin{figure}[h!]
\centering
\includegraphics[width=0.6\textwidth]{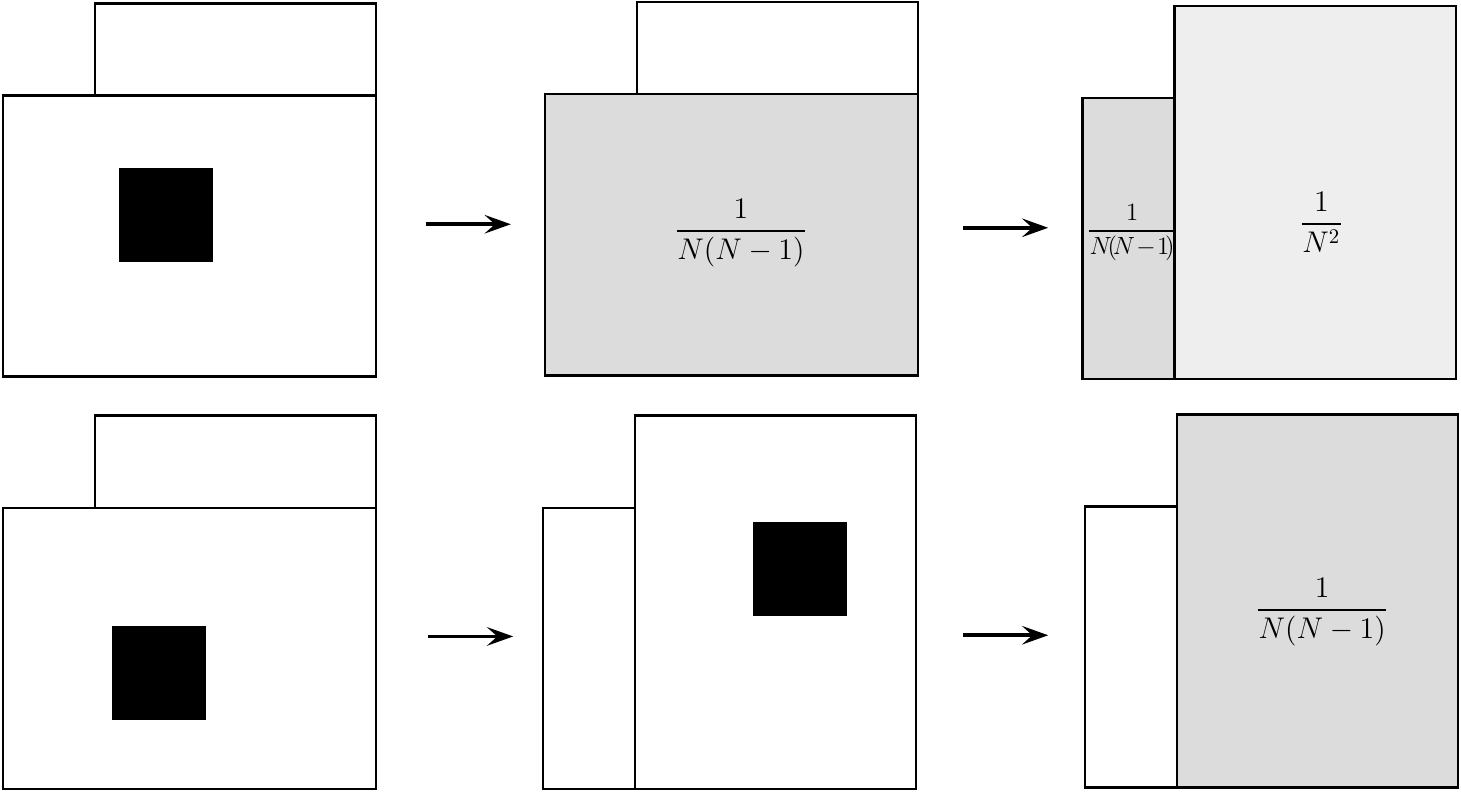}
\caption{\small Illustration of mixing procedure starting in a nonzero row ($N=2^n$). Top: Procedure~A. Bottom: Procedure~B.}\label{fig:general-mixing}
\end{figure}
if Procedure A is applied, the result is a two-level distribution: the probability of each Pauli in the zero column is $\frac{1}{N(N-1)}$; the probability of each Pauli in any nonzero column is 
$\frac{1}{N^2}$.
On the other hand, if Procedure B is applied, the result is a uniform
distribution over the nonzero columns, where the probability of each Pauli is $\frac{1}{N(N-1)}$.
Consider the mixture of these distributions (Procedure A with
probability
$\frac{N}{N+1}$ and Procedure B with probability 
$\frac{1}{N+1}$).
Since 
$\frac{N}{N+1}\frac{1}{N(N-1)} = \frac{1}{N^2-1}$ and  
$\frac{N}{N+1}\frac{1}{N^2}+\frac{1}{N+1}\frac{1}{N(N-1)} = \frac{1}{N^2-1}$, 
the result is the uniform distribution.
\end{proof}

\section{Acknowledgments}
We thank Olivia Di Matteo for pointing out reference \cite{GAE2007}
and Aram Harrow for pointing out reference \cite{Low10}.  We thank
Joachim von zur Gathen, Mark Giesbrecht, and Arne Storjohann for
discussions about finite field algorithms, and Daniel Gottesman and
Aram Harrow for other discussions.  This research was supported in
part by Canada's NSERC and CRC, a David R. Cheriton Graduate
Scholarship, and the U.S. ARO.

\bibliographystyle{abbrv}


\begin{thebibliography}{10}

\bibitem{AG2004}
S.~Aaronson and D.~Gottesman.
\newblock Improved simulation of stabilizer circuits.
\newblock {\em Phys. Rev. A}, 70:052328, Nov 2004.
\newblock quant-ph/0406196.

\bibitem{BDSW96}
C.~H. Bennett, D.~P. DiVincenzo, J.~Smolin, and W.~K. Wootters.
\newblock Mixed-state entanglement and quantum error correction.
\newblock {\em Phys. Rev. A}, 54:3824--3851, 1997.

\bibitem{BF2013}
W.~Brown and O.~Fawzi.
\newblock Decoupling with random quantum circuits.
\newblock {\em arXiv preprint arXiv:1307.0632}, 2013.

\bibitem{BF2013b}
W.~Brown, and O.~Fawzi.
\newblock Short random circuits define good quantum error correcting codes.
\newblock In {\em Proceedings of the IEEE International Symposium on
Information Theory}, ISIT '13, pages 346--350.

\bibitem{CarterWegman1977}
J.~L. Carter and M.~N. Wegman.
\newblock Universal classes of hash functions (extended abstract).
\newblock In {\em Proceedings of the Ninth Annual ACM Symposium on Theory of
  Computing}, STOC '77, pages 106--112, New York, NY, USA, 1977. ACM.

\bibitem{Chau2005}
H.~F. Chau.
\newblock Unconditionally secure key distribution in higher dimensions by
  depolarization.
\newblock {\em {IEEE} Trans. Inf. Theory}, 51(4):1451--1468, 2005.
\newblock ar{X}iv:quant-ph/0405016.

\bibitem{gottesman1997}  D.~Gottesman (1997), 
\newblock{\it Stabilizer codes and quantum error correction},
\newblock PhD Thesis, Caltech
\newblock arXiv:quant-ph/9705052.

\bibitem{gottesman1998} D.~Gottesman
\newblock{\it A Theory of Fault-Tolerant Quantum Computation},
\newblock Phys. Rev. A, 57:127--137 ,1998.
\newblock quant-ph/9702029.

\bibitem{Choi75}
M.-D.~Choi, 
\newblock Completely Positive Linear Maps on Complex Matrices. 
\newblock {\em Linear Algebra and Its Applications}, 10:285--290, 1975.  

\bibitem{DCEL09}
C.~Dankert, R.~Cleve, J.~Emerson, and E.~Livine.
\newblock Exact and approximate unitary 2-designs and their application to
  fidelity estimation.
\newblock {\em Phys. Rev. A}, 80:012304, Jul 2009.

\bibitem{Devetak05}
I.~Devetak.
\newblock The private classical capacity and quantum capacity of a quantum
  channel.
\newblock {\em IEEE Trans. Inf. Theory}, 51:44--55, 2005.
\newblock ar{X}iv:quant-ph/0304127.

\bibitem{DLT02}
D.~P. DiVincenzo, D.~W. Leung, and B.~M. Terhal.
\newblock Quantum data hiding.
\newblock {\em {IEEE} Trans. Inf. Theory}, 48(3):580--598, 2002.
\newblock ar{X}iv:quant-ph/0103098.

\bibitem{EWSLC03}
J.~Emerson, Y.~S.~Weinstein, M.~Saraceno, S.~Lloyd, D.~G.~Cory.
\newblock Pseudo-random unitary operators for quantum information processing.
\newblock {\em Science}, 302(5653):2098--2100, 2003.

\bibitem{Fine1947}
N.~Fine.
\newblock Binomial coefficients modulo a prime.
\newblock {\em American Mathematical Monthly}, 589--592, 1947.

\bibitem{Furer2009}
M.~F{\"u}rer.
\newblock Faster integer multiplication.
\newblock {\em SIAM Journal on Computing}, 39(3):979--1005, 2009.

\bibitem{Gao2000}
S.~Gao, J.~von~zur Gathen, D.~Panario, and V.~Shoup.
\newblock Algorithms for exponentiation in finite fields.
\newblock {\em Journal of Symbolic Computation}, 29(6):879--889, 2000.

\bibitem{Gottesman98}
D.~Gottesman
\newblock The Heisenberg Representation of Quantum Computers.
\newblock In {\em Proceedings of the XXII International Colloquium on Group
Theoretical Methods in Physics}, pages 32--43, 1999.

\bibitem{GAE2007}
D.~Gross, K.~Audenaert, and J.~Eisert.
\newblock Evenly distributed unitaries: on the structure of unitary designs.
\newblock {\em Journal of mathematical physics}, 48:052104, 2007.

\bibitem{HL09}
A.~W. Harrow and R.~A. Low.
\newblock Random quantum circuits are approximate 2-designs.
\newblock {\em Communications in Mathematical Physics}, 291(1):257--302, 2009.
\newblock arXiv:0802.1919.

\bibitem{Hastings09}
M.~Hastings.
\newblock Randomizing quantum states: Constructions and applications.
\newblock {\em Nature Physics}, 5:255, 2009.
\newblock arXiv:0809.3972.

\bibitem{HHYW08}
P.~Hayden, M.~Horodecki, J.~Yard, and A.~Winter.
\newblock A decoupling approach to the quantum capacity.
\newblock {\em Open Systems and Information Dynamics}, 15:7--19, 2008.

\bibitem{HLSW04}
P.~Hayden, D.~Leung, P.~W. Shor, and A.~Winter.
\newblock Randomizing quantum states: Constructions and applications.
\newblock {\em Communications in Mathematical Physics}, 250(2):371--391, 2004.
\newblock quant-ph/0307104.

\bibitem{HLW06}
P.~Hayden, D.~W. Leung, and A.~Winter.
\newblock Aspects of generic entanglement.
\newblock {\em Communications in Mathematical Physics}, 265(1):95--117, 2006.
\newblock quant-ph/0407049.

\bibitem{HW08}
P.~Hayden and A.~Winter.
\newblock Counterexamples to the maximal $p$-norm multiplicativity conjecture
  for all $p > 1$.
\newblock {\em Communications in Mathematical Physics}, 284(1):263--280, 2008.

\bibitem{hazewinkel1996}
M.~Hazewinkel.
\newblock {\em Handbook of algebra}, volume~1.
\newblock Elsevier, 1996.

\bibitem{Leung03}
D.~Leung.  
\newblock Choi's proof as a recipe for quantum process tomography. 
\newblock Journal of Mathematical Physics, 44(2):528--533, 2003. 
\newblock quant-ph/0201119.

\bibitem{Low10}
R.~Low.
\newblock Pseudo-randomness and Learning in Quantum Computation.  
\newblock PhD Thesis, University of Bristol, UK, 2010.  
\newblock arXiv:1006.5227.

\bibitem{mLID94a}
R.~Lidl and H.~Niederreiter.
\newblock {\em Introduction to Finite Fields and Their Applications}.
\newblock Cambridge University Press, Cambridge, revised edition, 1994.

\bibitem{Lloyd97}
S.~Lloyd.
\newblock Capacity of the noisy quantum channel.
\newblock {\em Phys. Rev. A}, 55:1613--1622, 1997.

\bibitem{NC00}
M.~Nielsen and I.~Chuang
\newblock {\em Quantum computation and quantum information}.
\newblock Cambridge University Press, 2000.

\bibitem{agnesscott}
L.~Riddle.
\newblock http://ecademy.agnesscott.edu/$\sim$lriddle/ifs/siertri/Pascalmath.htm

\bibitem{RS2009}
A.~Roy and A.~J. Scott.
\newblock Unitary designs and codes.
\newblock {\em Designs, codes and cryptography}, 53(1):13--31, 2009.

\bibitem{schonhage1977}
A.~Sch{\"o}nhage.
\newblock Schnelle multiplikation von polynomen {\"u}ber k{\"o}rpern der
  charakteristik 2.
\newblock {\em Acta Informatica}, 7(4):395--398, 1977.

\bibitem{Shor02}
P.~W. Shor.
\newblock The quantum channel capacity and coherent information.
\newblock lecture notes, MSRI Workshop on Quantum Computation, 2002. Available
  online at http://www.msri.org/publications/ln/msri/2002/\\
  quantumcrypto/shor/1/.

\bibitem{SDTR2013}
O.~Szehr, F.~Dupuis, M.~Tomamichel, and R.~Renner.
\newblock Decoupling with unitary approximate two-designs.
\newblock {\em New Journal of Physics}, 15(5):053022, 2013.

\bibitem{von1999}
J.~von~zur Gathen and J.~Gerhard.
\newblock {\em Modern Computer Algebra}.
\newblock Cambridge University Press, 1999.

\bibitem{von2001}
J.~von~zur Gathen and F.~Pappalardi.
\newblock Density estimates related to Gauss periods.
\newblock In {\em Cryptography and Computational Number Theory}, pages 33--41.
  Springer, 2001.

\bibitem{von2007}
J.~von~zur Gathen, A.~Shokrollahi, and J.~Shokrollahi.
\newblock Efficient multiplication using type 2 optimal normal bases.
\newblock In {\em Arithmetic of Finite Fields}, pages 55--68. Springer, 2007.


\bibitem{Watrous-notes} 
J.~Watrous.  
\newblock Naimark's theorem; characterizations of channels. 
\newblock ``Theory of Quantum Information'' lecture notes, 
https://cs.uwaterloo.ca/$\tilde{~}$watrous/LectureNotes.html.  

\end{thebibliography}

\appendix

\section{Proof of Lemma \ref{lem:equivdef} and Corollary \ref{cor:dagger}}
\label{appendix:definitions}

\noindent {\bf Lemma~\ref{lem:equivdef}.}
{\it Let $\cE$ be any ensemble of unitaries in $\ug_N$.  
Then, the following are equivalent:
\begin{itemize}
  \setlength\itemsep{0em}
  \item[(1)] $\cE$ is degree-$2$ expectation preserving. 
  \item[(2)] $\cE$ is $2$-query indistinguishable. 
  \item[(3)] $\cE$ implements the full bilateral twirl. 
  \item[(4)] $\cE$ implements the full channel twirl.
\end{itemize}
}

\noindent {\bf Corollary \ref{cor:dagger}.}
{\it For $\cE = \set{p_i,U_i}_{i=1}^k$, let $\cE^\dagger {:}{=}
\set{p_i,U_i^\dagger}_{i=1}^k$. 
\begin{itemize}
  \setlength\itemsep{0em}
  \item[(a)] $\cE$ implements the full bilateral twirl if and only if $\cE^\dagger$ does. 
  \item[(b)] $\cE$ implements the full channel twirl if and only if $\cE^\dagger$ does.
\end{itemize}
}

\begin{proof}
We will show, in order, 
(1) $\Rightarrow$ (2) $\Rightarrow$ (3) $\Rightarrow$ (1), Corollary 
\ref{cor:dagger}(a), then, 
(2) $\Rightarrow$ (4) $\Rightarrow$ (3), and finally Corollary 
\ref{cor:dagger}(b).  

\vspace*{1.5ex}

\noindent (1) $\Rightarrow$ (2): 
Consider any distinguishing circuit $\mathcal C$ making up to two queries
of $U$ or $U^\dagger$.  Note that the output state $\eta_2(\mathcal C,U)$
is a product of matrices with at most two factors of $U$ and two
factors of $U^\dagger$.  Thus, each entry of $\eta_2(\mathcal C,U)$ is a
polynomial of degree at most $2$ in the matrix elements of $U$ and at
most $2$ in the complex conjugates of those matrix elements.
By hypothesis, $\cE$ is degree-$2$ expectation preserving, 
thus the following holds entrywise:
\be \sum_{i=1}^k \; p_i \, \eta_2(\mathcal C,U_i) = 
\int d\mu(U) \, \eta_2(\mathcal C,U) \,\ee
and $\cE$ is $2$-query indistinguishable. 

\vspace*{1.5ex}
\noindent (2) $\Rightarrow$ (3): This follows from the definition that
the bilateral twirl circuit is a special case of a 2-query
distinguishing circuit $\mathcal C$.

\vspace*{1.5ex}
\noindent (3) $\Rightarrow$ (1): Let $\{|j\>\}_{j=1}^N$ be a basis for
$\mathbb{C}^{N}$.  Suppose $\cE$ implements the full bilateral twirl,
so, $\forall \rho$,
\be
\sum_{i=1}^k \, p_i \, U_i \otimes U_i \, 
\rho \, U_i^\dagger \otimes U_i^\dagger 
= \int \, d \mu(U) \, U \otimes U \, \rho \, 
U^\dagger \otimes U^\dagger \,.
\label{eq:btwirlrepeat}
\ee
Since the density matrices span the complex Hilbert space of all
possible square matrices of the same dimension, the above relation holds
if we replace $\rho$ by $|a_1\>\<a_3| \otimes |a_2\>\<a_4|$, for all 
$a_1, a_2, a_3, a_4 \in \{1,\cdots,N\}$.  
Furthermore, we can left- and right-multiply the above equation by 
$\<a_5|\otimes\<a_6|$ and $|a_7\>\otimes|a_8\>$.  This gives 
\be
\sum_{i=1}^k \, p_i \<a_5|U_i|a_1\> \<a_6|U_i|a_2\> 
\<a_3|U_i^\dagger|a_7\> \<a_4 |U_i^\dagger|a_8\>  
= \int \, d \mu(U) \<a_5|U|a_1\>  \<a_6|U|a_2\>
\<a_3|U^\dagger|a_7\> \<a_4 |U^\dagger|a_8\> \,.
\nonumber
\ee
Repeating the above for all possible $a_1,\cdots,a_8$ and applying
linearity implies Eq.~(\ref{eq:u2design}) and that $\cE$ is degree-$2$
expectation preserving.


\vspace*{1.5ex}

\noindent Corollary \ref{cor:dagger}(a): From
Definition~\ref{def:2queryu}, $\cE$ is $2$-query indistinguishable iff
$\cE^\dagger$ is.  Thus, by the equivalence between (2) and (3), 
$\cE$ implements the full bilateral twirl if and only if $\cE^\dagger$ does.  


\vspace*{1.5ex}

\noindent (2) $\Rightarrow$ (4): This follows from the definition that
the channel twirl circuit is a special case of a $2$-query
distinguishing circuit $\mathcal C$.

\vspace*{1.5ex}

\noindent (4) $\Rightarrow$ (3): We provide a proof for the most
general unitary 2-design here.  Readers interested in the special 
(but common) case when the ensemble $\cE$ consists only 
of Clifford unitaries and $N=2^n$ can consult Appendix
\ref{appendix:shortproof} for a short proof.  

We begin with some relevant concepts in quantum information.  
Let $|1\>,\cdots,|N\>$ denote an orthonormal basis for $\mathbb{C}^N$,
${\cal B}(\mathbb{C}^N)$ denote the set of all bounded $N \times N$
matrices, and $\Phi = \sum_{l,j=1}^N |l\>\<j| \otimes |l\>\<j|$.
Let ${\cal I}$ denote the identity map on ${\cal B}(\mathbb{C}^N)$.  
For any linear map $\Theta: {\cal B}(\mathbb{C}^N) \rightarrow 
{\cal B}(\mathbb{C}^N)$, denote the \emph{Choi-matrix} of
$\Theta$ by $J(\Theta) = (\Theta \otimes {\cal I}) (\Phi) 
= \sum_{l,j=1}^N \Theta(|l\>\<j|) \otimes |l\>\<j|$ \cite{Choi75}.
$\Theta$ is completely positive if and only if 
$J(\Theta)$ is positive semidefinite \cite{Choi75} 
(see also \cite{Leung03,Watrous-notes}).   
A quantum channel is a linear, trace-preserving, and completely 
positive map.  

Suppose for every quantum channel $\Lambda$, $\mathbb{E}_\cE(\Lambda) =
\mathbb{E}_\mu(\Lambda)$.  Then, $J(\mathbb{E}_\cE(\Lambda)) = 
J(\mathbb{E}_\mu(\Lambda))$.  Rephrasing this equality  
using Eqs.\ (\ref{def:e-twirl-channel}) and (\ref{def:haar-twirl-channel}), 
we have 
\be
\sum_{i=1}^k \, p_i \, 
(U_i^\dagger \otimes I) \, (\Lambda \otimes {\cal I}) 
( (U_i \otimes I) \, \Phi \, (U_i^\dagger \otimes I) )  \, (U_i \otimes I) 
=  \!\!  \int \!\! d\mu(U) \, 
(U^\dagger \otimes I) \, (\Lambda \otimes {\cal I}) 
( (U \otimes I) \, \Phi \, (U^\dagger \otimes I) ) \, (U \otimes I) .
\label{eq:channel2bilateraltwirl}
\ee
We transform each side of the above equation in $3$ steps, turning 
the Choi matrix of the twirled channel into the bilateral twirl 
of an operator closely related to the Choi matrix of $\Lambda$.  
First, for the LHS of Eq.~(\ref{eq:channel2bilateraltwirl}), we apply 
the transpose trick 
$(U_i \otimes I)  \, \Phi \, (U_i^\dagger \otimes I) = 
 (I \otimes U_i^T)  \, \Phi \, (I \otimes U_i^*)$, 
where $T$ and $*$ denote the transpose and the complex conjugate
respectively.  Second, we commute the conjugation by $(I \otimes U_i^T)$
with $\Lambda \otimes {\cal I}$.  We apply similar manipulations
on the RHS of Eq.~(\ref{eq:channel2bilateraltwirl}). 
The equation becomes 
\be
\sum_{i=1}^k \, p_i \, 
(U_i^\dagger \otimes U_i^T) \, (\Lambda \otimes {\cal I}) 
(\Phi) \, (U_i \otimes U_i^*) 
=  \!\!  \int d\mu(U) \, 
(U^\dagger \otimes U^T) \, (\Lambda \otimes {\cal I})(\Phi) \, 
(U \otimes U^*)  \,.
\label{eq:channel2bilateraltwirl-2}
\ee
Third, we apply to Eq.~(\ref{eq:channel2bilateraltwirl-2})
the \emph{partial transpose} of the second system:  
for any $A_1, A_2 \in {\cal B}(\mathbb{C}^N)$, 
this linear map takes $A_1 \otimes A_2$ to $A_1 \otimes A_2^T$.  In 
particular, the partial transpose of 
$(I \otimes U_i^T) (\Phi)  (I \otimes U_i^*)
= \sum_{l,j=1}^N  |l\>\<j| \otimes (U_i^T |l\>\<j| U_i^*)$ 
is equal to 
$\sum_{l,j=1}^N  |l\>\<j| \otimes (U_i^\dagger |j\>\<l| U_i)
=(I \otimes U_i^\dagger) (\chi)  (I \otimes U_i)$ 
where $\chi = \sum_{l,j=1}^N  |l\>\<j| \otimes |j\>\<l|$ 
is the swap operator on $\mathbb{C}^N \otimes \mathbb{C}^N$.  
Eq.~(\ref{eq:channel2bilateraltwirl-2}) becomes 
\be
\sum_{i=1}^k \, p_i \, 
(U_i^\dagger \otimes U_i^\dagger) \, (\Lambda \otimes {\cal I}) 
(\chi) \, (U_i \otimes U_i) 
=  \!\!  \int d\mu(U) \, 
(U^\dagger \otimes U^\dagger) \, (\Lambda \otimes {\cal I})(\chi) \, 
(U \otimes U) 
\label{eq:channel2bilateraltwirl-3}
\ee
which is equivalent to 
\be {\cal T}_{\cE^\dagger}((\Lambda \otimes {\cal I})(\chi)) 
= {\cal T}_\mu( (\Lambda \otimes {\cal I})(\chi)). 
\label{eq:channel2bilateraltwirl-4}
\ee  
(In the above, we have used the fact $d\mu(U^\dagger) = d\mu(U)$.) 
Altogether, the transpose trick, the commutation, and the partial transpose 
transform Eq.~(\ref{eq:channel2bilateraltwirl}) concerning the equality of 
the Choi-matrices of the two channel twirls for $\Lambda$ 
into Eq.~(\ref{eq:channel2bilateraltwirl-4}) establishing the equality of 
the two bilateral twirls of the matrix $(\Lambda \otimes {\cal I})(\chi)$.  

It remains to apply Eq.~(\ref{eq:channel2bilateraltwirl-4}) to a set of
carefully chosen $\Lambda$'s to show that 
${\cal T}_{\cE^\dagger}(A) = {\cal T}_\mu(A)$ 
for a basis $\{A\}$ of the input space.
This will show that $\cE^\dagger$ implements the full
bilateral twirl.  By Corollary \ref{cor:dagger}, $\cE$ also implements
the full bilateral twirl and the proof will be completed.

We consider $\Lambda$'s with a specific form.  
Let ${\cal R}$ be the completely randomizing map on ${\cal B}(\mathbb{C}^N)$, 
i.e., ${\cal R}(\rho) = (\tr\rho) I/N$ for all $\rho \in {\cal B}(\mathbb{C}^N)$.  
Note that $J({\cal R}) = (I \otimes I)/N$.  
Consider any bounded linear map $\tilde{\Lambda}$ that is trace
preserving and for which $J(\tilde{\Lambda})$ is Hermitian 
(the latter property is called hermiticity preserving).
Then, for sufficiently small, positive, 
$\lambda$, $\Lambda = (1-\lambda) {\cal R} + \lambda \tilde{\Lambda}$ 
has positive semidefinite Choi-matrix (because the Choi-matrix of 
${\cal R}$ is proportional to the identity), and is therefore completely positive.  
Furthermore, $\Lambda$ is linear and trace-preserving.  
So, $\Lambda$ is a quantum channel.  
When we apply Eq.~(\ref{eq:channel2bilateraltwirl-4}) to such 
$\Lambda$'s, the ${\cal R}$ terms cancel out (because 
$({\cal R} \otimes {\cal I})(\chi) = (I \otimes I)/N$ which is 
invariant under either bilateral twirl).  Therefore, 
Eq.~(\ref{eq:channel2bilateraltwirl-4}) holds for all linear, 
trace and hermiticity preserving maps $\tilde{\Lambda}$ (which are
easier to construct than quantum channels).  

We are ready to show that ${\cal T}_{\cE^\dagger}(A) = {\cal
  T}_\mu(A)$ for a basis $\{A\}$ of the input space.
We take $A = H_l \otimes H_j$ where $\{H_l\}_{l=1}^{d^2}$ is a basis 
for ${\cal B}(\mathbb{C}^N)$ with the following additional properties:
\vspace*{-10pt}
\begin{itemize}
  \setlength\itemsep{-0.3em}
  \item[(1)] Each $H_l$ is Hermitian.
  \item[(2)] $H_1 = I/\sqrt{N}$.
  \item[(3)] $\tr(H_l H_j) = \delta_{lj}$. 
    In particular, $H_{l}$ is traceless for $l>1$. 
  \item[(4)] The swap operator has a simple representation 
in this basis,  
\be
     \chi = \sum_{l=1}^{d^2} H_l \otimes H_l \,. 
\ee
\end{itemize}
Such basis exists for all $N$.  When $N=2^n$, $H_l$ can be taken to be
proportional to the Pauli matrices (see Eq.~(\ref{eq:pauli-swap}) for the last 
condition).  For general $N$, we show in
Appendix \ref{appendix:gell-mann} that the \emph{generalized Gell-Mann
  matrices} can be used to construct such $H_l$'s.

We will verify that 
${\cal T}_{\cE^\dagger}(H_l \otimes H_j) = {\cal T}_\mu(H_l \otimes H_j)$ 
for all $1 \leq l,j \leq d^2$ by considering four cases.
First, the equality is immediate for $l=j=1$.  
Second, for each $1 < j \leq d$ consider $\tilde{\Lambda}_{1j}$ 
defined by $\tilde{\Lambda}_{1j} (H_1) = H_1 + H_j$,  
and $\tilde{\Lambda}_{1j} (H_l) = 0$ for all $l \neq 1$.
$\tilde{\Lambda}_{1j}$ is trace-preserving 
since each $H_{l}$ is traceless for $l>1$.  
Furthermore, $(\tilde{\Lambda}_{1j} \otimes {\cal I})(\chi) = (H_1 + H_j) \otimes H_1$
and partial transposing the second system gives $J(\tilde{\Lambda}_{1j})$, 
which implies $\tilde \Lambda_{1j}$ is Hermitian. 
Therefore, we can apply Eq.~(\ref{eq:channel2bilateraltwirl-4}) to 
$\tilde{\Lambda}_{1j}$ and conclude 
${\cal T}_{\cE^\dagger}(H_j \otimes H_1) = {\cal T}_\mu(H_j \otimes H_1)$. 
Third, because of the symmetry of the bilateral twirl, 
${\cal T}_{\cE^\dagger}(H_1 \otimes H_j) = {\cal T}_\mu(H_1 \otimes H_j)$. 
Fourth, let  $1 < j \leq l \leq d$ and consider $\tilde{\Lambda}_{jl}$ 
such that $\tilde{\Lambda}_{jl}(H_1) = H_1$, 
$\tilde{\Lambda}_{jl}(H_j) = H_l$, and $\tilde{\Lambda}_{jl}(H_{j'}) = 0$
for all $j' \neq 1$ and $j' \neq j$.  With arguments similar to the 
second case, ${\cal T}_{\cE^\dagger}(H_l \otimes H_j) 
= {\cal T}_\mu(H_l \otimes H_j)$.  This completes the proof.  

\vspace*{1.5ex}

\noindent Corollary \ref{cor:dagger}(b): We have established the 
equivalence between (3) and (4), thus, by Corollary \ref{cor:dagger}(a)
$\cE$ implements the full channel twirl if and only if $\cE^\dagger$ does.  

\end{proof}

\section{Short proof for (4) $\Rightarrow$ (3) in Lemma \ref{lem:equivdef}}
\label{appendix:shortproof}

Here, we consider the special case when $\cE = \{p_i, U_i\}$ is an
ensemble with Clifford unitaries and $N=2^n$.  We will show that if
$\cE$ implements the full channel twirl then it implements the full
bilateral twirl.

The proof relies on several definitions in Section~\ref{sec:pauli-mixing}.
We will show that if $\cE$
implements the full channel twirl then it is necessarily Pauli mixing,
and the rest follow from Lemma~\ref{lemma:pauli-mixing-to-2design}.
Consider an ensemble $\cE = \{p_i, U_i\}$ with Clifford unitaries
$U_i$ such that $\mathbb{E}_\cE(\Lambda) = \mathbb{E}_\mu(\Lambda)$
for all quantum channels $\Lambda$.  Take an arbitrary Pauli 
matrix $P \in \cQ_n$
with $P \neq I$ and an overall phase so that $P =
P^\dagger$.  Let $\Lambda(\rho) = P \rho P^\dagger$.  On one hand,
$\mathbb{E}_\cE(\Lambda)(\rho) = \sum_{i=1}^k \, p_i \, (U_i^\dagger P 
\, U_i) \, \rho \, (U_i^\dagger P \, U_i)^\dagger$.  
On the other
hand, $\mathbb{E}_\mu(\Lambda)(\rho) = (1-\lambda) \rho +
\tfrac{\lambda}{2^{2n}{-}1} \sum_{Q \in \cQ_n\backslash\{I\}} Q
\rho Q^{\dagger}$ for some $0 \leq \lambda \leq 1$.  
Note that for each $i$, $U_i$ is in the Clifford group 
so $U_i^\dagger P \, U_i$ is a Pauli matrix.  
Thus, we have two Kraus representations for the same twirled channel, 
both with Kraus operators in the quotient Pauli group $\cQ_n$, which is a basis
for $2^n \times 2^n$ matrices over $\mathbb{C}$.  Invoking 
Theorem 8.2 of \cite{NC00} concerning the degrees of freedom over 
these Kraus operators, the $i$-th term of $\mathbb{E}_\cE(\Lambda)$ 
can only contribute to $Q$ in $\mathbb{E}_\mu(\Lambda)$ 
if and only if $U_i^\dagger P \, U_i$ is equivalent to $Q$ in $\cQ_n$ 
(see Section~\ref{sec:pauli-mixing}) .  Finally, 
each $Q \neq I$ appears with equal weight in
$\mathbb{E}_\mu(\Lambda)(\rho)$, thus the distribution 
$\{p_i,U_i^\dagger P \, U_i\}$ is uniform over $\cQ_n \backslash \{I\}$.

\section{Construction of the basis $\{H_l\}$}
\label{appendix:gell-mann}

We want $\{H_l\}_{l=1}^{d^2}$ to be a basis 
for ${\cal B}(\mathbb{C}^N)$ with the following additional properties: 
\begin{itemize}
  \setlength\itemsep{-0.3em}
  \item[(1)] Each $H_l$ is Hermitian.
  \item[(2)] $H_1 = I/\sqrt{N}$.  
  \item[(3)] $\tr(H_l H_j) = \delta_{lj}$. 
    In particular, $H_{l}$ is traceless for $l>1$.  
  \item[(4)] The swap operator $\chi = \sum_{l=1}^{d^2} H_l \otimes H_l$. 
\end{itemize}

We use the \emph{generalized Gell-Mann matrices} for the construction. 
Let $H_1 = I/\sqrt{N}$.  For $l=2,\cdots,N$, let $H_l =
D_l/\sqrt{l(l{-}1)}$ where $D_l$ is a diagonal matrix with
$(D_l)_{1,1} = \cdots = (D_l)_{(l{-}1,l{-}1)} = 1$, $(D_l)_{l,l} =
-(l-1)$, and $(D_l)_{j,j} = 0$ for $l{+}1 \leq j \leq d$.  For $1 \leq
j_1 < j_2 \leq d$, let $X_{j_1,j_2} = (|j_1\>\<j_2| +
|j_2\>\<j_1|)/\sqrt{2}$, $Y_{j_1,j_2} = i(-|j_1\>\<j_2| +
|j_2\>\<j_1|)/\sqrt{2}$.  Let $\{ H_{d+1},\cdots,H_{d^2} \} = \{
X_{j_1,j_2}, Y_{j_1,j_2}\}_{1 \leq j_1 < j_2 \leq d}$ with any
ordering.  Then, $\{H_l\}_{l=1}^{d^2}$ span ${\cal B}(\mathbb{C}^N)$, each
$H_l$ is Hermitian, and $\tr(H_l H_j) = \delta_{lj}$.  
Finally, the expression for the swap operator $\chi$ can be verified by 
checking that each of the $d^4$
matrix entries on the RHS has the value given by the LHS.  
The verification involves routine arithmetic, each off-diagonal
element involves only 2 terms, and the diagonal elements can be
expressed as simple telescopic sums.

\section{Elementary proof that Pauli mixing implies a unitary 2-design}

\label{appendix:pmi2d}


\def\rchi{\raisebox{0.5ex}{$\chi$}}

\noindent {\bf Lemma \ref{lemma:pauli-mixing-to-2design}.}
  {\it Let $\cE$ be an ensemble of Clifford unitaries and
  $\cE_\cQ$ be as defined in Section~\ref{sec:pauli-mixing}. 
  If $\cE$ is Pauli mixing,
  then $\cE_\cQ$ implements the full bilateral twirl.}

\begin{proof} 
  The goal is to show that ${\cal T}_{\cE_\cQ}(\rho) =
  {\cal T}_\mu(\rho)$ for all density matrices $\rho$. 
  Note that both ${\cal T}_{\cE_\cQ}$ 
  and ${\cal T}_\mu$ are linear transformations on $2^{2n} \times 
  2^{2n}$ matrices.  Therefore, it suffices to show that 
  ${\cal T}_{\cE_\cQ}$ and ${\cal T}_\mu$ act 
  identically on a basis for these matrices.  We consider 
  a basis that contains the identity matrix $I_{2n}$ and the swap 
  operator $\rchi_{2n}$ acting on $2n$ qubits, completed with 
  matrices $M$ trace orthonormal to $I_{2n}$ and $\rchi_{2n}$ 
  (i.e., $\mathrm{Tr}(I_{2n}M) = \mathrm{Tr}(\rchi_{2n}M) = 0$).
  We will prove the following three claims:
  \begin{enumerate} 
    \item ${\cal T}_\mu (I_{2n}) = {\cal T}_{\cE_\cQ}(I_{2n}) = I_{2n}$, 
    \item ${\cal T}_\mu (\rchi_{2n}) = {\cal T}_{\cE_\cQ}(\rchi_{2n}) = \rchi_{2n}$, and 
    \item if 
      $\mathrm{Tr}(I_{2n}M) = \mathrm{Tr}(\rchi_{2n}M) = 0$,
      then ${\cal T}_\mu (M) = {\cal T}_{\cE_\cQ}(M) = \boldsymbol{0}$.  
  \end{enumerate}
Recall from Eqs.\ (\ref{eq:btwirle}) and (\ref{eq:btwirlf}) that 
  \begin{align*}
{\cal T}_\mu(\rho) & = \int d\mu(U) \; U\otimes U \; \rho \; 
U^{\dag}\otimes U^{\dag} ~~{\rm and}  \\
{\cal T}_{\cE_\cQ}(\rho) & = \sum_{i,j}p_i2^{-2n} (U_iR_j \otimes
U_iR_j) \; \rho \; (R_j^{\dag}U_i^{\dag}\otimes R_j^{\dag}U_i^{\dag}) \,.
  \end{align*}
It follows that the first claim holds trivially.  Furthermore, since 
$\rchi_{2n} (A \otimes B) \rchi_{2n} = B \otimes A$, or equivalently, $\rchi_{2n} (A
\otimes B) = (B \otimes A) \rchi_{2n}$, the second claim follows.

To prove the third claim, it suffices to show 
${\cal T}_{\cE_\cQ}(M) = \boldsymbol{0}$.  
This is because, for any $2^{2n} \times 2^{2n}$ matrices $\tilde{M}$, 
${\cal T}_\mu(\tilde{M}) = {\cal T}_\mu ({\cal T}_{\cE_\cQ}(\tilde{M}))$.  
In turns, this is due to the fact that 
$\forall \, V {\in} \, \ug_{2^n}, \forall \tilde{M},
{\cal T}_\mu(\tilde{M}) = {\cal T}_\mu(V {\otimes} V \tilde{M} 
V^\dagger {\otimes} V^\dagger)$;
applying the last identity to each unitary in $\cE_\cQ$ and
invoking linearity gives the desired result.  

We now show that 
${\cal T}_{\cE_\cQ}(\tilde{M}) = \boldsymbol{0}$. 
We make a 
crucial observation that 
$\rchi_2 = \frac{1}{2}(I\otimes I + X\otimes X +
Y\otimes Y + Z\otimes Z)$, and thus
\begin{equation}
\rchi_{2n} = \frac{1}{2^n} \sum_{R_l \in \cQ_n} R_l \otimes R_l \,.
\label{eq:pauli-swap}
\end{equation} 
Now, we use the fact that $\cQ_n$ is a basis for $2^n \times 2^n$
matrices to write $M = \sum_{ab}
\alpha_{ab} R_a \otimes R_b$ for some $\alpha_{ab} \in \mathbb{C}$.
We take $R_0 = I_n\in\mathcal{Q}_n$, so, the two conditions on $M$ can
be rephrased as $\alpha_{00} = 0$ and $\sum_{a} \alpha_{aa} = 0$.
By linearity, we focus on analyzing 
${\cal T}_{\cE_\cQ}(R_a \otimes R_b)$ for any $(a,b) \neq (0,0)$.
Note that 
\bea
   {\cal T}_{\cE_\cQ}(R_a \otimes R_b) 
 = \sum_{i} \, p_i \, (U_i \otimes U_i) 
      \left[ 
      \sum_j 2^{{-}2n} \, (R_j \otimes R_j) \; 
      (R_a \otimes R_b) \; (R_j^\dagger \otimes R_j^\dagger) 
      \right]
      (U_i^\dagger \otimes U_i^\dagger) \,.
\label{eq:mixing2btwirl}
\eea
If $a \neq b$, $\exists c$ such that $R_c$ commutes with $R_a$ and
anticommutes with $R_b$.  So, 
\bea
  2 \; \sum_j (R_j \otimes R_j) \; (R_a \otimes R_b) \; 
  (R_j^\dagger \otimes R_j^\dagger) \hspace*{56ex}
\nonumber
\\
  =  
  \sum_j (R_j \otimes R_j) \; (R_a \otimes R_b) \; 
  (R_j^\dagger \otimes R_j^\dagger) + 
  \sum_j (R_j R_c \otimes R_j R_c) \; (R_a \otimes R_b) \; 
  (R_c^\dagger R_j^\dagger \otimes R_c^\dagger R_j^\dagger) ~= ~\boldsymbol{0} 
\nonumber
\eea
and ${\cal T}_{\cE_\cQ}(R_a \otimes R_b) = \boldsymbol{0}$.
If $a = b$, 
\be\sum_j 2^{{-}2n} \, (R_j \otimes R_j) \; (R_a \otimes R_a) 
\; (R_j^\dagger \otimes R_j^\dagger) = (R_a \otimes R_a) \,.
\ee
Substituting the above into Eq.~(\ref{eq:mixing2btwirl}) and using 
the fact that $\cE$ is Pauli mixing, we obtain 
\be
{\cal T}_{\cE_\cQ}(R_a \otimes R_a) 
= \frac{1}{2^{2n}{-}1} \sum_{R_j \in \cQ_n \backslash \set{I}} R_j \otimes R_j 
= T
\, 
\label{eq:swap}
\ee
for a matrix $T$ independent of $a$.   
Putting all the pieces together, 
\be{\cal T}_{\cE_\cQ}(M) = \sum_{ab}
\alpha_{ab} {\cal T}_{\cE_\cQ}(R_a \otimes R_b) = \sum_{a}
\alpha_{aa} {\cal T}_{\cE_\cQ}(R_a \otimes R_a) = \left( \sum_{a}
\alpha_{aa} \right) T =  
\boldsymbol{0}.\ee  

\end{proof}

\section{Lower bounds for size and depth of unitary 2-designs}
\label{appendix:optimality}



Let $\cE = \set{p_i,U_i}_{i=1}^k$ be any exact unitary $2$-design on
$n$ qubits.  We show that a high probability set of the unitaries have
size $\Omega(n)$ and depth $\Omega(\log n)$, assuming a universal gate
set consisting of 1- and 2-qubit gates.  Both proofs invoke only
Definition~\ref{def:btwirl}, and they apply to unitary $2$-designs that
approximate the exact operation under Definition \ref{def:2queryu} or
\ref{def:btwirl} in the diamond norm.

Suppose the circuit for $U_i$ acts nontrivially on $s_i$ qubits. 
We will show that 
$\sum_{i=1}^k p_i s_i \geq n/2$, so, on average the circuit size is 
at least $n/2$.  
Since $\cE$ implements the full bilateral twirl, the quantum operation
$\rho \rightarrow \sum_{i=1}^k p_i U_i \rho U_i^\dagger =
\frac{I}{2^n}$ is the complete randomization map on $n$ qubits.  For
each $j$, consider the input $|0\>\<0|$ on the $j$th qubit.  Since the
output on the $j$th qubit is $\frac{I}{2}$, with probability at least
$\frac{1}{2}$, it has been acted on by one of the $U_i$'s.  
Define a matrix with rows labeled by $i=1,\cdots,k$, and columns 
labeled by $j=1,\cdots,n$, and the $(i,j)$ entry is $p_i$ if 
$U_i$ acts nontrivially on qubit $j$.  The above argument implies that each
column sums to at least $1/2$.  Also, by definition, the $i$th row sums 
to $s_i p_i$.  
The total of the row sums is equal to the total of the column sums, 
so, $\sum_{i=1}^k p_i s_i \geq n/2$, as claimed.  
Furthermore, consider the set $\mathscr{S} = \{i: s_i < n/4\}$.  
If $\sum_{i \in \mathscr{S}} p_i > 2/3$, $\sum_{i=1}^k p_i s_i \not\geq n/2$,
so, with probability at least $1/3$, the circuit has size at least
$n/4$.

\comment{
     \ qubit 
prob    1  2  ... n 

p1      1  0  0 

p2      1

..

pk      0  1 

multiply the p_i's to the column: 
each column sum >= 1/2
total col sum >= n/2
total row sum >= n/2 = w1 p1 + w2 p2 + ... + wk pk

n/4 * 2/3 + n * 1/3 = n/2.
}

For the lower bound on the depth, consider the bilateral twirl 
${\cal T}_{\cE}$ applied to the matrix $Z \otimes I^{\otimes n-1}
\otimes Z \otimes I^{\otimes n-1}$,
\be
{\cal T}_{\cE}(Z \otimes I^{\otimes n-1} \otimes Z \otimes I^{\otimes n-1}) 
= \sum_{i=1}^k p_i \; ( U_i (Z \otimes I^{\otimes n-1}) U_i^\dagger ) 
               \otimes ( U_i (Z \otimes I^{\otimes n-1}) U_i^\dagger ) \,.
\label{nonum1} 
\ee
%
%
Express each $U_i (Z \otimes I^{\otimes n-1}) U_i^\dagger$ as a linear
combination of Pauli matrices, and define the weight $t_i$ to be the
number of qubits that are acted on nontrivially by at least one of the
terms.  Since each gate interacts with at most two-qubits, if the depth of
the circuit for $U_i$ is $d_i$, then $d_i \geq \log t_i$.  We now show 
that most $U_i (Z \otimes I^{\otimes n-1}) U_i^\dagger$ have weight 
$t_i \geq n/2$.  

From Appendix \ref{appendix:pmi2d}, 
\be
{\cal T}_{\cE}(Z \otimes I^{\otimes n-1} \otimes Z \otimes I^{\otimes n-1}) 
= \frac{1}{2^{2n}{-}1} \sum_{R_j \in \cQ_n \backslash \set{I}} R_j \otimes R_j 
\label{nonum2} 
\ee
The fraction of $R_j$'s with weight less than $n/2$ is equal to 
$$4^{-n} \sum_{l=0}^{\lfloor n/2 \rfloor} {n \choose l} \, 3^l 
\leq 4^{-n} \sum_{l=0}^{\lfloor n/2 \rfloor} {n \choose l} \, 3^{n/2} 
\leq 4^{-n} \cdot \frac{1}{2} \cdot 2^n \cdot 3^{n/2}
\approx 0.866^n \,. $$   
Let $\mathscr{T} = \{i: t_i \geq n/2\}$.  
Then $\sum_{i \in \mathscr{T}} p_i \rightarrow 1$ but in particular 
$\sum_{i \in \mathscr{T}} p_i \geq 1/2$, because 
otherwise, the RHS of Eq.~(\ref{nonum1}) and (\ref{nonum2}) cannot be equal.

\comment{
(# Paulis with weight at least n/2) / 4^n -> 1
 because (# Paulis with weight at most n/2) < = 1/2 * 2^n * 3^{n/2}

Prob(Z1 -> Pauli with weight at least n/2) -> 1

Number of Paulis with weight at most n/2 
  =  \sum_{j=0}^{n/2} {n \choose j} 3^j 
\leq \sum_{j=0}^{n/2} {n \choose j} 3^{n/2} 
\approx 1/2 * 2^n * 3^{n/2} 

}

\section{Pauli group permutations are uniquely induced}
\label{appendix:uniqueness}

\begin{lemma}\label{lemma:appendix-uniqueness}
Suppose that unitaries $U$ and $V$ have the property that they induce the same permutation on the Pauli group so that, for all $a, b \in\{0,1\}^n$, 
\begin{align}\label{eq:appendix-conjugate-by-U}
U X^aZ^{b} U^{\dagger} \equiv V X^aZ^{b} V^{\dagger},
\end{align}
where $\equiv$ means equal up to a global phase that can be a function of $a$ and $b$.
Then $V = U X^cZ^d$ for some $c, d \in \{0,1\}^n$ (up to a global phase).
(Here $a$ and $b$ are binary strings, as opposed to elements of $\GF(2^n)$, so we do not require the notation $\lceil a \rceil$ and $\lfloor b \rfloor$ that occurs in other sections.)
\end{lemma}

\begin{proof}
Note that Eq.~\eqref{eq:appendix-conjugate-by-U} is equivalent to 
\begin{align}\label{eq:appendix-alternate}
X^a Z^b(U^{\dag}V)(X^a Z^b)^{\dag} = \lambda_{a,b} U^{\dag}V
\end{align}
for all 
$a,b \in \{0,1\}^n$
where $\lambda_{a, b}$ is the global phase in Eq.~\eqref{eq:appendix-conjugate-by-U}.
We can express $U^{\dag}V$ as 
\begin{align}\label{eq:appendix-expression}
U^{\dag}V = \sum_{c,d \in \{0,1\}^n} \alpha_{c,d} X^c Z^d.
\end{align}
Recall that $X^aZ^b$ and $X^cZ^d$ either commute or anticommute, depending on the value of the symplectic inner product%
\footnote{The symplectic inner product is defined as $(a,b)\cdot(c,d) = (\oplus_{k=1}^{n} a_k d_k) \oplus (\oplus_{k=1}^{n} b_k c_k)$.}
 of $(a,b)$ and $(c,d)$ (they commute when $(a,b)\cdot(c,d) = 0$ and anticommute otherwise).
Using this fact and substituting Eq.~\eqref{eq:appendix-expression} into Eq.~\eqref{eq:appendix-alternate}, we obtain 
\begin{align}\label{eq:appendix-symplectic}
\sum_{c,d \in \{0,1\}^n} (-1)^{(a,b)\cdot(c,d)}\alpha_{c,d} X^c Z^d
=
\sum_{c,d \in \{0,1\}^n} \lambda_{a,b}\alpha_{c,d} X^c Z^d.
\end{align}
Since the Paulis $X^c Z^d$ are linearly independent, the coefficients must match.

We now show that at most one $\alpha_{c,d}$ can be nonzero.
Suppose two are nonzero:  $\alpha_{c_1,d_1} \neq 0 \neq \alpha_{c_2,d_2}$ for some $(c_1,d_1) \neq (c_2,d_2)$.
Then there exists $(a,b)$ such that $(a,b) \cdot (c_1,d_1) \neq (a,b) \cdot (c_2,d_2)$.
Then, from Eq.~\eqref{eq:appendix-symplectic}, we can deduce that
\begin{align}
(-1)^{(a,b)\cdot(c_1,d_1)} = \lambda_{a,b} = (-1)^{(a,b)\cdot(c_2,d_2)},
\end{align}
which is a contradiction.
Therefore there is a unique nonzero $\alpha_{c,d}$, which implies $V = \alpha_{c,d} U X^c Z^d$.
\end{proof}

\end{document}